\newcommand{\focsarxiv}[2]{#2}
\pgfplotsset{compat=1.5}
\newtheorem{theorem}{Theorem}[section]
\newtheorem{lemma}[theorem]{Lemma}
\newtheorem{proposition}[theorem]{Proposition}
\newtheorem{definition}[theorem]{Definition}
\newtheorem{claim}[theorem]{Claim}
\newenvironment{proofof}[1]{\begin{trivlist} \item {\bf Proof
#1:~~}}
  {\qed\end{trivlist}}
\newcommand{\namedref}[2]{\hyperref[#2]{#1~\ref*{#2}}}
\newcommand{\alglab}[1]{\label{alg:#1}}
\renewcommand{\algref}[1]{\namedref{Algorithm}{alg:#1}}
\renewcommand{\Re}{\operatorname{Re}}
\renewcommand{\Im}{\operatorname{Im}}
\def \FAIL    {\mdef{\mathsf{FAIL}}}
\def \CountSketch    {\mdef{\textsc{CountSketch}}}
\def \AMS    {\mdef{\textsc{AMS}}}
\newcommand{\PPr}[1]{\ensuremath{\mathbf{Pr}\left[#1\right]}}
\newcommand{\PPPr}[2]{\ensuremath{\underset{#1}{\mathbf{Pr}}\left[#2\right]}}
\newcommand{\Ex}[1]{\ensuremath{\mathbb{E}\left[#1\right]}}
\newcommand{\EEx}[2]{\ensuremath{\underset{#1}{\mathbb{E}}\left[#2\right]}}
\renewcommand{\O}[1]{\ensuremath{\mathcal{O}\left(#1\right)}}
\newcommand{\tO}[1]{\ensuremath{\tilde{\mathcal{O}}\left(#1\right)}}
\newcommand{\eps}{\varepsilon}
\def \calA    {\mdef{\mathcal{A}}}
\def \calB    {\mdef{\mathcal{B}}}
\def \calD    {\mdef{\mathcal{D}}}
\def \calE    {\mdef{\mathcal{E}}}
\def \calH    {\mdef{\mathcal{H}}}
\def \calN    {\mdef{\mathcal{N}}}
\def \calP    {\mdef{\mathcal{P}}}
\def \calS    {\mdef{\mathcal{S}}}
\def \calV    {\mdef{\mathcal{V}}}
\def \be    {\mdef{\mathbf{e}}}
\newcommand{\mdef}[1]{{\ensuremath{#1}}\xspace}  
\DeclareMathOperator*{\median}{median}
\DeclareMathOperator*{\argmax}{argmax}
\DeclareMathOperator*{\polylog}{polylog}
\DeclareMathOperator*{\poly}{poly}
\DeclareMathOperator*{\sign}{sign}
\DeclareMathOperator*{\Var}{Var}
\DeclareMathOperator*{\Poi}{Poi}
\DeclareMathOperator{\sech}{sech}
\newcommand{\abs}[1]{\mdef{\left|#1\right|}}         
\newcommand{\ignore}[1]{}
\newif\ifnotes\notestrue 
\newcommand{\samson}[1]{\textcolor{blue}{{\bf (Samson:} {#1}{\bf ) }} \marginpar{\tiny\bf
             \begin{minipage}[t]{0.5in}
               \raggedright S:
            \end{minipage}}}
\newcommand{\david}[1]{\textcolor{purple}{{\bf (David:} {#1}{\bf ) }} \marginpar{\tiny\bf
             \begin{minipage}[t]{0.5in}
               \raggedright D:
            \end{minipage}}} 
\newcommand{\samson}[1]{}
\newcommand{\david}[1]{}
\renewcommand*{\@fnsymbol}[1]{\textcolor{mahogany}{\ensuremath{\ifcase#1\or *\or \dagger\or \ddagger\or
 \mathsection\or \triangledown\or \mathparagraph\or \|\or **\or \dagger\dagger
   \or \ddagger\ddagger \else\@ctrerr\fi}}}
\providecommand{\email}[1]{\href{mailto:#1}{\nolinkurl{#1}\xspace}}
\definecolor{mahogany}{rgb}{0.75, 0.25, 0.0}
\definecolor{darkblue}{rgb}{0.0, 0.0, 0.55}
\definecolor{darkpastelgreen}{rgb}{0.01, 0.75, 0.24}
\definecolor{darkgreen}{rgb}{0.0, 0.2, 0.13}
\definecolor{darkgoldenrod}{rgb}{0.72, 0.53, 0.04}
\definecolor{darkred}{rgb}{0.55, 0.0, 0.0}
\definecolor{forestgreenweb}{rgb}{0.13, 0.55, 0.13}
\definecolor{greencss}{rgb}{0.0, 0.5, 0.0}
\definecolor{bleudefrance}{rgb}{0.19, 0.55, 0.91}
  \DeclareFontShape{T1}{lmr}{m}{scit}{<->ssub*lmr/m/scsl}{}%
\begin{document}

\allowdisplaybreaks

\title{Perfect $L_p$ Sampling with Polylogarithmic Update Time}
\author{
William Swartworth\thanks{Carnegie Mellon University. 
E-mail: \email{wswartwo@andrew.cmu.edu}.
Supported in part by Office of Naval Research award number N000142112647 and a Simons Investigator Award.
}
\and
David P. Woodruff\thanks{Carnegie Mellon University and Google Research. 
E-mail: \email{dwoodruf@cs.cmu.edu}. 
Supported in part by Office of Naval Research award number N000142112647, a Simons Investigator Award, and NSF CCF-2335412.}
\and
Samson Zhou\thanks{Texas A\&M University. 
E-mail: \email{samsonzhou@gmail.com}. 
Supported in part by NSF CCF-2335411. 
The author gratefully acknowledges funding provided by the Oak Ridge Associated Universities (ORAU) Ralph E. Powe Junior Faculty Enhancement Award.}
}
\date{}

\maketitle
\thispagestyle{empty}

\begin{abstract}
Perfect $L_p$ sampling in a stream was introduced by Jayaram and Woodruff (FOCS 2018) as a streaming primitive which, given turnstile updates to a vector $x \in \{-\text{poly}(n), \ldots, \text{poly}(n)\}^n$, outputs an index $i^* \in \{1, 2, \ldots, n\}$ such that the probability of returning index $i$ is exactly \[\Pr[i^* = i] = \frac{|x_i|^p}{\|x\|_p^p} \pm \frac{1}{n^C},\] where $C > 0$ is an arbitrarily large constant. Jayaram and Woodruff achieved the optimal $\tilde{O}(\log^2 n)$ bits of memory for $0 < p < 2$, but their update time is at least $n^C$ per stream update. Thus an important open question is to achieve efficient update time while maintaining optimal space. For $0 < p < 2$, we give the first perfect $L_p$-sampler with the same optimal amount of memory but with only $\text{poly}(\log n)$ update time. Crucial to our result is an efficient simulation of a sum of reciprocals of powers of truncated exponential random variables by approximating its characteristic function, using the Gil-Pelaez inversion formula, and applying variants of the trapezoid formula to quickly approximate it.
\end{abstract}

\clearpage
\setcounter{page}{1}

\section{Introduction}
The \emph{streaming model of computation} has become a widely used framework for analyzing datasets too large to be stored in memory, such as transaction logs from commercial databases, Internet of Things (IoT) sensor measurements, social network activity, scientific observations, and virtual traffic monitoring. 
In the one-pass streaming model, a frequency vector over a universe mapped to the integers $[n]:=\{1,2,\ldots,n\}$ is implicitly constructed through a series of sequential updates to its components. 
Each update can be observed only once, and the objective is to compute, detect, or estimate some specific property of the input dataset while using memory that is sublinear in both the length of the data stream and the size of the input dataset. 

Sampling is one of the most fundamental and versatile techniques for analyzing large-scale datasets in the streaming model, and various forms of sampling~\cite{Vitter85,GemullaLH08,Muthukrishnan05,CohenDKLT11,CohenDKLT14} have been utilized in a number of different applications such as data summarization~\cite{FriezeKV04,DeshpandeRVW06, DeshpandeV07,AggarwalDK09,IndykMMM14,MahabadiIGR19, IndykMGR20,MahabadiRWZ20,cohen2020wor,MahabadiT23,GollapudiMS23}, database systems~\cite{LiptonNS90,HaasS92,Olken93,LiptonN95,HaasNSS96,GibbonsM98,Haas16,CohenG19}, distributed algorithms~\cite{TirthapuraW11,CormodeMYZ12,WoodruffZ12,CormodeF14,JayaramSTW19}, and anomaly detection in virtual networks~\cite{GilbertKMS01, EstanV03,MaiCSYZ06,HuangNGHJJT07,ThottanLJ10}. 
Formally, consider a vector $x \in \mathbb{R}^n$ that is updated through a sequence of $m$ operations. 
Each operation $t\in[m]$ modifies the coordinate $x_{i_t}$ by a value $\Delta_t \in\{-M,\ldots,-1,0,1,\ldots,M\}$, where $i_t\in[n]$ indicates the specific coordinate being updated and $M=\poly(n)$ is an upper bound on the magnitude of each update. 
In this way, the value of the coordinate $x_i$ at the end of the data stream can be aggregated as $x_i=\sum_{t\in [m]: i_t=i}\Delta_t$. 
We remark that since updates can both increase and decrease the coordinates of $x$, this is called the \emph{turnstile model}, whereas in the \emph{insertion-only model}, the updates are restricted to non-negative values, i.e., $\Delta_t \ge 0$ for all $t\in[m]$. 
The goal is to sample an index $i\in[n]$ with probability proportional to some fixed weight function $G(x_i)$ and possibly obtain an approximation of the value $x_i$. 

Perhaps the most popular choice for the function $G(\cdot)$ is the family of functions $G(z) = |z|^p$ for $p>0$, referred to as an \emph{$L_p$ sampler}, which have been used in tasks such as identifying heavy hitters, $L_p$ norm and $F_p$ moment estimation, cascaded norm approximation, and duplicate detection \cite{MonemizadehW10,AndoniKO11,JowhariST11, BravermanOZ12,JayaramW18,cohen2020wor,WoodruffZ21,JayaramWZ22,MahabadiWZ22,LinSWXZ25,WoodruffXZ25}. 
\begin{definition}[$L_p$-sampler]
Given a vector $x\in\mathbb{R}^n$, a weight parameter $p>0$, and a distortion parameter $\eps\ge 0$, an $\eps$-approximate $L_p$-sampler is an algorithm that selects an index $i^*\in[n]$ with probability at least $\frac{2}{3}$; otherwise it outputs a failure symbol $\bot$. 
If the algorithm outputs an index $i^*\neq\bot$, then for each $i\in[n]$,
\[\Pr[i = i^*] = \left(1\pm\eps\right) \cdot \frac{|x_i|^p}{\|x\|_p^p} \pm n^{-C},\]
where $C>0$ can be chosen to be any  constant. 
When $\eps = 0$, the sampler is said to be \emph{perfect}. 
\end{definition}
In the case of insertion-only streams, the well-known reservoir sampling technique~\cite{Vitter85} can produce an exact $L_1$ sample using $\O{\log n}$ bits of space, meaning $\eps=0$, and furthermore, the sampling probabilities are free from any additive $\frac{1}{\poly(n)}$ distortion. 
However, for $p \neq 1$ or in the case of turnstile streams, the problem becomes substantially more complex. 
Thus, the question of the existence of sublinear-space $L_p$ samplers was explicitly posed by Cormode, Murthukrishnan, and Rozenbaum~\cite{CormodeMR05}.

\cite{MonemizadehW10} first answered the question in the affirmative for \emph{approximate} $L_p$ samplers for $p\le 2$, giving a construction that uses $\poly\left(\frac{1}{\eps},\log n\right)$ bits of space. 
These dependencies were subsequently improved by Andoni, Krauthgamer, and Onak~\cite{AndoniKO11} and by Jowhari, Saglam, and Tardos~\cite{JowhariST11}, to roughly $\O{\frac{1}{\eps^{\max(1,p)}}\log^2 n}$ bits of space for $p\in(0,2)$ and $\O{\frac{1}{\eps^2}\log^3 n}$ bits of space for $p=2$. 
Although \cite{JowhariST11} showed a lower bound of $\Omega(\log^2 n)$ space for $p<2$, there was curiously no lower bound in terms of the distortion parameter $\eps$. 
This discrepancy was resolved by Jayaram and Woodruff \cite{JayaramW18}, who gave the first \emph{perfect} $L_p$ sampler that uses $\tO{\log^2 n}$ bits of space for $p\in(0,2)$ and $\O{\log^3 n}$ bits of space for $p=2$. 

Unfortunately, the algorithm of \cite{JayaramW18} requires polynomial time per update in the data stream. 
Specifically, a total variation distance of $n^{-c}$ requires roughly $n^{10c}$ time per update. 
For context, if the goal is to acquire roughly $64$ samples from even a relatively small dataset, then to simulate the desired distribution, the total variation distance would need to be roughly $2^{-6}$, which would necessitate $2^{60}\approx 10^{18}$ operations per update in the data stream. 
Though it should be noted that \cite{JayaramW18} did not attempt to optimize the update time, their techniques inherently require $\poly(n)$ operations on every single stream update. 
Hence, subsequent work has focused on achieving perfect $L_p$ samplers with fast optimal time. 
In particular, \cite{PettieW25} showed that there exists a perfect $L_p$ sampler that uses $\O{\log n}$ bits of space for $p\in(0,1)$ with $\O{1}$ time. 
However, their results require the random oracle model, and perhaps more crucially, only work for $p\le 1$ and the insertion-only model.  
By comparison, the classic reservoir sampling technique~\cite{Vitter85} already achieves a perfect $L_1$ sample using $\O{\log n}$ bits of space and $\O{1}$ update time for the insertion-only model. 
We ask:
\begin{quote}
Does there exist a perfect $L_p$ sampler for general $p \in (0,2)$ that uses small space and has fast update time in the turnstile streaming model (and without a random oracle)?
\end{quote}
Note that for $p > 2$, there is a lower bound of $n^{\Omega(1)}$ on the space complexity of any such algorithm, while for $p = 2$ the space complexity of known algorithms is $\O{\log^3 n}$, while the known lower bound is only $\Omega(\log^2 n)$. We focus on the regime $p \in (0,2)$, for which optimal $\tilde{\Theta}(\log^2 n)$ space bounds are known\footnote{Here the $\tilde{\mathcal{O}}$ notation suppresses $\poly(\log\log n)$ factors.}.  For a discussion on these space bounds, see \cite{kapralov2017optimal,JayaramW18}.

\begin{table}[!htb]
\centering
{
\tabulinesep=1.1mm
\resizebox{\columnwidth}{!}{
\begin{tabu}{|c|c|c|c|c|c|}\hline
Sampler & Data Stream & Distortion & Space & Update Time & Setting \\\hline
\cite{Vitter85} & Insertion-Only & Perfect & $\O{\log n}$ & $\O{1}$ & $p=1$\\\hline
\cite{MonemizadehW10} & Turnstile & Approximate & $\poly\left(\frac{1}{\eps},\log n\right)$ & $\polylog(n)$ & $p\in(0,2]$ \\\hline
\cite{JowhariST11} & Turnstile & Approximate & $\frac{1}{\eps^{\max(1,p)}}\cdot\polylog(n)$ & $\polylog(n)$ & $p\in(0,2]$ \\\hline
\cite{AndoniKO11} & Turnstile & Approximate & $\frac{1}{\eps^{\max(1,p)}}\cdot\polylog(n)$ & $\polylog(n)$ & $p\in(0,2]$ \\\hline
\cite{JayaramW18} & Turnstile & Perfect & $\O{\log^2(n)}$ & $\poly(n)$ & $p\in(0,2]$ \\\hline
\cite{PettieW25} & Insertion-Only & Perfect & $\O{\log(n)}$ & $\O{1}$ & $p\in(0,1]$ \\\hline\hline
Our Work & Turnstile & Perfect & $\O{\log^2(n)}$ & $\polylog(n)$ & $p\in(0,2)$ \\\hline
\end{tabu}
}
}
\caption{Summary of related work for $L_p$ sampling on data streams. Our main result provides an exponential improvement in the update time of \cite{JayaramW18}, which was the only previous perfect sampler in the turnstile model and which required poly$(n)$ time for every single update. Our space bounds suppress poly$(\log \log n)$ factors.}
\label{table:summary}
\end{table}

\paragraph{Our contribution.}
In this work, we answer the above question in the affirmative, giving a perfect $L_p$ sampler for $p\in(0,2)$ on turnstile streams that uses optimal space and has polylogarithmic update time, and does not require a random oracle. 
Formally, our main result is as follows:
\begin{theorem}
For any $p\in(0,2)$ and failure probability $\delta\in(0,1)$, there exists a perfect $L_p$ sampler on a turnstile stream that succeeds with probability at least $1-\delta$ and uses $\polylog(n)$ time per update. The algorithm uses $\tO{\log^2 n\log\frac{1}{\delta}}$ bits of space.
\end{theorem}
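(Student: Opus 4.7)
The plan is to follow the high-level framework of Jayaram and Woodruff~\cite{JayaramW18}. First I would scale each coordinate $x_i$ by $1/t_i^{1/p}$, where the $t_i$ are independent, appropriately truncated exponential random variables, so that the heaviest coordinate of the scaled vector $y$ with $y_i = x_i / t_i^{1/p}$ is distributed as an $L_p$ sample from $x$. I would then use a heavy-hitters subroutine (CountSketch, combined with a Count-Min style sketch for value recovery) to identify the maximum of $|y|$. The randomness defining the $t_i$ would be supplied by a Nisan-style pseudorandom generator, so that a single stream update touches only $\polylog(n)$ counters of the sketch in $\polylog(n)$ time.

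The reason the algorithm of \cite{JayaramW18} is slow is that, in order to certify that a candidate index $i^*$ is truly an $L_p$ sample (rather than a spurious heavy hitter of $y$), one must test the candidate against a quantile of a random variable of the form
\[
Z \;=\; \sum_{i=1}^{n} \frac{|x_i|^p}{t_i},
\]
a weighted sum of reciprocals of truncated exponentials. Sampling $Z$ to total-variation distance $n^{-C}$ naively takes $\poly(n)$ time. I would instead compute the CDF of $Z$ via the Gil--Pelaez inversion formula,
\[
\Pr[Z \le z] \;=\; \frac{1}{2} \;-\; \frac{1}{\pi}\int_0^{\infty} \frac{\Im\!\bigl(e^{-iz\xi}\,\phi_Z(\xi)\bigr)}{\xi}\, d\xi,
\]
where the characteristic function $\phi_Z(\xi) = \prod_{i} \phi_{|x_i|^p/t_i}(\xi)$ factors across the $n$ independent summands and each factor admits a closed form in terms of the exponential integral. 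The integral is then approximated by a variant of the trapezoidal rule, which for an analytic integrand with sufficiently rapid decay along the real axis converges geometrically in the number of nodes, so that only $\polylog(n)$ nodes suffice to reach the target $n^{-C}$ precision. Each evaluation of $\phi_Z(\xi)$ reduces to an $L_p$-moment-style computation on $x$ that is already supported by the underlying sketch.

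The main obstacle is establishing the requisite analytic and tail-decay properties of $\phi_Z$ uniformly across the range $p \in (0,2)$. As $p \to 2$, the tails of $1/t_i^{1/p}$ become heavier, so the truncation level of the exponentials must be chosen carefully to simultaneously guarantee holomorphy of $\phi_Z$ in a complex strip around the real axis, integrable decay of $\phi_Z(\xi)/\xi$ at infinity, and a quantitatively bounded contour that avoids the singularity at $\xi = 0$; the infinite tail of the integral must also be truncated with controlled error. A secondary difficulty is that the weights $|x_i|^p$ appearing in $\phi_Z$ are accessible to the streaming algorithm only through heavy-hitter and $F_p$ sketches, so one must show that the quantile test is stable under these multiplicative approximation errors, and that the same pseudorandom seed can drive both the sketch and the characteristic-function evaluation so that they see a consistent realization of the $t_i$. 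Combining these ingredients with the existing $\tO{\log^2 n}$-space analysis of \cite{JayaramW18}, and amplifying to failure probability $\delta$ by standard $O(\log(1/\delta))$-fold parallel repetition, yields the stated space and update-time bounds.
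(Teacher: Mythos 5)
Your proposal correctly identifies several of the ingredients that appear in the paper (exponential scaling, Gil--Pelaez inversion, trapezoid-rule evaluation of an analytic characteristic-function integral), but it misidentifies the actual bottleneck in \cite{JayaramW18}, and as a result applies those ingredients to the wrong random variable and omits the structural ideas that make the argument go through.

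The slow step in \cite{JayaramW18} is not evaluating a quantile of $Z = \sum_i |x_i|^p / t_i$; it is the \emph{duplication} of each coordinate $x_i$ into $n^c$ copies $x_i/e_{i,j}^{1/p}$, $j \in [n^c]$. That duplication is what makes the probability of passing the statistical test (near-)independent of the identity of the anti-rank vector $D(1)$, and without it the failure probability of the test varies by $\Omega(1)$ depending on which coordinate happens to attain the maximum, which destroys perfectness. Since every stream update touches $n^c$ duplicated entries, the update time is $\Omega(n^c)$. Your plan never confronts this: a quantile test against $Z$ as you define it does not remove the anti-rank dependence, and the Count-Min/CountSketch pair you propose gives estimation error whose bias correlates with the identity of the maximizer in exactly the problematic way.

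The paper resolves this by keeping the duplication conceptually but never materializing it. Two ideas are essential and both are missing from your proposal. First, the heavy-hitters sketch is a \emph{dense Gaussian} CountSketch rather than a hashed one, precisely so that the aggregate effect of the $n^c$ copies of coordinate $i$ on a single row can be summarized by a single Gaussian $N(0, \|V^i\|_2^2)$, where $\|V^i\|_2^2 = x_i^2 \sum_j e_{i,j}^{-2/p}$. With a hashed CountSketch the copies scatter across buckets and the per-row distributions become dependent in a way that is not efficiently simulable, which is why the paper explicitly rejects the structure you propose. Second, the quantity whose CDF is evaluated via Gil--Pelaez and the trapezoid rule is not your $Z$, but the limiting distribution $\mathcal{D}_{p,R,\infty}$ of $k^{-2/p}\sum_j X_j$ where $X_j$ are $2/p$-th powers of inverse exponentials \emph{truncated} at the $\tau$-th order statistic, conditioned on the explicitly-generated top $\tau = O(\log n)$ order statistics; the truncation is exactly what keeps $R$, $1/R$, and the evaluation point $t$ all $\polylog(n)$, which is what makes the quadrature run in polylogarithmic worst-case time. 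Finally, on derandomization: you propose a Nisan-style PRG, but the paper deliberately avoids Nisan because with $R = \poly(n)$ random bits it would cost an extra $\log n$ factor in space; it instead uses the Fourier-shape/half-space PRG of Gopalan, Kane, and Meka. Without the duplication-plus-dense-Gaussian-plus-truncated-order-statistics structure, the correctness of the perfect sampler does not follow, so the gap here is not one of detail but of the central mechanism.
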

A thorough comparison of related work for sampling on data streams and our proposed samplers is displayed in \Cref{table:summary}.

\section{Technical Overview}\label{sec:tech}
A natural starting point for our algorithm is a common template followed by previous $L_p$ samplers~\cite{MonemizadehW10,AndoniKO11,JowhariST11,JayaramW18,MahabadiRWZ20,WoodruffXZ25} for turnstile streams. 
These algorithms perform a linear transformation on the input vector $x$ to obtain a new vector $z$. 
They then implement a heavy-hitter algorithm on $z$ to recover a vector $y$ and perform a statistical test on $y$. 
If the statistical test does not pass, then the algorithm does not report anything; otherwise if the statistical test passes, then the algorithm outputs the coordinate of $y$ with the largest magnitude as the sample. 
It can be shown that conditioned on the algorithm outputting an index $i^*\in[n]$, then the index $i^*$ follows the target distribution. 
Hence, the process is repeated a sufficiently large number of times to ensure that some instance succeeds with probability $1-\delta$. 

\paragraph{Linear transformation via uniform random variables.} 
A common choice~\cite{JowhariST11,MahabadiRWZ20} is to scale each coordinate $x_i$ by $\frac{1}{u_i^{1/p}}$, for independent uniform random variables $u_i\in[0,1]$. 
Observe that if $z_i=\frac{x_i}{u_i^{1/p}}$, then $|z_i|^p=\frac{|x_i|^p}{u_i}$ and so 
\[\PPr{|z_i|^p\ge\|x\|_p^p}=\PPr{u_i\le\frac{|x_i|^p}{\|x\|_p^p}}=\frac{|x_i|^p}{\|x\|_p^p},\]
which exactly follows the desired $L_p$ sampling distribution. 

However, there are many challenges with implementing this idealized process in the streaming model. 
One major issue is that the algorithm cannot exactly compute $|z_i|^p$ and $\|x\|_p^p$.  
Moreover, any approximation to these quantities immediately induces undesirable distortion in the sampling probabilities. 
For example, a heavy-hitter algorithm that guarantees a multiplicative $2$-approximation to $|z_i|^p$ may produce an estimate $\widehat{|z_i|^p}=2|z_i|^p$ and another estimate $\widehat{|z_j|^p}=|z_j|^p$. 
Then using $\widehat{|z_i|^p}$ and $\widehat{|z_j|^p}$ instead of the true values $|z_i|^p$ and $|z_j|^p$ can result in a distortion of up to $2$ in the sampling probabilities, so that the resulting procedure would only be an $2$-approximate $L_p$ sampler in this case.
In fact, the sampling probabilities can be distorted by larger factors, since heavy-hitter algorithms such as $\CountSketch$~\cite{CharikarCF04} have additive error as large as $\|z\|_2$, which can be significantly larger than $|z_i|^p$. 
Another issue is that there can be multiple indices $i\in[n]$ such that $|z_i|^p\ge\|x\|_p^p$. 

\paragraph{Linear transformation via exponential random variables.} 
Instead, suppose in the above argument based on uniform random variables that there exists a single index $i^*$ for which  $|z_{i^*}|^p\ge\|x\|_p^p$. 
Conditioned on this event, then by the above, it suffices to identify the index $i^*$, which is the largest coordinate in absolute value in $z$, as this will give the desired sample. 
It can be shown that $z_{i^*}$ is a heavy-hitter of $z$, so that $z_{i^*}$ can be estimated within a constant-factor approximation  in a turnstile stream. 
This intuition motivates using a different linear transformation~\cite{AndoniKO11,JayaramW18}.  
Instead of using uniform random variables, let $z_i=\frac{x_i}{e_i^{1/p}}$, for independent exponential random variables $e_i$. 
Let $i^*$ be the index for which $|z_{i^*}|$ is the maximum. 
By the properties of exponential random variables, it is well-known that for any $i \in \{1, 2, \ldots, n\}$ 
$$\Pr[i^* =i] = \frac{|x_i|^p}{\|x\|_p^p},$$
and thus, if we could identify $i^*$, then this would be a perfect $L_p$ sample. 
The use of exponentials for sampling from large data sets is now a standard idea behind a number of sampling schemes, see, e.g., \cite{Cohen1997SizeEstimation,CohenDKLT11,CohenDKLT14,CohenG19,Cohen2017HyperLogLog,Cohen2018StreamSampling,JayaramW18,Cohen2023SamplingBigIdeas}. 

Another standard and important point is that $\max_{i\in[n]}\frac{x_i}{e_i^{1/p}}\sim\frac{\|x\|_p}{E^{1/p}}$, where $E$ is another exponential random variable, and thus it can be shown that $z_{i^*}=\max_{i\in[n]} z_i=\Omega(\|x\|_p)$ with constant probability. Therefore, using a heavy-hitters algorithm with $\Omega(1)$ threshold, we can well-approximate $z_{i^*}$. There is still estimation error for $z_{i^*}$, but if the error is smaller than the gap between $z_{i^*}$ and $\max_{i\in[n]\setminus i^*} z_i$, then the algorithm still succeeds in identifying the maximum coordinate. It is possible to design a statistical test to detect this case, which suffices for an {\it approximate} $L_p$ sampler, though not a {\it perfect} $L_p$ sampler. 

Unfortunately, estimation error from heavy-hitter algorithms can be a function of the coordinates themselves. 
For example, if $x=(n^{1/p},1,1,1,\ldots)$, then depending on the heavy-hitter algorithm, it can be shown that the estimation error is significantly larger when $z_2$ is the maximum coordinate, as opposed to when $z_1$ is the maximum coordinate. 
Informally, this is because when $z_2$ is the maximum coordinate, then $x_1$ can contribute to the error term, resulting in a much larger error than if $z_1$ is the maximum coordinate. 
More generally, the statistical test can depend on the identity of the anti-rank vector $D(1),\ldots,D(n)$, where $D(k)$ is the index that corresponds to the $k$-largest value of $z_i$, i.e., $z_{D(1)}\ge z_{D(2)}\ge\ldots$, with ties broken arbitrarily. 
Thus, this approach is insufficient for perfect $L_p$ samplers, because the failure probability due to the statistical test can itself change by an $\Omega(1)$ factor, depending on the identity of the maximum coordinate $D(1)$. 

\paragraph{Linear transformation via duplication.}
One possible approach to remove the dependencies on the anti-rank vector is to duplicate the coordinates $\poly(n)$ times, an approach introduced by \cite{JayaramW18}. 
Instead of only scaling each coordinate $x_i$ by an exponential random variable $\frac{1}{e_i^{1/p}}$, we can first duplicate all coordinates $x_i$ to remove heavy items from the stream. 
Specifically, we define the duplicated vector $V$ so that the first $n^c$ entries of $V$ are all equal to $x_1$, the next $n^c$ entries of $V$ are all equal to $x_2$, and so forth. 
Hence, we have $\frac{|x_i|}{\|V\|_1}\le\frac{1}{n^c}$ for all $i\in[n]$, which informally reduces the dependence on the anti-rank vector since conditioned on the maximum, there are still $n^C-1$ copies of it that could be the second maximum.  
Formally, the order statistics $z_{D(k)}$ are highly concentrated around their expectation, which can be expressed as a sum of hidden independent exponential random variables $E_1,E_2,\ldots$~\cite{nagaraja2006order} that are independent of the anti-rank vector. 

Although the resulting universe has size $N=n\cdot n^c$, heavy-hitter algorithms will still only use $\polylog(N)=\polylog(n)$ space. 
On the other hand, all $n^c$ copies of a single coordinate $i\in[n]$ must be updated in the duplicated vector $V$ because they are scaled by independent exponential random variables $e_{i\cdot n^c+1}^{-1/p},e_{i\cdot n^c+2}^{-1/p},\ldots,e_{(i+1)\cdot n^c}^{-1/p}$. 
As a result, the update time for perfect sampling is  $\Omega(n^c)$, which is prohibitively slow. 

\paragraph{Heavy-hitters through dense Gaussian sketches.}

The runtime bottleneck of the current framework is due to the heavy-hitter algorithm on the duplicated vector $V$ for the purposes of identifying $i^*=\argmax_{i\in[N]} V_i$. 
In particular, existing approaches based on sparse variants of $\CountSketch$~\cite{CharikarCF04} hash each coordinate $V_i$ to a bucket $h(i)$ and maintain a signed sum $\sum_{j:h(j)=h(i)}s_jV_j$ of the elements hashed to the bucket, where $s_j\in\{-1,+1\}$ are random signs for all $j\in[N]$.  
This can be viewed as a linear sketch where each row of the sketch matrix corresponds to a separate bucket. 
Due to the dependencies of the same coordinate $V_i$ being hashed to exactly one bucket, i.e., one row, it seems challenging to simulate this distribution without explicitly generating each of the exponential random variables $e_1,\ldots,e_N$ and explicitly performing the hashing oneself, and thus spending polynomial update time. 

Instead, we consider a dense version of $\CountSketch$, which estimates the entries of $V\in\mathbb{R}^N$ by a random linear sketch $G\in\mathbb{R}^{k\times N}$ whose entries are i.i.d. $N(0,1/k)$ Gaussian random variables, for a certain value of $k$. 
The algorithm maintains $G \cdot V$ throughout the data stream. If $G_i$ denotes the $i$-th column of $G$, then $\langle G_i, GV\rangle$ is the estimate for $V_i$. 

The main benefit of the dense Gaussian $\CountSketch$ is its $2$-stability, that is, if we can efficiently simulate $\|V^i\|_2$, where $V^i$ is the vector of the $n^C$ reciprocals of $1/p$-th powers of exponentials associated with original coordinate $i$, then given an update to the $i$-th coordinate $x_i \leftarrow x_i + \Delta$, we could add $\Delta \cdot N_{i, \ell}$ to the $\ell$-th row of $GV$, where $N_{i, \ell}$ is distributed as $N(0, \|V^i\|_2^2)$. Note that we generate the $N_{i,\ell}$ once for all $i$ and $\ell$ and use them for all updates - see below on how this is derandomized memory efficiently in a stream. Had we instead used the standard $\CountSketch$, we would have that some duplicate coordinates associated with the same coordinate $x_i$ may be hashed to two separate buckets, which then causes a non-trivial dependence if one has $\O{\log n}$ rows of CountSketch since, having generated the norm of exponential scalings for each bucket in the first row, one must now simulate a distribution which re-partitions the same exponential scalings across different buckets in the second row, which seems highly non-trivial. Instead by using the dense Gaussian $\CountSketch$, after $\|V^i\|_2$ has been generated, one can simply sample independent Gaussians to generate the $N_{i,\ell}$ random variables. 

%
%

\paragraph{Central limit theorem for sums of heavy-tailed random variables.}
Now observe that $\|V^i\|_2^2=\sum_{j\in[n^c]}\frac{x_i^2}{e_{i,j}^{2/p}}$. 
It can be shown that for an exponential random variable $E$, the tail of the random variable $X=\frac{1}{E^{2/p}}$ follows the probability density function $\Theta\left(\frac{1}{x^{p/2+1}}\right)$, which behaves roughly like the tail of a $\frac{p}{2}$-stable random variable~\cite{Zolotarev89,Nolan03}. 
Since the sum of stable random variables is another stable random variable with different parameters, this suggests analyzing the rate of convergence of $\sum_{j\in[n^c]}\frac{x_i^2}{e_{i,j}^{2/p}}$ to a $\frac{p}{2}$-stable random variable as $C$ increases. 
Note that it is not sufficient to show the convergence in distribution in the limit, because $n^C$ is the universe size and the resulting space complexity is $\polylog(n^C)$, whose asymptotic behavior changes if $C=\omega(1)$. 
Fortunately, it can be shown~\cite{vogel2022quantitative} that the probability density function at sufficiently large values $x$ between the two distributions are point-wise within a multiplicative $\left(1+\frac{1}{n^{\O{C}}}\right)$, so that the total variation distance is $\frac{1}{\poly(n)}$ for sufficiently large $C$. 

More seriously though, this does \emph{not} translate to an algorithm because this does not allow us to efficiently determine the {\it max entry} of $V^i$, or an approximation to it, from the linear sketch $GV$. 
That is, conditioned on the value of $\|V^i\|_2^2$ that we have quickly simulated, it is not clear how to generate consistent values of the $\frac{x_i}{e_{i,j}^{2/p}}$, which are needed to provide estimates of $V^i$ from the heavy-hitters algorithm, and thus we cannot perform a statistical test that compares the maximum scaled coordinate value to $\|V^i\|_2$. 

%

\paragraph{Performing the sampling.}
From the above discussion, we can see that what is important is the ability to generate the {\it joint} distribution of the maximum of a list of reciprocals of $1/p$-th powers of exponentials, together with the squared norm of the remaining entries in the list. 

Suppose that $E$ is an exponential, and that $X_1 > X_2 > \ldots > X_k$ are sampled by choosing $k$ i.i.d. samples according to the distribution of $E^{-2/p}$, and then arranged in decreasing order.  We would like to sample from the joint distribution
$(X_1, X_2 + \ldots + X_k),$ so that we would have both the max as well as the overall norm needed in order to perform our statistical test. By max stability, $X_1$ is distributed as $k^{2/p}(\text{exponential}(1))^{-2/p}$, so it is convenient to normalize to
\[
k^{-2/p}(X_1, X_2 + \ldots + X_k).
\]
The first coordinate is now always $\text{exponential}(1)^{-2/p}$, and this can easily be sampled. 
The terms in the sum would have $p/2$-stable tails, except we must truncate each of the $X_i$'s at $X_1.$ If the $X_i$'s did not need to be truncated, then the second coordinate would converge in distribution to a $(p/2)$-stable distribution by the Generalized Central Limit Theorem.  
With truncation, it converges to a similar, but slightly perturbed distribution.  
Unfortunately this distribution is quite complicated and does not seem to be expressible in terms of distributions of well-known random variables.  
However, we can write down its characteristic function precisely by computing the characteristic function of each $X_j$, $j \geq 2$, and taking their product. 
Given this, the Gil-Pelaez Fourier inversion formula~\cite{wendel1961non} allows us to express the value of the cumulative density function (CDF) $F$ at $t $ for this distribution as a Fourier-type integral.  
Since the resulting integral is that of an analytic function, it turns out that techniques from numerical analysis and in particular the trapezoid rule allow us to evaluate the integral to precision $\eps$ in only $\polylog(1/\eps)$ time. 
The calculations required here turn out to be quite involved and non-standard; we refer the reader to Theorem \ref{thm:calculation} in Section \ref{sec:calculation} for further details. 

Finally, given the ability to evaluate $F(t)$, we may simply choose a uniformly random $y \in [0,1]$ and binary search for the value $x$ with $F(x) = y.$  
This gives a value $x$ that (up to rounding error) is very nearly sampled from the desired distribution for $X_2+\ldots+X_k$. 

\paragraph{Achieving worst-case time.} 
Unfortunately, in some cases we may need to evaluate $F(t)$ on a very large argument $t$. 
The time of our evaluation procedure depends polynomially on $t$, $R$, and $1/R$, where $R$ is the ratio of $X_1$ to $\sum_{j > 1} X_j$. 
In our algorithm we can preprocess our list by first generating not only the maximum $X_1$, but actually the top $\tau=\O{\log n}$ order statistics using standard formula for order statistics of exponential random variables. 
By the memorylessness property of exponentials, after generating $X_1, \ldots, X_\tau$, we are tasked with the same problem of generating the sum of remaining $X_j$ conditioned on each exponential being at most $X_\tau$, so we can use the same procedure as before. 
This ensures $R = \O{1}$ with failure probability $1/\poly(n)$. 
Similarly, by standard properties of exponential random variables, the ratio of $X_\tau$ to $\sum_{j > 1} X_j$ is $\Omega(1/\log n)$ with failure probability $1/\poly(n)$, so this ensures $1/R = \O{\log n}$. 
Finally, we are evaluating our CDF at the normalized average (since we look at the limiting distribution) of the $X_j$, $j > \tau$, and by level set arguments we can show that the argument $t$ we evaluate at is at most $\poly(1/R, R, \log n)$ with failure probability $1/\poly(n)$. Combining these, we achieve $\poly(\log n)$ worst case update time. 

\paragraph{Statistical test.}
Due to the behavior of the limiting distribution, we cannot immediately apply statistical tests that consider the order statistics of the distribution. 
Instead, we view the vector $z$ as having dimension $n'=\O{n\log n}$ rather than $N=n^{C+1}$, so that the coordinates of $z$ are the values $\{\calH_i\}$ and $N_i$ for all $i\in[n]$. 
Here, $\{\calH_i\}$ is the set of the top $\O{\log n}$ reciprocals of powers of exponential random variables and $N_i$ is the linear combination of the remaining variables. 
Thus, due to the consolidation of the linear combination of the duplications that do not attain the maximum, the vector $z\in\mathbb{R}^{n'}$ serves as a compact representation of the previous vector $z\in\mathbb{R}^N$. 
We use the dense $\CountSketch$ data structure to compute estimates $\widehat{z_{D(1)}}$ and $\widehat{z_{D(2)}}$ of the two largest coordinates $z_{D(1)}$ and $z_{D(2)}$. 
We compute a $2$-approximation $Z$ of $\|z\|_2$ and generate a fixed constant $\eps>0$, along with a uniform random variable $\mu$ drawn from the interval $[0.99, 1.01]$. 
The statistical test then fails if $\widehat{z_{D(1)}}-\widehat{z_{D(2)}}<100\mu\eps Z$ or $\widehat{z_{D(2)}}<50\mu\eps Z$; otherwise, the algorithm outputs $D(1)$. 
We remark that the purpose of $\eps$ is to ensure that the test passes with sufficiently high constant probability while the purpose of the uniform random variable $\mu$ is to remove the dependencies on the maximum index and the infinite duplication. 

\paragraph{Derandomization.}
We have so far assumed that our random variables can be freely generated and accessed. 
To convert this into a true streaming algorithm, we need to derandomize the components of the linear sketch. 
The standard approach is to use Nisan's pseudorandom generator (PRG)~\cite{Nisan92}, which informally states that any algorithm which uses $S$ space and $R$ random bits can be derandomized up to $2^{-S}$ total variation distance by an algorithm which uses only $\O{S\log R}$ bits of space. 
However, we require $R=\poly(n)$ random variables already for the Gaussians in the dense $\CountSketch$, and so applying Nisan's PRG would result in an undesirable $\O{\log n}$ multiplicative overhead in space. 

Instead, as in \cite{JayaramW18}, we use the PRG construction of \cite{GopalanKM18}, which is designed to fool a broad class of low-complexity functions known as \emph{Fourier shapes}. 
Roughly speaking, a Fourier shape is a product of bounded one-dimensional complex-valued functions, and the PRG of \cite{GopalanKM18} shows that all such functions can be $\eps$-fooled using a seed of length $\tO{\log\frac{n}{\eps}}$. 
Because the randomized components of our sketch, both the Gaussian projections and the exponential scaling, depend on independent random variables through such product structures, this generator is a natural fit for our setting. 


To make this formal, we note that our algorithm relies on two sources of randomness. 
The first, denoted $R_g$, corresponds to the Gaussian variables used in the dense $\CountSketch$. 
The second, denoted $R_e$, corresponds to the exponential random variables used for scaling in the $L_p$ sampler, including those used in the simulation of the top $\tau$ order statistics and the remaining tail statistics. 
It can be shown that we need $\poly(n)$ bits of randomness for both sources. 
For any fixed $R_e$, we define a tester $\calA_{i,R_e}$ for each coordinate $i\in[n]$ that determines whether the sampler selects index $i$ given the Gaussian randomness $R_g$. 
Using the PRG of \cite{JayaramW18}, we can generate a deterministic sequence that effectively fools $\calA_{i,R_e}$ for all $i$, replacing $R_g$ with a pseudorandom stream produced from a short seed.
Similarly, the randomness $R_e$ used for the exponential variables can be derandomized by simulating the largest order statistics and the remaining tail statistics deterministically. 
The tail statistics are generated using a fixed procedure that draws from the cumulative distribution function in a way that can be expressed as a small number of Fourier-shape queries. 
By applying the PRG of \cite{GopalanKM18} to this setting, we obtain a deterministic generator for $R_e$ that preserves the correctness of the sampler. 

For any fixed $i \in [n]$, the algorithm can be viewed as a tester $\calA_i(R_g, R_e) \in \{0, 1\}$ that determines whether the $L_p$ sampler selects index $i$ as the sampled coordinate. 
We then use a hybrid argument to show that there exist two instantiations $F_1(\cdot)$ and $F_2(\cdot)$ of the PRG so that
\focsarxiv{
\begin{align*}
&\left\lvert\PPPr{R_g,R_e}{\calA_i(R_g,R_e)=1}-\PPPr{x,y}{\calA_i(F_1(x),F_2(y))=1}\right\rvert\\
&\hspace{1in}\le\frac{1}{n^C},
\end{align*}
}
{
\[\left\lvert\PPPr{R_g,R_e}{\calA_i(R_g,R_e)=1}-\PPPr{x,y}{\calA_i(F_1(x),F_2(y))=1}\right\rvert\le\frac{1}{n^C},\]
}
for any arbitrarily large constant $C$, where $y_1$ and $y_2$ are seeds of length $\tO{\log^2 n}$. 
  
\section{Preliminaries}
For any integer $n > 0$, we denote the set $\{1, \ldots, n\}$ by $[n]$. 
The notation $\poly(n)$ refers to a polynomial function of $n$, while the notation $\polylog(n)$ represents a polynomial function of $\log n$. 
An event is said to occur with high probability if its probability is at least $1-\frac{1}{\poly(n)}$.

We state the following Gaussian Khintchine-type inequality, generalizing from Rademacher random variables. 
\begin{theorem}[Marcinkiewicz–Zygmund inequality]
\label{thm:khintchine}
\cite{marcinkiewicz1938quelques}
Let $r_1,\ldots,r_n$ be independent normal random variables and let $p\ge 1$. 
Then there exist absolute constants $A_p,B_p>0$ that only depend on $p$, such that
\[(A_p)^p\cdot\|x\|_2^p\le\Ex{|r_1x_1+\ldots+r_nx_n|^p}\le(B_p)^p\cdot\|x\|_2^p.\]
\end{theorem}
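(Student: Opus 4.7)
The plan is to exploit the fact that for Gaussian $r_i$ the linear combination $S = r_1 x_1 + \ldots + r_n x_n$ is itself a Gaussian, which collapses the question to computing a single one-dimensional moment. This is much simpler than the Rademacher case, where one genuinely needs Khintchine's inequality; here the $2$-stability of the Gaussian distribution does all the work.

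Concretely, I would first observe that since $r_1, \ldots, r_n$ are independent $N(0,1)$ variables and $x_1, \ldots, x_n$ are fixed scalars, the sum $S = \sum_{i=1}^n r_i x_i$ is a mean-zero Gaussian with variance $\sum_{i=1}^n x_i^2 = \|x\|_2^2$. Equivalently, we may write $S = \|x\|_2 \cdot Z$ where $Z \sim N(0,1)$.

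Next, I would take absolute $p$-th powers and expectations, pulling out the constant factor $\|x\|_2$:
\[
\Ex{|S|^p} \;=\; \Ex{\bigl|\|x\|_2 \cdot Z\bigr|^p} \;=\; \|x\|_2^p \cdot \Ex{|Z|^p}.
\]
Setting $C_p := \Ex{|Z|^p}$, an explicit calculation with the Gaussian density gives $C_p = \frac{2^{p/2}\Gamma((p+1)/2)}{\sqrt{\pi}}$, which is a finite strictly positive constant depending only on $p$. Choosing $A_p = B_p = C_p^{1/p}$ yields both sides of the claimed inequality simultaneously (in fact as an equality).

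There is no real obstacle here: the only thing to verify is that $C_p$ is finite and nonzero for all $p \geq 1$, which follows immediately since $\Gamma((p+1)/2) > 0$ for $p \geq 1$ and the Gaussian has finite moments of all orders. The only subtle point is noting that the theorem is stated with potentially distinct $A_p$ and $B_p$ for uniformity with the Rademacher formulation, but for Gaussians both can be taken equal to $C_p^{1/p}$.
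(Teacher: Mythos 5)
Your proof is correct, and it is the standard argument for the Gaussian case: by $2$-stability, $S=\sum_i r_i x_i \sim N(0,\|x\|_2^2)$, so $\Ex{|S|^p}=\|x\|_2^p\,\Ex{|Z|^p}$ with $Z\sim N(0,1)$, and the result follows with $A_p=B_p=\left(\Ex{|Z|^p}\right)^{1/p}$ (your formula $\Ex{|Z|^p}=2^{p/2}\Gamma\!\left(\tfrac{p+1}{2}\right)/\sqrt{\pi}$ is also right). The paper itself supplies no proof, citing~\cite{marcinkiewicz1938quelques}, so there is nothing in the source to compare against; your argument fills that gap cleanly and in fact delivers equality, which is strictly sharper than the two-sided bound stated. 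One small caveat: the theorem as stated says only ``independent normal random variables,'' but your argument (and indeed the conclusion, with constants depending only on $p$) requires them to be i.i.d.\ standard normal; the paper's usage is consistent with that reading since its Gaussian sketch entries are explicitly $\mathcal{N}(0,1)$.
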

We remark that the Marcinkiewicz–Zygmund inequality more generally applies to random variables $r_1,\ldots,r_n$ with mean zero and finite variance, but for our purposes, normal random variables suffice. 

Next, we recall a dense variant of the standard $\CountSketch$ algorithm for finding $L_2$-heavy hitters, c.f., \Cref{alg:dense:countsketch}. 
We first generate a random matrix $G$ with $k$ rows and $n$ columns, where each entry is drawn from a normal distribution $\calN(0,1)$. 
Here, $k$ is an error parameter, so that a larger number of rows correspond to a smaller error. 
The estimate for each $x_i$ is obtained by taking the dot product of the corresponding column $G_i$ with the transformed frequency vector $Gx$. 
Finally, the result is scaled by a factor of $\frac{1}{k}$ to account for the $k$ separate rows. 

\begin{algorithm}[!htb]
\caption{Dense $\CountSketch$}
\label{alg:dense:countsketch}
\begin{algorithmic}[1]
\Require{Frequency vector $x\in\mathbb{R}^n$}
\Ensure{Estimates of $x_i$ for all $i\in[n]$}
\State{Let $G\in\mathbb{R}^{k\times n}$ be a matrix where each entry is drawn from $\calN(0,1)$}
\State{Let $G_i$ denote the $i$-th column of $G$}
\State{\Return $\widehat{x_i}=\frac{1}{k}\cdot\langle G_i,Gx\rangle$ as the estimate for $x_i$}
\end{algorithmic}
\end{algorithm}
We having the following guarantees for dense $\CountSketch$ in \Cref{alg:dense:countsketch}. 
We remark that the proof is completely standard. 
\begin{lemma}
\label{lem:dense:countsketch}
For each $i\in[n]$, we have 
\[\PPr{\left\lvert\widehat{x_i}-x_i\right\rvert\le\frac{4}{\sqrt{k}}\cdot\|x\|_2}\ge\frac{3}{4}.\]
\end{lemma}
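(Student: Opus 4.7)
The plan is a straightforward variance computation followed by Chebyshev's inequality. First I would unfold the definition of $\widehat{x_i}$ and split it into a diagonal part and a cross part:
\[
\widehat{x_i} \;=\; \frac{1}{k}\sum_{\ell=1}^{k} G_{\ell,i}\sum_{j=1}^{n} G_{\ell,j} x_j \;=\; \frac{x_i}{k}\sum_{\ell=1}^{k} G_{\ell,i}^2 \;+\; \frac{1}{k}\sum_{\ell=1}^{k} G_{\ell,i}\sum_{j\neq i} G_{\ell,j} x_j.
\]
Since $\mathbb{E}[G_{\ell,i}^2]=1$ and $\mathbb{E}[G_{\ell,i}G_{\ell,j}]=0$ for $j\neq i$, this yields $\mathbb{E}[\widehat{x_i}]=x_i$, and the error decomposes as
\[
\widehat{x_i}-x_i \;=\; \underbrace{\frac{x_i}{k}\sum_{\ell=1}^{k}\bigl(G_{\ell,i}^2-1\bigr)}_{=:A} \;+\; \underbrace{\frac{1}{k}\sum_{\ell=1}^{k} G_{\ell,i}\sum_{j\neq i} G_{\ell,j} x_j}_{=:B}.
\]

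Next I would compute $\Var(A)$ and $\Var(B)$ separately and verify that they are uncorrelated so that the variance of the sum is additive. For $A$, the summands are i.i.d.\ with $\Var(G_{\ell,i}^2-1)=2$, giving $\Var(A)=\frac{2x_i^2}{k}$. For $B$, for each fixed $\ell$ the factor $G_{\ell,i}$ is independent of the sum $\sum_{j\neq i} G_{\ell,j}x_j\sim\mathcal{N}(0,\sum_{j\neq i}x_j^2)$, so each summand has mean $0$ and second moment $\sum_{j\neq i}x_j^2$, yielding $\Var(B)=\frac{1}{k}\sum_{j\neq i}x_j^2$. The cross-covariance vanishes because odd Gaussian moments are zero: $\mathbb{E}\bigl[(G_{\ell,i}^2-1)\cdot G_{\ell,i}\bigr]=\mathbb{E}[G_{\ell,i}^3]-\mathbb{E}[G_{\ell,i}]=0$, combined with independence across rows. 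Adding gives
\[
\Var(\widehat{x_i}-x_i) \;=\; \frac{2x_i^2 + \sum_{j\neq i}x_j^2}{k} \;\le\; \frac{2\|x\|_2^2}{k}.
\]

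Finally, applying Chebyshev's inequality with threshold $t=\tfrac{4}{\sqrt{k}}\|x\|_2$ yields
\[
\PPr{\bigl|\widehat{x_i}-x_i\bigr| \ge \tfrac{4}{\sqrt{k}}\|x\|_2} \;\le\; \frac{2\|x\|_2^2/k}{16\|x\|_2^2/k} \;=\; \tfrac{1}{8},
\]
and the complementary bound is at least $7/8 \ge 3/4$, giving the claim. There is no real obstacle here; the only detail that requires a touch of care is confirming the zero cross-covariance between $A$ and $B$ (which reduces to $\mathbb{E}[G_{\ell,i}^3]=0$) so that one can simply add the two variances rather than invoke a looser bound.
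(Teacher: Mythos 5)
Your proof is correct and follows essentially the same route as the paper: compute $\mathbb{E}[\widehat{x_i}]=x_i$, bound $\Var(\widehat{x_i})\le 2\|x\|_2^2/k$, and apply Chebyshev with threshold $4\|x\|_2/\sqrt{k}$. The only cosmetic difference is that you split the error into a diagonal and a cross term and verify the covariance vanishes, whereas the paper computes $\mathbb{E}[Y_1^2]$ directly via Gaussian moment identities; both yield the identical variance $\frac{1}{k}(x_i^2+\|x\|_2^2)$.
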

\begin{proof}
Consider a fixed $i\in[n]$. 
Then
\focsarxiv{
\begin{align*}
\widehat{x_i}
&=\frac{1}{k}\cdot\langle G_i,Gx\rangle\\
&=\frac{1}{k}\sum_{a=1}^k G_{a,i}\left(\sum_{j=1}^n G_{a,j}x_j\right)\\
&=\frac{1}{k}\sum_{a=1}^k\left(G_{a,i}^2x_i+\sum_{j\ne i}G_{a,i}G_{a,j}x_j\right).
\end{align*}
}
{
\begin{align*}
\widehat{x_i}
&=\frac{1}{k}\cdot\langle G_i,Gx\rangle
=\frac{1}{k}\sum_{a=1}^k G_{a,i}\left(\sum_{j=1}^n G_{a,j}x_j\right)
=\frac{1}{k}\sum_{a=1}^k\left(G_{a,i}^2x_i+\sum_{j\ne i}G_{a,i}G_{a,j}x_j\right).
\end{align*}
}
Since $\Ex{G_{a,i}^2}=1$ and $\Ex{G_{a,i}G_{a,j}}=0$ for $i\ne j$, we have
\focsarxiv{
\begin{align*}
\Ex{\widehat{x_i}}
&=\frac{1}{k}\sum_{a=1}^k\Ex{G_{a,i}^2x_i+\sum_{j\ne i}G_{a,i}G_{a,j}x_j}\\
&=\frac{1}{k}\cdot k\cdot x_i
=x_i.
\end{align*}
}
{
\begin{align*}
\Ex{\widehat{x_i}}
&=\frac{1}{k}\sum_{a=1}^k\Ex{G_{a,i}^2x_i+\sum_{j\ne i}G_{a,i}G_{a,j}x_j}
=\frac{1}{k}\cdot k\cdot x_i
=x_i.
\end{align*}
}
Next we compute the variance. 
Let $Y_a=G_{a,i}\sum_{j}G_{a,j}x_j$. 
The random variables $Y_a$ are i.i.d., and hence
\begin{align*}
\Var[\widehat{x_i}]
&=\frac{1}{k^2}\sum_{a=1}^k\Var[Y_a]
=\frac{1}{k}\Var[Y_1].
\end{align*}
We expand
\focsarxiv{
\begin{align*}
\Ex{Y_1^2}
&=\Ex{G_i^2\left(\sum_j G_jx_j\right)^2}\\
&=\sum_{j,\ell}x_j\cdot x_\ell\cdot\Ex{G_i^2G_jG_\ell}.
\end{align*}
}
{
\begin{align*}
\Ex{Y_1^2}
&=\Ex{G_i^2\left(\sum_j G_jx_j\right)^2}
=\sum_{j,\ell}x_j\cdot x_\ell\cdot\Ex{G_i^2G_jG_\ell}.
\end{align*}
}
Using the Gaussian moment identities $\Ex{G_i^4}=3$ and $\Ex{G_i^2G_j^2}=1$ for $i\ne j$, we have
\begin{align*}
\Ex{Y_1^2}=3x_i^2+\sum_{j\ne i}x_j^2=2x_i^2+\|x\|_2^2.
\end{align*}
Therefore,
\focsarxiv{
\begin{align*}
\Var[Y_1]
&=\Ex{Y_1^2}-(\Ex{Y_1})^2\\
&=(2x_i^2+\|x\|_2^2)-x_i^2\\
&=x_i^2+\|x\|_2^2\le2\|x\|_2^2,
\end{align*}
}
{
\begin{align*}
\Var[Y_1]
&=\Ex{Y_1^2}-(\Ex{Y_1})^2
=(2x_i^2+\|x\|_2^2)-x_i^2
=x_i^2+\|x\|_2^2\le2\|x\|_2^2,
\end{align*}
}
and hence
\begin{align*}
\Var[\widehat{x_i}]\le\frac{2}{k}\|x\|_2^2.
\end{align*}
Then by Chebyshev's inequality,
\begin{align*}
\PPr{\left\lvert\widehat{x_i}-x_i\right\rvert\ge t}
&\le\frac{\Var[\widehat{x_i}]}{t^2}
\le\frac{2\|x\|_2^2}{kt^2}.
\end{align*}
Setting $t=\frac{4}{\sqrt{k}}\|x\|_2$ gives
\begin{align*}
\PPr{\left\lvert\widehat{x_i}-x_i\right\rvert\ge\frac{4}{\sqrt{k}}\|x\|_2}
\le\frac{1}{8},
\end{align*}
and thus with probability at least $\frac{3}{4}$,
\[
\left\lvert\widehat{x_i}-x_i\right\rvert\le\frac{4}{\sqrt{k}}\cdot\|x\|_2,
\]
as claimed.
\end{proof}

By repeating $\O{\log n}$ times and taking the median, \Cref{lem:dense:countsketch} translates to the following guarantees of dense $\CountSketch$:
\begin{theorem}
\label{thm:dense:countsketch}
Dense $\CountSketch$ outputs estimates $\widehat{x_1},\ldots,\widehat{x_n}$ such that with probability $1-\frac{1}{\poly(n)}$, for all $i\in[n]$, we have 
\[\left\lvert\widehat{x_i}-x_i\right\rvert\le\frac{4}{\sqrt{k}}\cdot\|x\|_2.\]
The algorithm uses $\O{k\log^2 n}$ bits of space. 
\end{theorem}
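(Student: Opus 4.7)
The plan is to boost the single-coordinate guarantee of \lemref{dense:countsketch} to a uniform guarantee over all $i \in [n]$ via the standard median-of-independent-copies trick, and then account for the space overhead of running $O(\log n)$ parallel instances.

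First I would instantiate $R = c\log n$ independent copies of the dense \CountSketch\ data structure of \Cref{alg:dense:countsketch}, for a sufficiently large constant $c$, each using a fresh $k \times n$ Gaussian matrix $G^{(r)}$, and maintaining the sketch vector $G^{(r)} x$ throughout the stream. For a fixed $i \in [n]$, this produces $R$ independent estimators $\widehat{x_i}^{(1)}, \ldots, \widehat{x_i}^{(R)}$, each satisfying $|\widehat{x_i}^{(r)} - x_i| \le \frac{4}{\sqrt{k}} \|x\|_2$ with probability at least $3/4$ by \lemref{dense:countsketch}. Call such an estimator \emph{good}; then the number of good estimators is a sum of independent $\{0,1\}$ random variables with mean at least $3R/4$.

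Next I would take $\widehat{x_i} := \median_{r \in [R]} \widehat{x_i}^{(r)}$ and argue that the median is good whenever strictly more than $R/2$ of the $\widehat{x_i}^{(r)}$ lie in the interval $[x_i - \tfrac{4}{\sqrt{k}}\|x\|_2,\, x_i + \tfrac{4}{\sqrt{k}}\|x\|_2]$. Indeed, if fewer than $R/2$ estimates lie below the interval and fewer than $R/2$ lie above, then the middle order statistic must itself lie inside it. By a standard Chernoff bound, the probability that fewer than $R/2$ of the $R$ estimators are good is at most $e^{-\Omega(R)} \le \frac{1}{n^{c'}}$ for any constant $c'$, by choosing $c$ large enough. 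A union bound over the $n$ coordinates then yields the claimed $1 - \frac{1}{\poly(n)}$ guarantee uniformly in $i$.

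For the space bound, each copy stores a single sketch vector $G^{(r)} x \in \mathbb{R}^k$; since the stream entries have magnitude $\poly(n)$ and the Gaussians can be discretized to $\poly(\log n)$ bits without affecting the guarantee (up to a negligible additive error), each copy uses $\O{k \log n}$ bits. Running $R = \O{\log n}$ copies gives a total of $\O{k \log^2 n}$ bits. The only subtlety is that we cannot store the $k \times n$ Gaussian matrices $G^{(r)}$ explicitly in sublinear space; this is handled by deferring to the derandomization machinery discussed in the technical overview (the Gaussian entries are generated on the fly from a short pseudorandom seed via the PRG of \cite{GopalanKM18}), which does not affect the leading-order space or the correctness of the median argument. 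The main obstacle is really just verifying that this derandomization preserves both the unbiasedness and variance computation underlying \lemref{dense:countsketch}, but since those arguments only depend on fourth-moment identities of the Gaussian entries, fooling a low-degree product functional of the $G^{(r)}_{a,j}$ suffices.
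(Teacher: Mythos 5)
Your proof is correct and matches the paper's (one-line) approach exactly: the paper obtains \Cref{thm:dense:countsketch} from \Cref{lem:dense:countsketch} precisely ``by repeating $\O{\log n}$ times and taking the median,'' and you have filled in the standard Chernoff-plus-union-bound details and the $\O{k\log^2 n}$ space accounting faithfully. The deferral of the randomness-storage issue to the PRG machinery is also consistent with how the paper treats derandomization as a separate concern, though your closing remark about fooling ``low-degree product functionals'' slightly understates what the paper actually does (it fools half-space testers of the sketch, not just moment identities) — this does not affect the correctness of your argument for the theorem as stated.
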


\noindent
Next, we recall the guarantees of the well-known $\AMS$ algorithm for $F_2$-approximation:
\begin{theorem}
\cite{AlonMS99}
\label{thm:ams}
Given a vector $x$ defined through a turnstile stream, there exists a one-pass algorithm $\AMS$ that outputs a $(1+\eps)$-approximation to $\|x\|_2$ with high probability, using $\O{\frac{1}{\eps^2}\log^2 n}$ bits of space. 
\end{theorem}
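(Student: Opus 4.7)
The plan is to construct a linear sketch based on $4$-wise independent sign hashes, which is the classical AMS approach. First, I would sample a $4$-wise independent hash function $h:[n]\to\{-1,+1\}$, which can be specified with $\O{\log n}$ bits via a degree-$3$ polynomial over a suitable finite field, and maintain the scalar sketch $Z=\sum_{i\in[n]}h(i)x_i$. Under a turnstile update $x_{i_t}\leftarrow x_{i_t}+\Delta_t$, the sketch is updated as $Z\leftarrow Z+h(i_t)\Delta_t$ in $\O{1}$ time after evaluating the hash function. The claim is that $Z^2$ is an unbiased estimator of $\|x\|_2^2$ with controlled variance.

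The key computations are the first and second moment of $Z^2$. Expanding $Z^2=\sum_{i,j}h(i)h(j)x_ix_j$ and using pairwise independence together with $h(i)^2=1$ gives $\Ex{Z^2}=\|x\|_2^2$. For the variance, expanding $Z^4=\sum_{i,j,k,\ell}h(i)h(j)h(k)h(\ell)x_ix_jx_kx_\ell$ and invoking $4$-wise independence, only tuples in which each index appears an even number of times contribute; a short combinatorial accounting shows $\Ex{Z^4}\le 3\|x\|_2^4$, and hence $\Var[Z^2]\le 2\|x\|_2^4$.

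Next, I would apply the standard median-of-means amplification. Averaging $k=\O{1/\eps^2}$ independent copies of $Z^2$ yields an estimator whose variance is at most $\frac{2}{k}\|x\|_2^4$, so Chebyshev's inequality ensures that the average is within $(1\pm\eps)\|x\|_2^2$ with probability at least $\frac{2}{3}$. Taking the median of $\O{\log n}$ such independent averages, a Chernoff bound boosts the success probability to $1-\frac{1}{\poly(n)}$. Each hash function requires $\O{\log n}$ bits of seed and each scalar sketch is stored in $\O{\log n}$ bits since $\abs{Z}\le n\cdot\poly(n)$, yielding total space $\O{\frac{1}{\eps^2}\log^2 n}$.

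The steps are all classical, and there is no single hard obstacle; the main place where care is required is the $\Ex{Z^4}$ calculation, since one has to verify that $4$-wise independence (rather than full independence) suffices for the sum over index tuples, and that the third and fourth moments of the sign variables match those of true Rademachers so only the diagonal patterns $\{i=j=k=\ell\}$ and the three pairings $\{i=j,k=\ell\}$ etc.\ survive. Given this, the amplification and space accounting are routine.
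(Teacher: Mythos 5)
Your proof is correct and reproduces the classical argument of \cite{AlonMS99}, which the paper invokes by citation without giving its own proof; the $4$-wise-independence moment calculations, variance bound $\Var[Z^2]\le 2\|x\|_2^4$, median-of-means amplification, and space accounting are all accurate and match the cited reference. The only cosmetic divergence is that the remark the paper attaches to the theorem instantiates $\AMS$ with Gaussian rather than Rademacher signs so that the estimate can be read off the same dense $\CountSketch$ matrix $G$; this changes nothing in the analysis beyond replacing $h(i)^2=1$ and the Rademacher fourth-moment count with $\Ex{g^2}=1$, $\Ex{g^4}=3$, and rounding the Gaussians to $\O{\log n}$-bit precision, so the two routes are essentially identical.
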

In particular, the $\AMS$ algorithm simply takes an inner product of scaled normal random variables with the input frequency vector and thus the corresponding post-processing process can be recovered from applying the dense $\CountSketch$ linear sketch.

\paragraph{Exponential random variables.}
We next recall properties of exponential random variables. 
We first define the probability density function of an exponential random variable. 
\begin{definition}[Exponential random variable]  
An exponential random variable $X$ with scale parameter $\lambda>0$ has the probability density function  
\[p(x) = \lambda e^{-\lambda x}.\]  
When $\lambda = 1$, $X$ is called a standard exponential random variable.  
\end{definition}
We next recall the following probability saying that the among a vector whose entries are scaled by inverse exponentials, the largest entry in the vector (in magnitude) is a heavy-hitter of the resulting vector with high probability. 
\begin{lemma}
\cite{EsfandiariKMWZ24}
\label{lem:max:heavy}
Let $p>0$ be an arbitrary constant, $e_1,\ldots,e_n$ be independent standard exponential random variables, let $x_1,\ldots,x_n\ge 0$, and let $c>0$ be a fixed constant.  
Then there exists a constant $C_{\ref{lem:max:heavy}}$ such that 
\begin{align*}
\PPr{\frac{\max_{i\in[n]}x_i^p/\be_i}{\sum_{i=1}^n x_i^p/\be_i}\ge\frac{1}{C_{\ref{lem:max:heavy}}\log^2 n}} \ge1-\frac{1}{n^c}.
\end{align*}
\end{lemma}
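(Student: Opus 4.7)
The plan is to prove the lemma by separately lower-bounding the maximum $M := \max_i x_i^p/\be_i$ and upper-bounding the excess $S - M$, where $S := \sum_i x_i^p/\be_i$, and then combining the two to obtain the desired ratio bound. The target estimates are $M \geq \|x\|_p^p/(c \log n)$ and $S - M = O(M \log^2 n)$, each holding with probability $1 - n^{-c}$, after which $M/S = \Omega(1/\log^2 n)$ follows by a union bound.

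For the lower bound on $M$, I would invoke the max-stability of exponentials: since $\min_i \be_i/x_i^p$ is exponentially distributed with rate $\|x\|_p^p = \sum_i x_i^p$, $M$ is equidistributed with $\|x\|_p^p/E$ for $E$ a standard exponential, so $\PPr{M \geq \|x\|_p^p/(c \log n)} = 1 - n^{-c}$. For the upper bound on $S - M$, I would condition on $M$ together with the argmax $j^*$ and exploit the memorylessness of the exponential: the remaining $\be_i$ for $i \neq j^*$ are conditionally independent and distributed as $x_i^p/M + \tilde{\be}_i$ with $\tilde{\be}_i$ i.i.d.\ standard exponential, so each $X_i := x_i^p/\be_i$ satisfies $X_i \leq M$ almost surely. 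Evaluating the conditional expectation gives $\Ex{X_i \mid M} = x_i^p e^{s_i} E_1(s_i)$ with $s_i := x_i^p/M$, where $E_1$ is the exponential integral; combining the standard bound $e^s E_1(s) \leq \ln(1 + 1/s)$ with the entropy inequality $\sum_i \alpha_i \log(1/\alpha_i) \leq \log n$ for the normalized weights $\alpha_i := x_i^p/\|x\|_p^p$ yields $\Ex{S - M \mid M} = O(\|x\|_p^p \log n)$ on the high-probability event $E \in [n^{-c}, c \log n]$.

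To lift this first-moment estimate to a high-probability bound, I would apply Bernstein's inequality, exploiting $X_i \leq M$ to obtain $\Var[X_i \mid M] \leq M \cdot \Ex{X_i \mid M}$ and hence a total conditional variance of $V = O(M \|x\|_p^p \log n)$. Choosing deviation $t = CM \log^2 n$ for $C$ sufficiently large, and restricting to the event where $\|x\|_p^p \leq cM \log n$, the Bernstein denominator $2V + 2Mt/3$ becomes $O((C+c) M^2 \log^2 n)$, producing an exponent of $-\Omega(C \log^2 n)$. This yields $S - M \leq O(M \log^2 n)$ with probability $1 - n^{-c}$, and combining with the lower bound on $M$ gives $M/S \geq M/(M + O(M \log^2 n)) = \Omega(1/\log^2 n)$.

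The main obstacle I anticipate is obtaining the sharp $O(\|x\|_p^p \log n)$ bound on $\Ex{S - M \mid M}$: the naive estimate $\sum_i x_i^p \log(M/x_i^p)$ loses an extra logarithmic factor unless the entropy inequality is applied carefully to exploit $\sum_i \alpha_i = 1$, and one must handle the regime $x_i^p > M$ separately (using $\ln(1+M/x_i^p) \leq M/x_i^p$ to avoid generating negative contributions). A related subtlety is the choice of deviation $t = CM \log^2 n$ (rather than the more natural $t = C\|x\|_p^p \log n$) in the Bernstein step: this scaling is required so that the $Mt/3$ term dominates $V$ in the denominator, producing the correct $\log^2 n$ rate; a smaller threshold would leave the variance dominant and yield a weaker exponent that fails to be high probability uniformly over small values of $E$.
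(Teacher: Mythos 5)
The paper does not actually prove this lemma --- it is imported verbatim from \cite{EsfandiariKMWZ24} --- so there is no in-paper argument to compare against. Your proof, however, is a correct and self-contained derivation of the claim.

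Your decomposition is sound at every step. The lower bound on $M$ follows from min-stability: $\min_i e_i/x_i^p$ is exponential with rate $\|x\|_p^p$, so $M \sim \|x\|_p^p/E$ and $\Pr[M \ge \|x\|_p^p/(c\log n)] = 1 - n^{-c}$. Conditioning on the argmax $j^*$ and on $M$, memorylessness gives that the remaining $e_i$ are independent and distributed as $x_i^p/M + \tilde e_i$, so each $X_i = x_i^p/e_i$ is conditionally independent and bounded in $[0,M]$; this is exactly what makes the Bernstein step legal. The conditional mean computation is correct: $\Ex{X_i\mid M} = x_i^p e^{s_i}E_1(s_i)$ with $s_i = x_i^p/M$, and the classical bound $e^sE_1(s) \le \ln(1+1/s)$ together with the entropy inequality $\sum_i \alpha_i \ln(1/\alpha_i) \le \ln n$ yields $\Ex{S-M\mid M} = O(\|x\|_p^p\log n)$ on the event $M\in[\|x\|_p^p/(c\log n),\, n^c\|x\|_p^p]$ (with the separate treatment $\ln(1+M/x_i^p)\le M/x_i^p$ when $x_i^p > M$ capping that contribution at $O(\|x\|_p^p)$, as you note). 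The Bernstein bound with $\Var[X_i\mid M]\le M\Ex{X_i\mid M}$ and $t = CM\log^2 n$ then gives a failure probability of $\exp(-\Omega(C\log^2 n)) \ll n^{-c}$, after which $S \le M + O(M\log^2 n)$ and the ratio bound follows. Your diagnosis of why $t = CM\log^2 n$ (not $CM\log n$) is the right scale is also correct: with the smaller threshold the variance $V = \Theta(M^2\log^2 n)$ would dominate the denominator and $t^2/V$ would be only $O(1)$. One small point worth making explicit when writing this up: the Bernstein inequality is applied conditionally on $(j^*,M)$, and the bound $\Ex{S-M\mid M}=O(M\log^2 n)$ is only established on the high-probability range of $M$; the final failure probability is obtained by adding the probability of $M$ leaving that range (at most $2n^{-c}$) to the conditional tail bound, and absorbing constants into $C_{\ref{lem:max:heavy}}$ and $c$.
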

We also recall the following property that shows concentration of the order statistics of a vector whose entries are scaled by inverse exponentials. 
\begin{lemma}[Lemma 2 in \cite{JayaramW18}]
\label{lem:hidden:exps}
Given a vector $x\in\mathbb{R}^n$, let $z_{i,j}=\frac{x_i}{e_{i,j}^{1/p}}$ for independent exponential random variables $e_{i,j}$, where $j\in[n^c]$. 
Let $N=n^{c+1}$ and let $z\in\mathbb{R}^{N}$ be the resulting flattened vector. 
Similarly, let $F\in\mathbb{R}^{N}$ be the duplicated vector where each coordinate of $x$ is repeated $n^c$ times. 
Then for every integer $k\in[1,N-n^{9\zeta/10})$, we have with probability $1-\O{e^{-n^{c/3}}}$,
\[\left\lvert z_{D(k)}\right\rvert=\left[\left(1\pm\O{n^{-\zeta/10}}\right)\cdot\sum_{\tau=1}^k\frac{E_\tau}{\Ex{\sum_{j=\tau}^N|F_{D(j)}|^p}}\right]^{-1/p}.\]
\end{lemma}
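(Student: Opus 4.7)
The strategy is to first establish an \emph{exact} distributional representation for $|z_{D(k)}|^{-p}$ via the memoryless property of the exponentials, and then pass from the random sums appearing in that identity to their expectations through concentration.

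First I would observe that $|z_{i,j}|^{-p} = e_{i,j}/|x_i|^p$ is exponentially distributed with rate $|x_i|^p$, and that the order statistic $|z_{D(k)}|^{-p}$ is exactly the $k$-th smallest among these $N = n^{c+1}$ independent heterogeneous exponentials, whose rates are precisely $|F_{D(j)}|^p$ once sorted by the anti-rank. By a classical telescoping argument based on memorylessness---conditioning on each successive ``winner'' of the race and using that the remaining items restart as fresh exponentials---one obtains the exact identity
\[
|z_{D(k)}|^{-p} \;=\; \sum_{\tau=1}^{k} \frac{E_\tau}{S_\tau}, \qquad S_\tau \,:=\, \sum_{j=\tau}^{N} |F_{D(j)}|^p,
\]
where $E_1,\ldots,E_k$ are i.i.d.\ standard exponentials, which by the same argument are independent of the anti-rank permutation $D$.

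The second step is the concentration argument: I would show that uniformly over $\tau\in[1,k]$,
\[
S_\tau \;=\; \bigl(1 \pm \O{n^{-\zeta/10}}\bigr)\cdot\Ex{S_\tau}
\]
with failure probability at most $\O{e^{-n^{c/3}}}$. Writing $S_\tau = \|F\|_p^p - T_\tau$ with $T_\tau := \sum_{j<\tau}|F_{D(j)}|^p$, the task reduces to showing that the partial-sum process $T_\tau$ tracks its expectation along the relevant range of $\tau$. Since each $x_i$ is duplicated $n^c$ times and every individual pick contributes at most $\max_i|x_i|^p$ to $T_\tau$, the weighted sampling-without-replacement process underlying the anti-ranks is very regular. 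I would view the anti-rank sequence as a Doob filtration and apply a Bernstein/Azuma-type bound for martingales with bounded differences; a union bound over $\tau\in[k]$ then gives the stated uniform concentration. The restriction $k<N-n^{9\zeta/10}$ is what guarantees that $S_\tau$ still contains at least $n^{9\zeta/10}$ summands, so that the additive deviation remains negligible compared with $\Ex{S_\tau}$.

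Finally I would substitute the concentration bound into the exact identity and pull the common $(1\pm\O{n^{-\zeta/10}})$ factor out of the positive-term sum to obtain
\[
|z_{D(k)}|^{-p} \;=\; \bigl(1 \pm \O{n^{-\zeta/10}}\bigr) \sum_{\tau=1}^{k} \frac{E_\tau}{\Ex{S_\tau}},
\]
and raise both sides to the $-1/p$ power to recover the claimed form. The principal obstacle is the concentration step: achieving a relative error of $n^{-\zeta/10}$ uniformly over $\tau$ with tail as small as $\O{e^{-n^{c/3}}}$ requires carefully exploiting both the duplication parameter $c$ and the bound on $k$, and picking the right martingale filtration so that bounded-differences concentration applies with sharp constants.
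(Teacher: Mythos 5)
The paper does not actually prove \Cref{lem:hidden:exps}; it imports it verbatim as Lemma~2 of \cite{JayaramW18}, so there is no in-paper proof to compare against. Evaluating your sketch on its own merits: the first step is correct and is the standard route. Writing $|z_{D(k)}|^{-p}$ as the $k$\th smallest of $N$ independent heterogeneous exponentials with rates $|F_j|^p$ and then invoking the R\'enyi/Sukhatme telescoping decomposition gives, exactly,
\[
|z_{D(k)}|^{-p}=\sum_{\tau=1}^{k}\frac{E_\tau}{S_\tau},\qquad S_\tau=\sum_{j\ge\tau}|F_{D(j)}|^p,
\]
with $E_1,\ldots,E_k$ i.i.d.\ standard exponentials independent of $D$ (this is the \cite{nagaraja2006order} fact the technical overview alludes to). The whole content of the lemma is then to replace the $D$-dependent random denominators $S_\tau$ by the deterministic quantities $\mathbb{E}[S_\tau]$ at the cost of a uniform $(1\pm\O{n^{-\zeta/10}})$ factor, which is exactly how you set it up.

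Where the sketch is thin is the concentration step, and in two ways. First, the anti-rank sequence $D$ is a \emph{weighted} sampling-without-replacement process (each remaining index is selected with probability proportional to its rate $|F_j|^p$). For such a process, the Doob martingale of $T_\tau=\sum_{\ell<\tau}|F_{D(\ell)}|^p$ does \emph{not} obviously have increments bounded by $\O{\max_j|F_j|^p}$: revealing $D(j)$ changes the remaining pool, and hence the conditional expectations of every future pick, so the increment $\mathbb{E}[T_\tau\mid\calF_j]-\mathbb{E}[T_\tau\mid\calF_{j-1}]$ includes a ``propagated'' term whose boundedness needs a coupling argument (valid essentially because of the $n^c$-fold duplication) or, more cleanly, a negative-association argument for the inclusion indicators $\mathbf{1}[j\text{ among the first }\tau-1\text{ finishers}]$, which yields Chernoff bounds directly. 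A bare appeal to Azuma with differences $\max_j|F_j|^p$ is not quite correct without this. Second, and this you do flag, passing from an additive deviation of $T_\tau$ to a \emph{relative} deviation of $S_\tau=\|F\|_p^p-T_\tau$ near the boundary $k\to N-n^{9\zeta/10}$ is exactly where the exponents $\zeta,c$ and the cut-off $N-n^{9\zeta/10}$ have to be balanced against the target tail $e^{-n^{c/3}}$; this parameter bookkeeping is the actual crux of the proof and would need to be carried out explicitly. Finally, note that (as the body of the paper reveals) $\zeta$ and $c$ here play the role of the same duplication exponent, and the constants must be chosen so that the Azuma/Chernoff exponent dominates $n^{c/3}$ uniformly over all $\tau\le k$, after a union bound over $\tau$.
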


\section{Perfect \texorpdfstring{$L_p$}{Lp} Sampler}
We describe our perfect $L_p$ sampler for $p\in(0,2)$. 
For each index $i\in[n]$, the algorithm in \algref{alg:perfect:lp:sample} generates a list $\calH_i$ of the top $\tau = \O{\log n}$ asymptotic order statistics of $k$ reciprocals of powers of exponential random variables $\frac{1}{k^{2/p}}\frac{1}{e_{i,j}^{2/p}}$ for $j\in[k]$, as $k\to\infty$. 
This limits the search space for the possible values of the maximum index to $\O{n \log n}$ values so that as a result, any subsequent frequency estimation algorithm only needs guarantees on $\O{n \log n}$ values. 
Moreover, by explicitly having columns in our sketch matrix for the top $\O{\log n}$ order statistics of the scalings for each of the original $n$ coordinates, we can effectively track the maximum term. 
As a result, we can upper and lower bound the parameter $R$, defined as the ratio of the maximum term $X_1$ to the sum $\sum_{j>1}X_j$ of all others, by $\polylog(n)$ and $\frac{1}{\polylog(n)}$, which is the key mechanism for achieving a polylogarithmic worst-case update time. 

Our algorithm then computes an approximation to $\widetilde{\sigma_i^2}$, a sum of transformed exponential random variables, conditioned on $\calH_i$. 
We then generate a Gaussian random variable $N_i\sim \calN(0,\widetilde{\sigma_i^2})$, which corresponds to the distribution for each bucket in the dense $\CountSketch$ data structure.

\begin{algorithm}[!htb]
\caption{Perfect $L_p$ sampler, $p<2$}
\alglab{alg:perfect:lp:sample}
\begin{algorithmic}[1]
\Require{Frequency vector $x\in\mathbb{R}^n$}
\Ensure{Sample from $i\in[n]$}
\State{$r=\O{\log n}$, $\tau=\O{\log n}$}
\For{each $i\in[n]$}
\State{Generate $\calH_i$ to be the limiting distribution as $k\to\infty$ of the largest $\tau$ values of $\left\{\frac{1}{k^{1/p}}\cdot\frac{1}{e_j^{1/p}}\right\}_{j\in[n^c]}$. Note that although $n$ is growing, it is fixed for the purposes of taking the limit, which is over $k$.}
\Comment{Exponential random variables $e_j$}
\State{Simulate $\widetilde{\sigma_i^2}=\lim_{k\to\infty}\sum_{j\in[k-\tau]}\frac{1}{k^{2/p}}\frac{1}{e_j^{2/p}}$ conditioned on $\calH_i$}
\State{Draw $N_i\sim\calN(0,\widetilde{\sigma_i^2})$ for each row}
\EndFor
\For{each stream update $s_t\in[n]$}
\State{Update $\{\calH_{s_t}\}$ and $\{N_{s_t}\}$ for $j$-th row in dense $\CountSketch$ with $r$ rows}
\EndFor
\State{At the end of the stream, recover estimates of $\calH_i$ for all $i\in[n]$}
\State{Run a statistical test, c.f., \Cref{fig:stat:test}, and upon success, return the largest estimate}
\end{algorithmic}
\end{algorithm}

Now, as new elements $s_t$ arrive in the data stream, the algorithm updates $N_{s_t}$ and the set of $\calH_{s_t}$ values. 
These updates are applied to the $j$-th row of a dense $\CountSketch$ data structure with $r=\O{\log n}$ rows, which ensures correctness over all $\O{n\log n}$ values in $\cup_{i\in[n]}\calH_i$. 
Once the stream processing is complete, the algorithm recovers estimates of $\calH_i$. 
Finally, the algorithm runs a statistical test, c.f., \Cref{fig:stat:test}, and outputs the index corresponding to the largest estimate of $\calH_i$ if the statistical test passes. 

We now describe the statistical test in \Cref{fig:stat:test} and corresponding intuition. 
We first construct a vector $z\in\mathbb{R}^{n'}$, which consists of the values $\{\calH_i\}$ and $\{N_i\}$ for all $i\in[n]$, so that $n'=\O{n\log n}$.  
Note that this is a compact representation of the previous vector $z\in\mathbb{R}^N$ because it collects the remaining duplicate coordinates outside of $\calH_i$ into $N_i$. 
We then compute a $2$-approximation $Z$ of $\|z\|_2$ and generate a fixed constant $\eps>0$ as well as a uniform random variable $\mu$ drawn from $[0.99,1.01]$. 
The purpose of the uniform random variable is to wash away the dependencies on the maximum index, along with the infinite duplication. 
Now, the statistical test checks whether $\widehat{z_{D(2)}}$ is smaller than $50\mu\eps Z$ or whether the estimated difference $\widehat{z_{D(1)}}-\widehat{z_{D(2)}}$ is smaller than $100\mu\eps Z$. 
If either condition holds, the test fails, and the procedure returns $\FAIL$. 
Otherwise, the test is considered successful and the algorithm outputs $D(1)$. 

\begin{figure*}[!htb]
\begin{mdframed}
\begin{enumerate}
\item
Let $z\in\mathbb{R}^{n'}$ be the vector with entries $\{\calH_i,N_i\}$, where $n'=\O{n\log n}$
\item 
Let $\widehat{z_{D(1)}}$ and $\widehat{z_{D(2)}}$ be the estimates of the two largest coordinates in $\cup_{i=1}^n\calH_i$ by the dense $\CountSketch$ with $R$ rows
\item
Let $Z$ be a $2$-approximation of $\|z\|_2$ from the dense $\CountSketch$
\item
Let $\eps$ be a fixed parameter and $\mu\sim\left[0.99,1.01\right]$ uniformly at random
\item
If $\widehat{z_{D(1)}}-\widehat{z_{D(2)}}<100\mu\eps Z$ or $\widehat{z_{D(2)}}<50\mu\eps Z$, return $\FAIL$
\item
Otherwise, the statistical test succeeds
\end{enumerate}
\end{mdframed}
\caption{Statistical test for perfect $L_p$ sampler in \algref{alg:perfect:lp:sample}.}
\label{fig:stat:test}
\end{figure*}

The following lemma states that adding a small perturbation $Y$ to a random variable $X$ minimally affects the probability of landing in a given region $I$, even when conditioning on an event $E$. 
The difference is at most $\O{\eps d M}$, ensuring that small noise does not significantly alter probability estimates.  
\begin{lemma}
\label{lem:interval:condition}
\cite{JayaramW18}
Let $\eps>0$ and let $X,Y\in\mathbb{R}^d$ be random variables such that $|Y|_\infty\le\eps$. 
Suppose $X$ is independent of some event $E$ and let $M>0$ be a parameter so that for every $i\in[d]$ and every $a<b$, we have $\PPr{a\le X_i\le b}\le M(b-a)$. 
Let $Z:=X+Y$ and $I=I_1\times I_2\times\cdots\times I_d\subset\mathbb{R}^d$, where each (possibly unbounded) interval $I_j=[a_j,b_j]\subset\mathbb{R}$. 
Then 
\[\PPr{Z\in I\,\mid\,E}=\PPr{Z\in I}+\O{\eps dM}.\]
\end{lemma}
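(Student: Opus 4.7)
The plan is to sandwich the event $\{Z \in I\}$ between two events involving only $X$, then use the independence of $X$ from $E$ to strip the conditioning, leaving only a purely geometric estimate based on the density bound $M$.

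First I would introduce an inflated and a deflated version of the target box, namely
\[
I^+ := \prod_{i=1}^d [a_i - \eps, b_i + \eps], \qquad I^- := \prod_{i=1}^d [a_i + \eps, b_i - \eps],
\]
with the convention that a coordinate interval is empty whenever $b_i - a_i < 2\eps$. Since $\|Y\|_\infty \le \eps$, the identity $Z = X + Y$ forces $X \in I^-$ to imply $Z \in I$, and $Z \in I$ to imply $X \in I^+$. Hence we have the sandwich
\[
\{X \in I^-\} \subseteq \{Z \in I\} \subseteq \{X \in I^+\},
\]
and the same sandwich holds after conditioning on the event $E$.

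Next I would exploit that $X$ is independent of $E$: this yields $\Pr[X \in I^\pm \mid E] = \Pr[X \in I^\pm]$, so both $\Pr[Z \in I]$ and $\Pr[Z \in I \mid E]$ lie in the common interval $[\Pr[X \in I^-], \Pr[X \in I^+]]$. Consequently
\[
\bigl|\Pr[Z \in I \mid E] - \Pr[Z \in I]\bigr| \;\le\; \Pr[X \in I^+] - \Pr[X \in I^-] \;=\; \Pr[X \in I^+ \setminus I^-].
\]

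Finally I would bound $\Pr[X \in I^+ \setminus I^-]$ by a union bound over the $2d$ boundary slabs. A point $x \in I^+ \setminus I^-$ must have some coordinate $x_i$ lying in $[a_i - \eps, a_i + \eps]$ or in $[b_i - \eps, b_i + \eps]$, so
\[
\Pr[X \in I^+ \setminus I^-] \;\le\; \sum_{i=1}^d \bigl(\Pr[a_i - \eps \le X_i \le a_i + \eps] + \Pr[b_i - \eps \le X_i \le b_i + \eps]\bigr) \;\le\; 4\eps d M
\]
by the one-dimensional density hypothesis $\Pr[a \le X_i \le b] \le M(b-a)$. This gives the claimed $\O{\eps d M}$ bound. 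There is no real obstacle here; the only thing to keep careful track of is that the density assumption is applied one coordinate at a time (to the marginals of $X_i$), which is exactly what the union bound over the $2d$ boundary slabs allows.
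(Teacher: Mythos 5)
Your proof is correct. Note that the paper itself does not supply a proof for this lemma — it cites it directly from \cite{JayaramW18} — but your sandwich argument (inflating and deflating the box by $\eps$, using $\|Y\|_\infty \le \eps$ to trap $\{Z\in I\}$ between the two $X$-events, stripping the conditioning by independence of $X$ from $E$, and then a union bound over the $2d$ boundary slabs each of width $2\eps$) is the standard and essentially the same argument that appears in \cite{JayaramW18}.
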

We now show for any sufficiently large number of duplications, the probability of passing the statistical test remains nearly unchanged when conditioning on any possible $D(1)$, differing by at most $n^{-\O{c}}$. 
Specifically, for any desired total variation distance $\gamma=\frac{1}{n^C}$, there exists $N(\gamma)$ such that for any $n^\zeta>N(\gamma)$, the failure condition incurs at most an additive $\gamma$ distortion in the resulting sampling distribution. 
Here, we remark that $\zeta$ need not be a constant. 

This statement is crucial for showing the correctness of perfect sampling, as if the failure probability depended heavily on the index that realizes the max, this could severely distort the sampling probabilities because it could be more likely to reject certain coordinates $i\in[n]$. 
On the other hand, the proof follows almost the exact same structure as the analysis for the statistical test in \cite{JayaramW18}, with the main difference being that the failure condition only relies on the estimated maximum value, rather than the max, the second max, and their gap, as in \cite{JayaramW18}. 
\begin{lemma}
For any constant $p\in(0,2]$ and $\nu=\Theta(n^{-\zeta/20})$, we have $\PPr{\neg\FAIL\,\mid\,D(1)}=\PPr{\neg\FAIL}+n^{-\O{c}}$ for all possible $D(1)\in[N]$. 
\end{lemma}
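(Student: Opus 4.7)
The plan is to decompose the algorithm's randomness into components that are independent of the anti-rank vector $D$ and those that contribute only a small multiplicative perturbation depending on $D(1)$, and then apply \Cref{lem:interval:condition} with $\mu$ as the smoothing variable to absorb the residual perturbation.

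First, I would invoke \Cref{lem:hidden:exps} on the duplicated universe of size $N = n^{\zeta+1}$ to rewrite each of the top-$\tau$ order statistics as
\[z_{D(k)} = \left[(1 \pm \O{n^{-\zeta/10}}) \sum_{\tau'=1}^{k} \frac{E_{\tau'}}{\Ex{\sum_{j=\tau'}^{N} |F_{D(j)}|^p}}\right]^{-1/p},\]
where $E_1, E_2, \ldots$ are hidden standard exponentials independent of the anti-rank vector $D$. Since $\|F\|_p^p = n^{\zeta}\|x\|_p^p$, dropping any individual coordinate changes the tail expectation by at most an $n^{-\zeta}$ factor, so each normalizing expectation equals $\|x\|_p^p$ up to a $(1 \pm n^{-\zeta})$ factor uniformly in $D$. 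For $\zeta$ chosen large enough that $n^{-\zeta/10} \le n^{-\O{c}}$ (and in particular $\nu = \Theta(n^{-\zeta/20})$ is of the same order), the values $z_{D(1)}$ and $z_{D(2)}$ are equal to $D$-free functions of $(E_1, E_2)$ times $(1 \pm n^{-\O{c}})$.

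Next, I would incorporate the linear sketch errors. Writing $\widehat{z_{D(k)}} = z_{D(k)} + Y_k$ with $|Y_k| \le \frac{4}{\sqrt{r}}\|z\|_2$ by \Cref{thm:dense:countsketch} and $Z = (1 \pm \O{\eps})\|z\|_2$ by \Cref{thm:ams}, both the $Y_k$ and $\|z\|_2$ are, conditional on the hidden exponentials, independent of the identity of $D(1)$ up to the same $(1 \pm n^{-\O{c}})$ multiplicative slack: the $Y_k$ depend only on the independent Gaussians of the sketch, while $\|z\|_2^2 = \sum_i (\{\calH_i\} \cup \{N_i\})$ is dominated by the aggregated tail norm, which is a $D$-free quantity times $(1 \pm n^{-\O{c}})$. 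The non-failure condition $\widehat{z_{D(1)}} - \widehat{z_{D(2)}} \ge 100\mu\eps Z$ and $\widehat{z_{D(2)}} \ge 50\mu\eps Z$ rearranges to
\[\mu \le \min\left(\frac{\widehat{z_{D(1)}} - \widehat{z_{D(2)}}}{100\eps Z}, \frac{\widehat{z_{D(2)}}}{50\eps Z}\right) \eqqcolon g(W) + \delta,\]
where $W$ collects all of the $D$-independent randomness ($E_1,\ldots,E_\tau$ together with the $\CountSketch$ and $\AMS$ Gaussians) and $|\delta| \le n^{-\O{c}}$ absorbs every residual $D(1)$-dependence accumulated in the previous step.

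Finally, since $\mu$ is uniform on $[0.99, 1.01]$, it is independent of $W$ and of $D$ and has density at most $M = 50$. Applying \Cref{lem:interval:condition} with $X = \mu$, $Y = \delta$, $I = (-\infty, g(W)]$, and $E$ the event $\{D(1) = i\}$ gives
\[\PPr{\neg \FAIL \mid D(1) = i} = \PPr{\mu \le g(W)} + \O{|\delta| \cdot M} = \PPr{\mu \le g(W)} + n^{-\O{c}},\]
and the right-hand side is independent of $i$; averaging over $i$ and subtracting yields the desired bound. The main obstacle is verifying cleanly that all of the non-$\mu$ randomness can indeed be decoupled from $D(1)$ so that the entire $D(1)$-dependence is packed into the $n^{-\O{c}}$ perturbation $\delta$; this parallels \cite{JayaramW18} but is actually simpler here because our failure condition only involves estimated values rather than a separate gap condition on true values.
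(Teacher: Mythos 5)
Your high-level plan matches the paper's: invoke \Cref{lem:hidden:exps} to rewrite $z_{D(k)}$ in terms of hidden exponentials independent of the anti-rank vector, push the residual $D(1)$-dependence into a small perturbation, and smooth it out via \Cref{lem:interval:condition} with $\mu$ as the anti-concentrated variable. Your packaging of the non-fail condition as a one-dimensional event $\mu \le \min(\cdot,\cdot)$ is a mild presentational simplification over the paper's two-dimensional $\Lambda(\vec{U^*},\mu U^*_Z) \succeq 0$ formulation, but both apply the same interval lemma to the same smoothing variable.

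There is, however, a genuine gap at the center of the argument. You assert that "the $Y_k$ depend only on the independent Gaussians of the sketch" and that $\|z\|_2$ is "a $D$-free quantity times $(1\pm n^{-\O{c}})$," and then let $\delta$ silently absorb "every residual $D(1)$-dependence." This is precisely where the bulk of the paper's proof lives, and the assertion is false as stated: the dense-$\CountSketch$ error is
\[
Y_k = \frac{1}{r}\sum_{a=1}^r \bigl(G_{a,D(k)}^2-1\bigr)z_{D(k)} + \frac{1}{r}\sum_{a=1}^r G_{a,D(k)}\sum_{j\ne D(k)} G_{a,j}\,z_j,
\]
which depends on the \emph{entire} vector $z$, whose coordinates are the exponential-scaled duplicates tied directly to the anti-rank vector $D$. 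The real work is to show that, after reindexing the Gaussians by anti-rank position and conditioning on the hidden exponentials $E_1,\ldots,E_N$ (which is what $\calE_1$ supplies), each bucket $A_i$ decomposes as a $D$-independent sum $\sum_{k\in B} g_{i,k}\sigma_k U_{D(k)}^{1/p}$ plus a term of magnitude $\O{\sqrt{\log n}\cdot\nu\cdot\|z\|_2}$. This requires splitting indices into a large set $B$ and a small set $S$, controlling the $S$-contribution by a level-set argument and the $B$-residual by the Marcinkiewicz--Zygmund inequality (Claim 4.3), and an analogous decomposition of the AMS median estimator into $U^*_Z$ plus a small piece (Claim 4.4), all combined into the statement $\Lambda(y,\mu Z)=\Lambda(\vec{U^*},\mu U^*_Z)+\overline{V}$ with the first term genuinely independent of $D(1)$ (Claim 4.5). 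Only after these claims does the function $g(W)$ in your proposal exist with $W$ independent of $D$, and only after normalizing the perturbation by $\eps Z$ (under the event that $Z$ is a $2$-approximation to $\|z\|_2$) does $|\delta| \le n^{-\O{c}}$ follow. You flag "verifying cleanly that all of the non-$\mu$ randomness can indeed be decoupled from $D(1)$" as the main obstacle, but you also claim it is "actually simpler here" — it is not; it is the whole proof, and your argument does not close without it.
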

\begin{proof}
Recall that $F\in\mathbb{R}^{N}$ is the duplicated vector where each coordinate of $x$ is repeated $n^c$ times. 
Let $\calE_1$ be the event that for every integer $k\in[1,N-n^{9\zeta/10})$, we have 
\[|z_{D(k)}|=U_{D(k)}^{1/p}\left(1\pm\O{\frac{1}{p}n^{-\zeta/10}}\right)^{1/p},\]
for a quantity $U_{D(k)}=\left(\sum_{\tau=1}^k\frac{E_\tau}{\Ex{\sum_{j=\tau}^N|F_{D(j)}|}}\right)^{-1}$ is independent of the anti-rank vector $D$. 
Then by \Cref{lem:hidden:exps}, we have $\PPr{\calE_1}=1-\O{e^{-n^{\zeta/3}}}$. 
Observe that conditioned on $\calE_1$, we can write $|z_{D(k)}|=U_{D(k)}^{1/p}(1\pm\O{\nu})$ for $\nu\ge n^{-\zeta/10}$, and more specifically, $|z_{D(k)}|=U_{D(k)}^{1/p}+U_{D(k)}^{1/p}\cdot V_{D(k)}$, for some random variable $V_{D(k)}$ not necessarily independent of $D(k)$, such that $|V_{D(k)}|=\O{\nu}$. 
Observe that for each column $G_i$, we have $A_i:=\langle G_i,x\rangle=\sum_{j\in[n^{c+1}]} g_{i,j}z_j$, where $g_{i,j}$ are independent normal random variables. 
Let $B=\{k\,\mid\,k\le N-n^{9\zeta/10}\}$ and $S=\{k\,\mid\, n^c\ge k>N-n^{9\zeta/10}\}$. 
For each $k\in[n^{c+1}]$, let $\sigma_k$ denote the sign of $z_k$, i.e., $\sigma_k=\sign(z_k)$. 
Then we have
\focsarxiv{
\begin{align*}
A_i=\sum_{k\in B} g_{i,k}\sigma_k U_{D(k)}^{1/p}&+\sum_{k\in B} g_{i,k}\sigma_k U_{D(k)}^{1/p}V_{D(k)}\\&+\sum_{k\in S} g_{i,k}z_k.
\end{align*}
}
{
\[A_i=\sum_{k\in B} g_{i,k}\sigma_k U_{D(k)}^{1/p}+\sum_{k\in B} g_{i,k}\sigma_k U_{D(k)}^{1/p}V_{D(k)}+\sum_{k\in S} g_{i,k}z_k.\]
}
\begin{claim}
\label{claim:small:terms}
For all $i$, we have
\focsarxiv{
\begin{align*}
\left\lvert\sum_{k\in B} g_{i,k}\sigma_k U_{D(k)}^{1/p}V_{D(k)}\right\rvert&+\left\lvert\sum_{k\in S} g_{i,k}z_k\right\rvert\\
&=\O{\sqrt{\log n}\cdot\nu\cdot\|z\|_2},
\end{align*}
}
{
\begin{align*}
\left\lvert\sum_{k\in B} g_{i,k}\sigma_k U_{D(k)}^{1/p}V_{D(k)}\right\rvert+\left\lvert\sum_{k\in S} g_{i,k}z_k\right\rvert
=\O{\sqrt{\log n}\cdot\nu\cdot\|z\|_2},
\end{align*}
}
with probability $1-\O{\log^2(n) n^{-c}}$. 
\end{claim}
\begin{proof}
By applying the Marcinkiewicz–Zygmund inequality, c.f., \Cref{thm:khintchine}, to a sufficiently high moment, 
\[\left\lvert\sum_{k\in S} g_{i,k}z_k\right\rvert=\O{\sqrt{\log n}\cdot\|z_S\|_2}\]
with probability $1-n^{-c}$, where $z_S$ denotes the restriction of the vector $z$ to the coordinates in $S$.  
Since $S$ consists of the smallest $n^{9\zeta/10}$ values of $|z_i|$, then we have 
\[\|z_S\|_2^2\le\frac{n^{9\zeta/10}}{N}\cdot\|z\|_2^2,\]
so that $\left\lvert\sum_{k\in S} g_{i,k}g_{i,k}z_k\right\rvert=\O{\sqrt{\log n}\cdot n^{-\zeta/20}\cdot\|z\|_2}$. 
Moreover, for $k\le N-n^{9\zeta/10}$, we have $|z_{D(k)}|<2U_{D(k)}^{1/p}$ and $|V_{D(k)}|=\O{\nu}$. 
Hence, by the Marcinkiewicz–Zygmund inequality, c.f., \Cref{thm:khintchine}, we have that with probability $1-n^{-c}$,
\[\left\lvert\sum_{k\in B} g_{i,k}z_k\right\rvert=\O{\sqrt{\log n}\cdot\nu\cdot\|z\|_2}.\]
Taking a union bound over all $\O{\log^2 n}$ indices $i\in[R]$ and scaling $c$ sufficiently, the desired result follows.
\end{proof}
Let $\calE_2$ denote the event that \Cref{claim:small:terms} holds. 
Conditioned on $\calE_2$, for each $i$, we can write
\[A_i=\sum_{k\in B} g_{i,k}\sigma_k U_{D(k)}^{1/p}+\calV_i,\]
for some random variable $\calV_i$ such that $|\calV_i|=\O{\sqrt{\log n}\cdot\nu\cdot\|z\|_2}$. 
Now, let $U_i:=\sum_{k\in B} g_{i,k}\sigma_k U_{D(k)}^{1/p}$, so that the estimate of $|z_{D(k)}|$ is $y_{D(k)}=\frac{1}{r}\sum_{i\in[r]}U_i g_{i,D(k)}$. 
We use the following claim from \cite{JayaramW18}, whose proof is the same for our case with the dense $\CountSketch$, through separately handling the ``large'' and ``small'' coordinates. 
\begin{claim}
\label{claim:ams:fail}
\cite{JayaramW18}
With probability $1-\O{n^{-c}}$, the estimate $Z$ of $\|z\|_2$ can be decomposed as $Z=U^*_Z+\calV_Z$, where $U^*_Z=\frac{5}{4}\median_i\left(\left\lvert\sum_{k\in B} g_{i,k}\sigma_k U_{D(k)}^{1/p}\right\rvert\right)$ is independent of the anti-rank vector $D$ and $|\calV_Z|=\O{\sqrt{\log(n)}\cdot\nu\cdot\|z\|_2}$. 
\end{claim}
Let $\calE_3$ be the event corresponding to \Cref{claim:ams:fail}, so that $\PPr{\calE_3}\ge 1-\O{n^{-c}}$ by \Cref{claim:ams:fail}.  

We define the deterministic function $\Lambda(x,v)=(\beta_1,\beta_2)$, where for an input vector $x$ and a scalar $v$, where $\beta_1=x_{D(1)}-x_{D(2)}-100\eps v$ and $\beta_2=x_{D(2)}-50\eps v$, so that $\Lambda(x,v)\succeq\vec{\mathbf{0}}$ coordinate-wise if and only if $\neg\FAIL$. 
We use the following claim from \cite{JayaramW18}, which we show for completeness, due to the change in the $\CountSketch$ data structure. 
\begin{claim}
\label{claim:decmopose:fail}
\cite{JayaramW18}
Conditioned on $\calE_1,\calE_2,\calE_3$, we have $\Lambda(y,\mu Z)=\Lambda(\vec{U^*},\mu U^*_Z)+\overline{V}$, where $\Lambda(\vec{U^*},\mu U^*_Z)$ is independent of $z_{D(1)}$ and $\|\overline{V}\|_\infty=\O{\sqrt{\log(n)}\cdot\nu\cdot\|z\|_2}$. 
\end{claim}
\begin{proof}
Conditioned on $\calE_1,\calE_2,\calE_3$, we have already shown that $|\calV_Z|=\O{\sqrt{\log(n)}\cdot\nu\cdot\|z\|_2}$ and $|\calV^*_{D(k)}|=\O{\sqrt{\log(n)}\cdot\nu\cdot\|z\|_2}$ for all $k\in[n^c]$. 
Since $y=\vec{U^*}+\vec{\calV^*}$ for $\vec{U}^*_{D(k)}=U^*_{D(k)}$ and $\vec{\calV}^*_{D(k)} = V^*_{D(k)}$, then it follows that $\vec{\calV}^*$ can affect the two largest coordinates of $y$ by at most $|\vec{\mathcal{V}}^|_\infty =\O{\sqrt{\log(n)}\nu|z|_2}$. 
Likewise, $|\calV_Z|$ can only affect the value of $Z$ by at most $\O{\sqrt{\log(n)}\nu|z|_2}$, which concludes the proof of the decomposition. 

It remains to show the independence claim. 
To that end, observe that $\Lambda(\vec{U^*},\mu U^*_Z)$ is a deterministic function of the hidden exponentials $E_1,\ldots,E_N$, the random Gaussians $g_{i,j}$, and the uniform random variable $\mu$. 
The joint distribution of these quantities remains marginally independent of the anti-rank vector $D$, thus completing the claim.
\end{proof}
Given \Cref{claim:decmopose:fail}, it remains to show that $\Lambda(\vec{U^*},\mu U^*_Z)$ is anti-concentrated. 
To that end, let $\calE_4$ denote the event that $\frac{1}{n^{\zeta/40}}\le g_{i,j}\le n$ for all indices $j\in\cup_{i=1}^n\calH_i$. 
From the probability density function of normal random variables, we have $\PPr{\calE_4}\ge1-\O{n^{-\zeta/40+2}}$. 
Then for any interval $I$, we have
\focsarxiv{
\begin{align*}
\PPr{\Lambda_1(\vec{U^*},\mu U^*_Z)\in I}&=\PPr{\mu\in I'/(100\eps U^*_Z)}\\
&=\O{|I|/(\eps U^*_Z)},\\
\PPr{\Lambda_2(\vec{U^*},\mu U^*_Z)\in I}&=\PPr{\mu\in I''/(50\eps U^*_Z)}\\
&=\O{|I|/(\eps U^*_Z)},
\end{align*}
}
{
\begin{align*}
\PPr{\Lambda_1(\vec{U^*},\mu U^*_Z)\in I}&=\PPr{\mu\in I'/(100\eps U^*_Z)}=\O{|I|/(\eps U^*_Z)},\\
\PPr{\Lambda_2(\vec{U^*},\mu U^*_Z)\in I}&=\PPr{\mu\in I''/(50\eps U^*_Z)}=\O{|I|/(\eps U^*_Z)},
\end{align*}
}
where $I$ denotes the size of an interval $I$, and $I'$ and $I''$ are the results of shifting the interval $I$ by term that are independent of $\mu$ and then scaling the interval $I$ by terms that have magnitude in $\left[\frac{1}{n^{\zeta/40}},n\right]$ and are independent of $\mu$. 
Let $\calE_5$ be the event that $U^*_Z$ is a $2$-approximation of $\|z\|_2$, so that $\frac{1}{2}\|z\|_2\le U^*_Z\le 2\|z\|_2$ and $\PPr{\calE_5}\ge 1-\O{n^{-c}}$. 
Then we have
\begin{align*}
\PPr{\Lambda(\vec{U^*},\mu U^*_Z)\in I}&=\O{n^{\zeta/40+1}|I|/(\eps\|z\|_2)},
\end{align*}
By \Cref{lem:interval:condition}, we have that conditioned on $\calE_1\cap\calE_2\cap\calE_3\cap\calE_4\cap\calE_5$,
\focsarxiv{
\begin{align*}
\mathbf{Pr}[&\Lambda(\vec{U^*},\mu U^*_Z)\ge 0\,\mid\,D(1)]\\
&=\PPr{\Lambda(\vec{U^*},\mu U^*_Z)\ge 0}\pm\O{\nu\cdot n^{\zeta/40+1}\log n}.
\end{align*}
}
{
\[\PPr{\Lambda(\vec{U^*},\mu U^*_Z)\ge 0\,\mid\,D(1)}=\PPr{\Lambda(\vec{U^*},\mu U^*_Z)\ge 0}\pm\O{\nu\cdot n^{\zeta/40+1}\log n}.\]
}
Since $\calE_1\cap\calE_2\cap\calE_3\cap\calE_4\cap\calE_5$ holds with probability $1-\O{n^{-\zeta/40+2}}$ and we can pick $\nu=\O{n^{-\zeta/20}}$, then it suffices to scale $c$ accordingly. 
\end{proof}
To lower bound the probability that the $L_p$ sampler will fail due to the statistical test, we first recall the following statement relating the $L_2$ norm of $z$ with the $L_p$ norm of the duplicated vector. 
\begin{lemma}[Lemma 13 in \cite{JayaramW18}]
\label{lem:mass:2p}
With probability at least $0.89$, we have that $\|z\|_2<\eps\cdot n^{c/p}\cdot\O{\|x\|_p}$, where $\eps=\Theta(1)$ for $p<2$. 
\end{lemma}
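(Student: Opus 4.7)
My approach is to exploit the $(p/2)$-stable behavior of the random variables $1/E^{2/p}$ when $E$ is a standard exponential and $p<2$. First I would decompose
\[\|z\|_2^2 \;=\; \sum_{i=1}^n x_i^2 \, T_i, \qquad T_i := \sum_{j=1}^{n^c} \frac{1}{e_{i,j}^{2/p}},\]
with the $T_i$ independent across $i$. Setting $Y := 1/E^{2/p}$, one computes $\PPr{Y>t}=1-\exp(-t^{-p/2})\sim t^{-p/2}$, so for $p\in(0,2)$ the variable $Y$ lies in the normal domain of attraction of a positive $(p/2)$-stable law.

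By the generalized central limit theorem, the normalized partial sum $T_i/n^{2c/p}$ converges in distribution to a positive $(p/2)$-stable random variable $S_i$ whose scale depends only on $p$, and the $S_i$ are i.i.d.\ across $i$. Invoking $(p/2)$-stability with nonnegative weights $\{x_i^2\}$ yields
\[\sum_{i=1}^n x_i^2 \, S_i \; \stackrel{d}{=}\; \left(\sum_{i=1}^n (x_i^2)^{p/2}\right)^{2/p} \cdot S \;=\; \|x\|_p^{\,2}\cdot S\]
for a single positive $(p/2)$-stable random variable $S$. Combining, $\|z\|_2 \approx n^{c/p}\,\|x\|_p\,\sqrt{S}$. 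Since $S$ is almost surely finite with a continuous density on $(0,\infty)$, there is a $p$-dependent constant $K$ with $\PPr{S>K}<0.11$, so taking $\eps := \sqrt{K}=\Theta(1)$ gives $\|z\|_2 < \eps\cdot n^{c/p}\cdot\O{\|x\|_p}$ with probability at least $0.89$.

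\textbf{Main obstacle.} The central difficulty is that the GCLT delivers only convergence in distribution, while we need the bound at the fixed (though large) scale $n^c$. To make the argument quantitative I would replace the limiting step by a truncation-and-moments argument: for any $r\in(0,p/2)$ the fractional moment $\Ex{Y^r}=\Gamma(1-2r/p)$ is finite, and the subadditivity $(\sum_j y_j)^r\le\sum_j y_j^r$ for $r\in(0,1)$, applied after truncating each $1/e_{i,j}^{2/p}$ at a threshold $\Theta(n^{2c/p})$, yields a Markov-type tail bound on the bulk of each $T_i$ with the correct $n^{2c/p}$ normalization. The $O(1)$-in-expectation number of surviving large entries is then controlled by a separate crude bound on $\min_j e_{i,j}$, which is $\Omega(n^{-c-O(1)})$ with high probability. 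A second application of these fractional-moment bounds, together with the Hölder-type inequality $\sum_i x_i^2 u_i \le \|x\|_p^2 \cdot (\sum_i u_i^{p/(p-2)})^{(p-2)/p}$-style estimate adapted to fractional powers, reproduces the stable-law scaling $n^{c/p}\|x\|_p$ with an explicit $p$-dependent constant. This is precisely the style of argument carried out in Lemma~13 of \cite{JayaramW18}, from which the statement is quoted, so I would invoke that lemma directly for the final quantitative bound.
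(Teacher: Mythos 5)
The paper states this result as a citation to Lemma~13 of \cite{JayaramW18} and does not re-prove it, so there is no in-paper proof to compare against; your own final step (``invoke that lemma directly'') is essentially what the paper itself does. Your heuristic derivation via the GCLT and $(p/2)$-stability is the right intuition: the tail computation $\PPr{1/E^{2/p}>t}\sim t^{-p/2}$, the normalization $n^{2c/p}$, and the stability identity $\sum_i x_i^2 S_i \stackrel{d}{=} \|x\|_p^2\,S$ are all correct and explain where the scaling $n^{c/p}\|x\|_p$ comes from.

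However, the quantitative fix you sketch has a genuine error. The inequality $\sum_i x_i^2 u_i \le \|x\|_p^2\bigl(\sum_i u_i^{p/(p-2)}\bigr)^{(p-2)/p}$ is H\"older with exponent pair $(p/2,\,p/(p-2))$; for $p<2$ we have $p/2<1$ and $p/(p-2)<0$, so H\"older does not apply in this form and the inequality is simply false here. The fractional-moment bound also does not close on its own: with $r<p/2$ you get $\Ex{T_i^r}\le n^c\Gamma(1-2r/p)$, and Markov at threshold $K n^{2c/p}$ gives $n^{c(1-2r/p)}\Gamma/K^r$, which is \emph{growing} in $n$ precisely because $r<p/2$; some additional idea (truncation plus a separate handling of the surviving large terms, as you begin to describe) is genuinely needed, and the details are where the work is. The cleaner route, and the one Jayaram--Woodruff actually take, bypasses the per-$i$ decomposition entirely: treat all $N=n^{c+1}$ coordinates of $z$ at once and use the order-statistics representation $|z_{D(k)}|\approx \|F\|_p\,(E_1+\cdots+E_k)^{-1/p}\approx \|F\|_p\,k^{-1/p}$ for the duplicated vector $F$ with $\|F\|_p = n^{c/p}\|x\|_p$; then $\|z\|_2^2 = \sum_k |z_{D(k)}|^2 \lesssim \|F\|_p^2\sum_k k^{-2/p}$, and the sum converges to an $O(1)$ constant precisely because $p<2$ gives $2/p>1$. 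This level-set argument both avoids the domain-of-attraction machinery and yields the $0.89$ constant directly from standard concentration of exponential order statistics.
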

This bound shows that the element $z_{D(1)}$ largest in magnitude is an $L_2$ heavy-hitter of $z$ with good probability. 
Thus, we can lower bound the probability that the $L_p$ sampler will fail due to the statistical test as follows. 
\begin{lemma}
There exist universal parameters $\eta=\Theta(1)$ for $p<2$ and $\eps=\frac{1}{\Theta(\sqrt{\log n})}$ such that the statistical test succeeds with probability at least $0.75$. 
\end{lemma}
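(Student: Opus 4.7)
The plan is to show that under the high-probability events $\calE_1$--$\calE_5$ from the preceding lemma, the unperturbed decomposition $\Lambda(\vec{U^*},\mu U^*_Z)$ is coordinate-wise non-negative with probability at least roughly $0.8$, so that after absorbing the $\O{\sqrt{\log n}\cdot\nu\cdot\|z\|_2}$ perturbation (which is $o(\|z\|_2)$), the statistical test succeeds with probability at least $0.75$. Concretely, it suffices to establish three additional constant-probability events: (i) a ceiling $\|z\|_2\le\eta\cdot n^{c/p}\cdot\O{\|x\|_p}$ for a universal $\eta=\Theta(1)$; (ii) a lower bound $|z_{D(1)}|=\Omega(\|z\|_2)$; and (iii) a constant-factor separation $|z_{D(1)}|-|z_{D(2)}|=\Omega(\|z\|_2)$ together with $|z_{D(2)}|=\Omega(\|z\|_2)$.

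For (i), I would invoke \Cref{lem:mass:2p} directly, which yields the ceiling with probability at least $0.89$ for $p<2$. For (ii), by max-stability of scaled exponentials, $|z_{D(1)}|$ is equidistributed with $\|V\|_p/E^{1/p}=n^{c/p}\|x\|_p/E^{1/p}$ for a standard exponential $E$, so by choosing a constant $c_1$ small relative to $\eta$, the event $|z_{D(1)}|\ge c_1\cdot n^{c/p}\|x\|_p$ has constant probability arbitrarily close to $1$ and, combined with (i), yields $|z_{D(1)}|=\Omega(\|z\|_2)$. For (iii), the hidden-exponentials representation in \Cref{lem:hidden:exps} gives (up to $1\pm o(1)$ corrections)
\[
\frac{|z_{D(2)}|^p}{|z_{D(1)}|^p}=\frac{E_1/R_1}{E_1/R_1+E_2/R_2}=\frac{E_1}{E_1+E_2}\,(1\pm o(1)),
\]
where $R_\tau:=\mathbb{E}[\sum_{j\ge\tau}|F_{D(j)}|^p]$ satisfy $R_1/R_2=1\pm o(1)$. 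Since $E_1/(E_1+E_2)$ is uniform on $[0,1]$, restricting it to a compact subinterval of $(0,1)$ simultaneously produces $|z_{D(1)}|-|z_{D(2)}|=\Omega(|z_{D(1)}|)=\Omega(\|z\|_2)$ and $|z_{D(2)}|=\Omega(|z_{D(1)}|)=\Omega(\|z\|_2)$ with constant probability close to $1$.

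Combining these with the per-coordinate error $\O{\|z\|_2/\sqrt{r}}=\O{\|z\|_2/\sqrt{\log n}}$ from \Cref{thm:dense:countsketch} (using $r=\O{\log n}$ rows to union-bound over the $n'=\O{n\log n}$ coordinates of $z$) and the $2$-approximation of $\|z\|_2$ by $Z$ from \Cref{thm:ams}, the quantities $\widehat{z_{D(1)}}-\widehat{z_{D(2)}}$ and $\widehat{z_{D(2)}}$ remain $\Omega(\|z\|_2)$, whereas the failure thresholds $100\mu\eps Z$ and $50\mu\eps Z$ are $\O{\|z\|_2/\sqrt{\log n}}$ once $\eps=1/\Theta(\sqrt{\log n})$ with a sufficiently small hidden constant. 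A union bound over all failure modes then yields success probability at least $0.75$. The main obstacle is arranging for the three constant-probability events (ceiling, max, gap) to hold jointly with total probability $\gtrsim 0.8$ despite being governed by correlated random variables; this is handled by \Cref{lem:hidden:exps}, whose iid hidden exponentials $E_1,E_2,\ldots$ are independent of the anti-rank vector and effectively decouple $|z_{D(1)}|$, the ratio $|z_{D(2)}|/|z_{D(1)}|$, and $\|z\|_2$ from one another up to vanishing corrections.
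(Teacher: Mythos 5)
Your proposal is correct and takes essentially the same approach as the paper: invoke \Cref{lem:mass:2p} for the $L_2$-to-$L_p$ ceiling, use the hidden-exponential representation from \Cref{lem:hidden:exps} to derive $|z_{D(1)}|\approx\|F\|_p/E_1^{1/p}$ and $|z_{D(2)}|\approx\|F\|_p/(E_1+E_2)^{1/p}$ for independent exponentials, argue the gap, second-max, and max are all $\Theta(\|F\|_p)$ with constant probability, and absorb the $\O{\eps\|z\|_2}$ CountSketch/AMS errors via a union bound. The only cosmetic difference is that you explicitly unpack the uniformity of $E_1/(E_1+E_2)$ where the paper compresses the same reasoning into ``from the probability density function of exponential random variables, with probability at least $7/8$.''
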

\begin{proof}
Recall that $F\in\mathbb{R}^{N}$ is the duplicated vector where each coordinate of $x$ is repeated $n^c$ times, so that $\|F\|_p=n^{c/p}\cdot\|x\|_p$
Let $\calE_1$ be the event that $\|z\|_2<\eta\cdot\O{\|F\|_p}$, where $\eta=\Theta(1)$ for $p<2$, so that $\PPr{\calE_1}\ge 0.89$ by \Cref{lem:mass:2p}. 
Let $\calE_2$ be the event that the estimate $Z$ satisfies $\frac{1}{2}\|z\|_2\le Z\le 2\|z\|_2$ so that $\calE_2$ occurs with high probability by \Cref{thm:ams}. 
Let $\calE_3$ be the event that dense $\CountSketch$ outputs an estimate vector $y$ such that $\max_{i\in[n]}\left\lvert y_i-|z_i|\right\rvert\le 2\eps\|z\|_2$, so that $\calE_3$ occurs with high probability by \Cref{thm:dense:countsketch}. 

We have $|z_{D(1)}|=\frac{\|F\|_p}{E_1^{1/p}}$ and $|z_{D(2)}|=\frac{\|F\|_p}{(E_1+E_2(1\pm n^{-c+1}))^{1/p}}$ for independent exponential random variables $E_1$ and $E_2$. 
From the probability density function of exponential random variables, we have that with probability at least $\frac{7}{8}$, we simultaneously have $|z_{D(1)}|=\Theta(\|F\|_p)$ and $|z_{D(2)}|=\Theta(\|F\|_p)$, and $|z_{D(1)}|-|z_{D(2)}|=\Theta(\|F\|_p)$, with sufficiently scaled constants. 
Hence, conditioned on $\calE_1$, $\calE_2$, and $\calE_3$, we have $y_{D(1)}$ and $y_{D(2)}$ have additive error $\O{\|F\|_p}$ to the values of $z_{D(1)}$ and $z_{D(2)}$ with a sufficiently small constant. 
Therefore, we have that $y_{D(1)}-y_{D(2)}>100\eps\mu Z$ and $y_{D(2)}>50\eps\mu Z$ with probability at least $0.75$. 
\end{proof}

Finally, we remark that as stated, the Gaussian random variables, exponential random variables, and the corresponding simulations are continuous random variables, which cannot be stored in the finite precision necessary for streaming algorithms. 
To that end, we use a standard procedure of truncating the binary representation of each continuous random variable after $\O{\log n}$ bits for a sufficiently large constant. 
This results in at most an additive error of $\frac{1}{\poly(n)}$ error for each cell of the dense $\CountSketch$ matrix, which can be absorbed into the adversarial error $V_{i,j}$, where $|V_{i,j}| = \O{\nu}\cdot|A_i|$, which already occurs in each of the cells, as previously shown. 
Hence, each random variable requires $\O{\log n}$ bits to encode. 

\subsection{Justification of Simulation Oracle via Poisson Approximation}
To achieve $\polylog(n)$ update time, our algorithm will require a new simulation oracle that bypasses the explicit duplication of the $k=n^c$ exponential random variables for each coordinate $i\in[n]$. 
To do this, our algorithm needs to simulate, for each $i \in [n]$, the joint distribution of the largest values and the aggregate norm of the rest. 
Specifically, let $\{V_{(1)}, \ldots, V_{(k)}\}$ be the order statistics of $\{ (e_{i,j})^{-1/p} \}_{j=1}^k$. 
For $\tau = \Theta(\log n)$, we are interested in the joint statistics of the ``head'' and ``tail'', compactly represented as follows:
\[ 
\mathcal{S}_{\text{true}}^{(k)} = \left( V_{(1)}, \ldots, V_{(\tau)}, \sum_{j=\tau+1}^k V_{(j)}^2 \right).
\]
We observe that we can efficiently simulate a random vector $\mathcal{S}_{\text{sim}}^{(\infty)}$ whose joint distribution is statistically indistinguishable from that of $\mathcal{S}_{\text{true}}^{(k)}$ after normalization, using $\polylog(n)$ time. 

The algorithm then proceeds as described in \algref{alg:perfect:lp:sample}, simulating the top $\tau$ values $\mathcal{H}_i = \{\mathcal{V}_{(1)}, \ldots, \mathcal{V}_{(\tau)}\}$ for each $i\in[n]$, by sampling from the known \emph{limiting} distribution of the top $\tau$ order statistics obtained in the limit of large $k$. 
We then shall use an oracle from \Cref{sec:calculation} to simulate the sum-of-squares of the remaining tail, $\widetilde{\sigma_i^2}$, conditioned on the value of $\mathcal{V}_{(\tau)}$. 
In this section, we justify the correctness of the limiting distribution. 

We first recall the definitions of Bernoulli and Poisson point processes.
\begin{definition}[Bernoulli point process]
A \emph{Bernoulli point process} on a space $\mathcal{X}$ with parameter function $p: \mathcal{X} \to [0,1]$ is a random finite subset $\Phi \subseteq \mathcal{X}$ obtained by including each point $x \in \mathcal{X}$ independently with probability $p(x)$.  
\end{definition}

\begin{definition}[Poisson Point Process]
A \emph{Poisson point process} on a space $\mathcal{X}$ with intensity measure $\Lambda$ is a random subset $\Phi \subseteq \mathcal{X}$ such that:
\begin{enumerate}
\item 
The total number of points $N = |\Phi|$ follows the Poisson distribution $\Poi(\Lambda(\mathcal{X}))$.
\item 
Conditioned on $N=k$, the points of $\Phi$ are independent and identically distributed according to the normalized measure $\Lambda / \Lambda(\mathcal{X})$.
\end{enumerate}
\end{definition}
An important quantifier for a Poisson point process is its intensity measure and the related notion of the intensity density. 
\begin{definition}[Intensity Measure and Intensity Density]
Let $\Phi$ be a Poisson point process on a space $\mathcal{X}$.
The \emph{intensity measure} $\Lambda$ of $\Phi$ is defined by $\Lambda(A) = \Ex{\Phi(A)}$, giving the expected number of points of $\Phi$ in any region $A \subseteq \mathcal{X}$.

If $\Lambda$ is absolutely continuous with respect to a reference measure $\mu$, then the \emph{intensity density} is the derivative $\lambda(x) = \frac{d\Lambda}{d\mu}(x)$, which specifies the expected number of points per unit $\mu$-mass around $x$.
\end{definition}

We next recall the following formulation of Le Cam's Theorem:
\begin{theorem}[Le Cam's Theorem]
\label{thm:le:cam}
\cite{le1960approximation,lecam1965distribution,steele1994cam}
Let $X_1,\ldots,X_n$ be independent Bernoulli random variables with
$\PPr{X_i=1}=p_i$ and $\PPr{X_i=0}=1-p_i$. 
Let
\[S=\sum_{i=1}^n X_i,\qquad \lambda=\Ex{S}=\sum_{i=1}^n p_i,\]
and let $\Poi(\lambda)$ denote the Poisson distribution with mean $\lambda$. 
Then the total variation distance between the distribution of $S$ and $\Poi(\lambda)$ is at most $\sum_{i=1}^n p_i^2$. 
\end{theorem}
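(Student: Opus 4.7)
The plan is to prove the bound via a coupling argument. The idea is, for each coordinate $i$, to realize $X_i \sim \text{Bernoulli}(p_i)$ and $Y_i \sim \Poi(p_i)$ on a common probability space so that the event $\{X_i \ne Y_i\}$ has probability at most $p_i^2$. Assembling these couplings independently across $i$, and invoking the fact that sums of independent Poissons are Poisson, will reduce the theorem to a union bound together with the standard coupling inequality for total variation.

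For the per-coordinate coupling, the inequality $e^{-p_i} \ge 1 - p_i$ gives $\Pr[Y_i = 0] \ge \Pr[X_i = 0]$, so it is consistent to enforce $Y_i = 0$ on the entire event $\{X_i = 0\}$. On the complementary event $\{X_i = 1\}$, which has probability $p_i$, I would allocate $Y_i = 1$ with joint probability $p_i e^{-p_i}$ (matching the Poisson mass at $1$), assign additional mass $e^{-p_i} - (1-p_i)$ to the cell $\{X_i = 1, Y_i = 0\}$ to complete the Poisson marginal at zero, and place the leftover mass $1 - e^{-p_i} - p_i e^{-p_i}$ on $\{X_i = 1, Y_i \ge 2\}$, drawn from $\Poi(p_i)$ conditioned on $\{Y_i \ge 2\}$. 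A direct check confirms both marginals and yields
\[
\Pr[X_i \ne Y_i] \;=\; p_i - p_i e^{-p_i} \;=\; p_i(1 - e^{-p_i}) \;\le\; p_i^2,
\]
using $e^{-p_i} \ge 1 - p_i$ once more.

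I would then take independent copies of these couplings across $i \in [n]$ and set $T = \sum_{i=1}^n Y_i$. Independence and stability of the Poisson under convolution give $T \sim \Poi(\lambda)$. A union bound over the $n$ disagreement events yields
\[
\Pr[S \ne T] \;\le\; \sum_{i=1}^n \Pr[X_i \ne Y_i] \;\le\; \sum_{i=1}^n p_i^2,
\]
and the standard coupling inequality $d_{\text{TV}}(S, T) \le \Pr[S \ne T]$ then delivers the stated bound.

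There is no real obstacle here — the argument is classical. The only point requiring a little care is verifying that the per-coordinate coupling really realizes both marginals, which reduces to the arithmetic identity $(e^{-p_i} - 1 + p_i) + p_i e^{-p_i} + (1 - e^{-p_i} - p_i e^{-p_i}) = p_i$ on the $X_i = 1$ slice together with the matching check on the $Y_i$ side. An equivalent and perhaps tidier route is to compute $d_{\text{TV}}(\text{Bernoulli}(p), \Poi(p)) = p(1-e^{-p}) \le p^2$ directly from the definition of total variation and then apply subadditivity of $d_{\text{TV}}$ under independent convolution; this yields the same bound without explicitly exhibiting a joint distribution.
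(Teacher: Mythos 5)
Your proposal is correct and is the standard coupling proof of Le Cam's theorem; the paper states this result without proof, citing the original references, and your argument (in particular, the per-coordinate coupling, which is in fact the maximal coupling realizing $\Pr[X_i \ne Y_i] = p_i(1-e^{-p_i})$) is exactly the classical route used in those sources.
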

We show that show that the true Binomial point process and the simulated Poisson point process have a small total variation distance through a hybrid argument. 
Namely, we consider the following:
\focsarxiv{
\newline\noindent
\textbf{True Binomial point process $\calP_k$.} 
This is a BPP consisting of $k=n^C$ i.i.d. points $\{V_i^*\}_{i=1}^k$, where $V_i^* = V_i / k^{1/p}$ and $\PPr{V_i^* > y} = 1 - e^{-y^{-p}/k}$.
\newline\noindent
\textbf{Simulated Poisson point process $\calP_\infty$.}
This is a PPP with intensity measure $\lambda_\infty$ defined by its tail: $\lambda_\infty(y, \infty) = y^{-p}$.
\newline\noindent
\textbf{Intermediate Poisson point process $\calP'_k$.}
To bridge the gap, we define a ``proxy'' PPP whose intensity measure $\lambda_k$ matches the expected value of the BPP. 
Its tail is $\lambda_k(y, \infty) = k \cdot \PPr{V_i^* > y} = k(1 - e^{-y^{-p}/k})$.
\newline

}
{
\begin{description}
\item[True Binomial point process $\calP_k$.] 
This is a BPP consisting of $k=n^C$ i.i.d. points $\{V_i^*\}_{i=1}^k$, where $V_i^* = V_i / k^{1/p}$ and $\PPr{V_i^* > y} = 1 - e^{-y^{-p}/k}$.
\item[Simulated Poisson point process $\calP_\infty$.] This is a PPP with intensity measure $\lambda_\infty$ defined by its tail: $\lambda_\infty(y, \infty) = y^{-p}$.
\item[Intermediate Poisson point process $\calP'_k$.] 
To bridge the gap, we define a ``proxy'' PPP whose intensity measure $\lambda_k$ matches the expected value of the BPP. 
Its tail is $\lambda_k(y, \infty) = k \cdot \PPr{V_i^* > y} = k(1 - e^{-y^{-p}/k})$.
\end{description}
}
We remark that \algref{alg:perfect:lp:sample} performs the process $\calP_\infty$ by directly implementing the two defining statistical properties of a PPP with the specific power-law intensity $\lambda_\infty(y, \infty) = y^{-p}$, while focusing only on the interval $A=[y_0, \infty)$. 
This simulation is justified because it is a closed-form, efficient method that relies on known principles of point process theory:
\begin{enumerate}
\item \textbf{Poisson Count Law:} 
In a PPP, the number of points $N$ in a set $A$ must follow a Poisson distribution with mean $\lambda_A = \mu(A)$. 
\algref{alg:perfect:lp:sample} correctly implements this by sampling $N$ from $\Poi(y_0^{-p})$, where $y_0^{-p}$ is the expected number of points $\mathcal{P}_\infty$ has in the region $A$.
\item 
\textbf{I.I.D. Power-Law Locations:} 
Conditioned on having $N$ points, the locations of the points must be independent and identically distributed (i.i.d.) according to the normalized density function derived from the intensity $\lambda_\infty$. 
For intensity $y^{-p}$, the location density is proportional to $y^{-p-1}$. 
\algref{alg:perfect:lp:sample} can use the inverse CDF method to sample from this exact power-law density to generate the $N$ locations, $\{Y_i\}_{i=1}^N$, ensuring they are placed according to the $\calP_\infty$ distribution within $A$.
\end{enumerate}
By accurately and efficiently generating the correct count and location statistics within the interval $A$, \algref{alg:perfect:lp:sample} simulates the idealized process $\calP_\infty$. 
Thus, it remains to argue that the Poisson process $\calP_\infty$ has small total variation distance from the Binomial process $\calP_k$ through a hybrid argument. 
We now argue the first part of the hybrid argument. 
\begin{lemma}
\label{lem:tvd:first}
Let $y_0=\frac{1}{n^{C'}}$ for some constant $C'>0$ such that $\poly\left(\frac{1}{y_0}\right)\ll k$ and let $A=[y_0,\infty)$. 
Then the total variation distance between $\calP_k(A)$ and $\calP'_k(A)$ is at most $\O{\frac{1}{\sqrt{k}}}$. 
\end{lemma}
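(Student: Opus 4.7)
The plan is to exploit the fact that both processes, restricted to $A$, admit a ``count then place i.i.d.'' decomposition in which the conditional location distributions coincide exactly, so the total variation distance collapses to a comparison between a Binomial and a Poisson count with matched mean, at which point Le Cam's theorem (\Cref{thm:le:cam}) finishes the job.

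First, I would decompose $\calP_k|_A$ and $\calP'_k|_A$. For $\calP_k$, the number of points in $A$ is $N_B := \sum_{i=1}^k \mathbbm{1}[V_i^* \ge y_0] \sim \mathrm{Binomial}(k, p_A)$, where
\[
p_A := \PPr{V_1^* \ge y_0} = 1 - e^{-y_0^{-p}/k},
\]
and conditional on $N_B$, the points are i.i.d.\ from the truncated density $f_{V^*}|_A / p_A$. For $\calP'_k$, the restriction property of Poisson point processes says $\calP'_k|_A$ is itself a PPP with intensity measure $\lambda_k|_A$; its count is $N_P \sim \Poi(\lambda_k(A)) = \Poi(k p_A)$ by construction, and conditional on $N_P$, its points are i.i.d.\ from the normalized intensity $\lambda_k|_A / \lambda_k(A)$, which is precisely $f_{V^*}|_A / p_A$ because $\lambda_k$ was defined so that $\lambda_k(y, \infty) = k\,\PPr{V_1^* > y}$, and therefore has density $k\, f_{V^*}$ on $A$.

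Because the conditional-on-count laws of locations coincide exactly, a standard coupling (couple the counts optimally, and on the event $\{N_B = N_P\}$ couple the i.i.d.\ location samples to be identical) yields
\[
d_{\mathrm{TV}}\!\left(\calP_k|_A,\ \calP'_k|_A\right) \;\le\; d_{\mathrm{TV}}\!\left(\mathrm{Binomial}(k, p_A),\ \Poi(k p_A)\right) \;\le\; k\, p_A^2,
\]
where the final inequality is Le Cam's theorem applied to $k$ i.i.d.\ $\mathrm{Bernoulli}(p_A)$ summands. Using $1 - e^{-x} \le x$ with $x = y_0^{-p}/k$ (which is small under the hypothesis $\poly(1/y_0) \ll k$) gives $p_A \le y_0^{-p}/k$, and hence $k p_A^2 \le y_0^{-2p}/k$. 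Choosing $C'$ small enough so that $y_0^{-4p} \le k$, which is exactly what the hypothesis $\poly(1/y_0) \ll k$ permits, produces $y_0^{-2p}/k \le 1/\sqrt{k}$ and establishes the $\O{1/\sqrt{k}}$ bound. I do not anticipate any serious obstacle: the entire argument rests on the fact that the normalized intensity of $\calP'_k$ on $A$ was \emph{defined} to agree with the truncated law of $V_1^*$, so the only quantitative content is Le Cam's theorem combined with the elementary Taylor estimate of $p_A$.
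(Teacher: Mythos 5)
Your proof is correct and follows essentially the same route as the paper: restrict both processes to $A$, observe the conditional location laws coincide by the definition of $\lambda_k$, reduce to Binomial-vs-Poisson on the counts, and apply Le Cam's theorem with the elementary estimate $p_A \le y_0^{-p}/k$. The only cosmetic difference is that you make the coupling explicit and use the clean inequality $1 - e^{-x} \le x$ instead of the paper's Taylor-expansion bookkeeping, which is, if anything, slightly tidier.
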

\begin{proof}
This is the standard distance between a Binomial process and a Poisson process with the same mean. 
Since $\calP_k(A)$, i.e., the process $\calP_k$ on set $A$, is a collection of $k$ Bernoulli trials, where each point lands in $A$ with probability $p_A = \Pr(V_i^* \in A) = 1 - e^{-y_0^{-p}/k}$. 
The total count $N_k(A)$ is $\text{Bin}(k, p_A)$. 
On the other hand, the process $\calP'_k(A)$ is by definition, a Poisson process with mean $\lambda_A = k \cdot p_A$.

By Le Cam's Theorem, c.f., \Cref{thm:le:cam}, the total variation distance between the counts is upper bounded by the sum of squares of the individual trial probabilities $k \cdot p_A^2$. 
By Taylor expansion,
\[p_A = 1 - e^{-y_0^{-p}/k} = \frac{y_0^{-p}}{k} + \O{k^{-2}}.\]
Thus for $\poly\left(\frac{1}{y_0}\right)\ll k$, the total variation distance is at most
\focsarxiv{
\begin{align*}
k \cdot p_A^2 &\le k \cdot \left( \frac{y_0^{-p}}{k} + \O{k^{-2}} \right)^2 \\
&= k \left( \frac{y_0^{-2p}}{k^2} + \O{k^{-3}} \right) \\
&= \frac{y_0^{-2p}}{k} + \O{k^{-2}}.
\end{align*}
}
{
\[k \cdot p_A^2 \le k \cdot \left( \frac{y_0^{-p}}{k} + \O{k^{-2}} \right)^2 = k \left( \frac{y_0^{-2p}}{k^2} + \O{k^{-3}} \right) = \frac{y_0^{-2p}}{k} + \O{k^{-2}}.\]
}

While Le Cam's Theorem only upper bounds the total variation process between the number of points that fall within $A$, this argument extends to the entire process including point locations because, conditioned on $N$ points landing in $A$, their locations are identically distributed in both the BPP and PPP models, as both are i.i.d. from the same normalized probability distribution on $A$. 
Therefore, the total variation distance is at most
\[\frac{y_0^{-2p}}{k} + \O{k^{-2}} = \O{\frac{1}{\sqrt{k}}}.\]
\end{proof}
We next argue the second part of the hybrid argument. 
\begin{lemma}
\label{lem:tvd:second}
Let $y_0=\frac{1}{n^{C'}}$ for some constant $C'>0$ such that $\poly\left(\frac{1}{y_0}\right)\ll k$ and let $A=[y_0,\infty)$. 
Then the total variation distance between $\calP'_k(A)$ and $\calP_\infty(A)$ is at most $\O{\frac{1}{\sqrt{k}}}$. 
\end{lemma}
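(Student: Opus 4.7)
The plan is to couple $\calP'_k$ and $\calP_\infty$ on $A = [y_0,\infty)$ using the standard superposition structure of Poisson point processes, and then to bound the failure probability of the coupling by a routine Taylor expansion. The key observation is that both processes are genuine PPPs on the \emph{same} measurable space $A$, differing only in their intensity measures. Hence the total variation distance between them is at most the probability that a coupled pair of realizations fails to coincide, and this probability can in turn be controlled by the total mass of the intensity \emph{difference}.

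First I would verify the ordering $\lambda_\infty \ge \lambda_k$ as measures on $A$. By definition their tails satisfy $\lambda_\infty(y,\infty) - \lambda_k(y,\infty) = y^{-p} - k\bigl(1 - e^{-y^{-p}/k}\bigr)$, and since $1 - e^{-u} \le u$ for $u \ge 0$, the right-hand side is nonnegative for every $y \ge y_0$. Differentiating gives the densities $\lambda_\infty'(y) = p\,y^{-p-1}$ and $\lambda_k'(y) = p\,y^{-p-1}\,e^{-y^{-p}/k}$, so the difference $\lambda_\infty - \lambda_k$ is absolutely continuous on $A$ with nonnegative density $p\,y^{-p-1}(1 - e^{-y^{-p}/k})$.

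Second, I would apply the superposition theorem for PPPs to construct an explicit coupling: let $\Phi'_k$ be distributed as $\calP'_k$ and let $\Psi$ be an independent PPP on $A$ with intensity measure $\lambda_\infty - \lambda_k$; then $\Phi_\infty := \Phi'_k \cup \Psi$ has exactly the law of $\calP_\infty$ on $A$. Under this coupling $\Phi'_k = \Phi_\infty$ whenever $\Psi$ is empty, so
\[
d_{TV}(\calP'_k(A),\calP_\infty(A)) \;\le\; \PPr{\Psi \neq \emptyset} \;=\; 1 - e^{-(\lambda_\infty - \lambda_k)(A)} \;\le\; (\lambda_\infty - \lambda_k)(A).
\]

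Third, I would evaluate the tail mass exactly using the identity $(\lambda_\infty - \lambda_k)(A) = y_0^{-p} - k(1 - e^{-y_0^{-p}/k})$ and a Taylor expansion of $e^{-u}$ around $u = y_0^{-p}/k$. Since $\poly(1/y_0) \ll k$, the argument $u$ is small, so
\[
k\bigl(1 - e^{-u}\bigr) = k u - \frac{ku^2}{2} + \O{k u^3} = y_0^{-p} - \frac{y_0^{-2p}}{2k} + \O{\frac{y_0^{-3p}}{k^2}},
\]
and therefore $(\lambda_\infty - \lambda_k)(A) = \frac{y_0^{-2p}}{2k} + \O{y_0^{-3p}/k^2}$. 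With $y_0 = 1/n^{C'}$ and $k$ chosen so that $n^{4pC'} \ll k$, both terms are $\O{1/\sqrt{k}}$, matching the target bound. The main obstacle is purely bookkeeping, namely checking that the constants in the $\poly(1/y_0) \ll k$ assumption are large enough (specifically at least $4p$) so that the dominant $y_0^{-2p}/k$ contribution is absorbed into $\O{1/\sqrt{k}}$; no nontrivial probabilistic estimate beyond the superposition coupling and a first-order Taylor bound is required.
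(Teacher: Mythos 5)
Your proposal is correct and is essentially the paper's argument: both bound the total variation distance by the total mass of the intensity difference $(\lambda_\infty-\lambda_k)(A)$ (the paper writes this as the $L_1$ integral of the density difference, which is the same quantity since $\lambda_\infty\ge\lambda_k$ on $A$), and both evaluate that mass to be $\tfrac{y_0^{-2p}}{2k}$ up to lower-order terms via $1-e^{-x}\le x$ or equivalently a first-order Taylor bound, then invoke $\poly(1/y_0)\ll k$ to conclude $\O{1/\sqrt{k}}$. The one thing you add is an explicit superposition coupling to justify the TV-versus-intensity-mass inequality, which the paper simply asserts as a standard fact; your version correctly observes that this coupling is available precisely because $\lambda_\infty$ dominates $\lambda_k$, which is a useful bit of rigor but not a different proof strategy.
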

\begin{proof}
This is the distance between two different Poisson processes, $\calP'_k$ and $\calP_\infty$. 
The total variation distance between two Poisson processes is bounded by the $L_1$ difference of their intensity measures. 
Thus, we upper bound this distance on our region $A=[y_0, \infty)$ by considering the intensity density $\mu(y)$, the negative derivative of the tail measure $\lambda(y, \infty)$, as follows:
\[\mu_\infty(y) = - \frac{d\lambda_\infty}{dy} = - \frac{d}{dy}(y^{-p}) = p y^{-p-1}.\]
\begin{align*}
\mu_k(y) = - \frac{d\lambda_k}{dy} &= - \frac{d}{dy}\left(k(1-e^{-y^{-p}/k})\right) \\
&= -k \left(-e^{-y^{-p}/k}\right) \cdot \frac{d}{dy}(-y^{-p}/k) \\
&= k e^{-y^{-p}/k} \cdot \left(-\frac{1}{k}\right) \cdot (-py^{-p-1}) \\
&= p y^{-p-1} e^{-y^{-p}/k}
\end{align*}
The $L_1$ difference, which upper bounds the total variation distance between the process, is at most
\focsarxiv{
\begin{align*}
\int_{y_0}^\infty &\left| \mu_k(y) - \mu_\infty(y) \right| dy \\
&= \int_{y_0}^\infty \left| p y^{-p-1} e^{-y^{-p}/k} - p y^{-p-1} \right| dy \\
&= \int_{y_0}^\infty p y^{-p-1} \left| e^{-y^{-p}/k} - 1 \right| dy \\
&= p \int_{y_0}^\infty y^{-p-1} (1 - e^{-y^{-p}/k}) dy,
\end{align*}
since $e^{-x} \le 1$. 
}
{
\begin{align*}
\int_{y_0}^\infty \left| \mu_k(y) - \mu_\infty(y) \right| dy &= \int_{y_0}^\infty \left| p y^{-p-1} e^{-y^{-p}/k} - p y^{-p-1} \right| dy \\
&= \int_{y_0}^\infty p y^{-p-1} \left| e^{-y^{-p}/k} - 1 \right| dy \\
&= p \int_{y_0}^\infty y^{-p-1} (1 - e^{-y^{-p}/k}) dy \quad (\text{since } e^{-x} \le 1).
\end{align*}
}
We perform the substitution $u=y^{-p}$. 
Then the integral becomes
\begin{align*}
p \int_{u=y_0^{-p}}^{0} &(1 - e^{-u/k}) \cdot \left(-\frac{1}{p} du\right)\\
&= - \int_{y_0^{-p}}^{0} (1 - e^{-u/k}) du \\
&= \int_{0}^{y_0^{-p}} (1 - e^{-u/k}) du.
\end{align*}
Using the inequality $1 - e^{-x} \le x$ for $x \ge 0$, we have
\begin{align*}
\int_{0}^{y_0^{-p}} (1 - e^{-u/k}) du \le \int_{0}^{y_0^{-p}} \frac{u}{k} du = \frac{y_0^{-2p}}{2k}.
\end{align*}
Thus, the total variation distance between $\calP'_k(A)$ and $\calP_\infty(A)$ is at most $\frac{y_0^{-2p}}{2k}=\O{\frac{1}{\sqrt{k}}}$ for $\poly\left(\frac{1}{y_0}\right)\ll k$.
\end{proof}
We now justify the correctness of the simulation process.
\begin{lemma}
\label{lem:gclt:justification}
Let $k = n^C$ for a sufficiently large constant $C > 0$. 
Let $V_j = (E_j)^{-1/p}$ for $E_j \sim \text{Exp}(1)$, and let $b_k = k^{1/p}$ be the normalization factor.
\begin{enumerate}
\item 
Let $\mathcal{P}_k = \{ V_j / b_k \}_{j=1}^k$ be the normalized Binomial point process (BPP) of the $k$ true (finite) samples.
\item 
Let $\mathcal{P}_\infty$ be the Poisson point process (PPP) on $\mathbb{R}^+$ with intensity measure $\mu(y, \infty) = y^{-p}$.
\end{enumerate}
The sampler's statistics $\mathcal{S}_{\text{true}}^{(k)}$ (normalized) are a function of $\mathcal{P}_k$. 
The simulated statistics $\mathcal{S}_{\text{sim}}^{(\infty)}$ are the same function applied to $\mathcal{P}_\infty$. 
Moreover, the total variation distance between these two underlying processes is a most $\O{\frac{1}{\sqrt{k}}}= \O{n^{-C/2}}$.
\end{lemma}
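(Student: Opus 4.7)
The plan is to decompose the claim into two parts: (i) a triangle inequality chaining \lemref{tvd:first} and \lemref{tvd:second} to get a total-variation bound on the restricted processes $\calP_k(A)$ and $\calP_\infty(A)$, and (ii) a reduction showing that the relevant statistics $\mathcal{S}^{(k)}_{\text{true}}$ and $\mathcal{S}^{(\infty)}_{\text{sim}}$ are effectively measurable functions of these restricted processes, so that the data processing inequality transfers the TVD bound. I would pick $y_0 = 1/n^{C'}$ as in \lemref{tvd:first} and \lemref{tvd:second} with $C'$ chosen so that $\poly(1/y_0) \ll k$ and so that, with probability $1 - 1/\poly(n)$, all of the top $\tau$ normalized order statistics $V_{(1)}/b_k, \ldots, V_{(\tau)}/b_k$ lie inside $A=[y_0,\infty)$. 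This latter claim follows from a standard calculation with the power-law tail $\Pr[V_j/b_k > y] = 1 - e^{-y^{-p}/k}$, which implies $V_{(\tau)}/b_k = \Omega((\tau/k)^{1/p}\cdot k^{1/p}) \gg y_0$ with high probability.

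Next, I would argue that the sum-of-squares statistic $\sum_{j=\tau+1}^k V_{(j)}^2/b_k^2$ is also essentially determined by the points above $y_0$, in the sense that the points below $y_0$ contribute only $O(1/\poly(n))$ mass under both $\calP_k$ and $\calP_\infty$. For $\calP_\infty$, the expected sum-of-squares contribution below $y_0$ is
\[
\int_0^{y_0} y^2 \cdot p y^{-p-1}\, dy = \frac{p}{2-p}\,y_0^{2-p},
\]
which is $O(n^{-C'(2-p)})$ for $p < 2$; Markov's inequality then gives the required high-probability bound. The same bound carries over to $\calP_k$ up to an additive $O(1/\sqrt{k})$ because of \lemref{tvd:first} and \lemref{tvd:second}. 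Combining this, both $\mathcal{S}^{(k)}_{\text{true}}$ and $\mathcal{S}^{(\infty)}_{\text{sim}}$ agree, except for an event of probability $1/\poly(n)$, with the same deterministic function $\Phi$ applied to the point process restricted to $A$.

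With the reduction in place, the key inequality is
\[
d_{TV}\bigl(\Phi(\calP_k(A)),\, \Phi(\calP_\infty(A))\bigr) \le d_{TV}\bigl(\calP_k(A),\, \calP_\infty(A)\bigr)
\]
by data processing, and the right-hand side is bounded by
\[
d_{TV}\bigl(\calP_k(A),\, \calP'_k(A)\bigr) + d_{TV}\bigl(\calP'_k(A),\, \calP_\infty(A)\bigr) \le O\!\left(\tfrac{1}{\sqrt{k}}\right)
\]
via the triangle inequality and \lemref{tvd:first} and \lemref{tvd:second}. Adding the $1/\poly(n) = O(1/\sqrt{k})$ slack from the truncation step (for $C$ sufficiently large) gives the claimed $O(n^{-C/2})$ bound on the TVD between the two statistics.

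The main obstacle I anticipate is the truncation step: one has to verify carefully that the tiny-mass points below $y_0$ in the idealized $\calP_\infty$ do not perturb the sum-of-squares statistic by more than $1/\poly(n)$, and simultaneously that all top-$\tau$ order statistics of the finite process $\calP_k$ stay above $y_0$ with overwhelming probability. Both follow from routine tail computations with the transformed exponential density, but they are the technical core that justifies the otherwise clean triangle-inequality-plus-data-processing argument.
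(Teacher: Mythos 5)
Your proposal is correct and follows essentially the same route as the paper: chain \lemref{tvd:first} and \lemref{tvd:second} by the triangle inequality to bound $d_{\mathrm{TV}}(\calP_k(A), \calP_\infty(A))$, then observe that the sampler's statistics are (with high probability) determined by the restriction of the point process to $A = [y_0,\infty)$, so the bound transfers. The one place you go beyond the paper is in spelling out why the restriction to $A$ is harmless: you verify that the top $\tau$ normalized order statistics exceed $y_0$ with high probability and that the sub-$y_0$ contribution to the normalized sum-of-squares has expectation $\frac{p}{2-p}y_0^{2-p} = 1/\poly(n)$, whereas the paper states the restriction claim without this calculation; this is a welcome gap-filling, not a different argument. (Minor slip: the $\tau$-th normalized order statistic concentrates near $\tau^{-1/p}$, not $(\tau/k)^{1/p} k^{1/p} = \tau^{1/p}$, since the $\tau$-th smallest of $k$ exponentials is $\approx \tau/k$ and $V_{(\tau)}/b_k \approx (\tau/k)^{-1/p}/k^{1/p}$; the conclusion $\gg y_0$ is unaffected.)
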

\begin{proof}
By \Cref{lem:tvd:first}, \Cref{lem:tvd:second}, and the triangle inequality, the total variation distance between these two process on the region $A=[y_0, \infty)$ is at most $\O{\frac{1}{\sqrt{k}}}$. 
Since our algorithm only uses statistics from the region $A=[y_0, \infty)$ where $y_0 = n^{-C'} > 0$ with high probability, the constant $y_0^{-2p} = (n^{-C'})^{-2p} = n^{2pC'}$ is just some polynomial in $n$. 
The error bound is $\O{y_0^{-2p}/k} = \O{n^{-C/2}}$, by choosing a sufficiently large $k=n^C$. 
This proves the total variation distance between the true finite process, which our algorithm would ideally be simulating, and the limiting process, which our algorithm actually simulates in reality, is negligibly small on the region of interest.
\end{proof}

\subsection{Achieving Fast Update Time} 
Here we collect several facts which together guarantee that our sampler runs quickly.  
The following fact shows that we will be able to take $R$ and $1/R$ to be at most $\polylog(n)$ in our sampler, where we recall $R$ is the ratio of the maximum term $X_1$ to the sum $\sum_{j>1}X_j$ of all others. 
We will apply this result to the random variables $X_i^{2/p}$.

\begin{lemma}
Let $X_1 > X_2 > \ldots > X_k$ be $k$ independent and identically distributed inverse exponential random variables arranged in decreasing order, with $k\geq \log n$. 
Then there exist constants $C,c, c_1, c_2>0$ such that
\begin{enumerate}
    \item $X_{C\log n} \geq c_1 k/\log n$
    \item $X_{C\log n} \leq c_2 k$
\end{enumerate}
with failure probability at most $n^{-c.}$
\end{lemma}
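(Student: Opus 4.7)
The plan is to reduce both bounds to standard Gamma tail estimates via the Rényi representation of exponential order statistics. Write $X_i = 1/E_i$ with $E_i$ i.i.d.\ $\mathrm{Exp}(1)$. If $E_{(1)} < \cdots < E_{(k)}$ denotes the increasing order statistics of the $E_i$, then $X_{C\log n} = 1/E_{(C\log n)}$, and the two desired inequalities become
\[
\frac{1}{c_2 k} \;\le\; E_{(C\log n)} \;\le\; \frac{\log n}{c_1 k}.
\]
I will fix a small absolute constant $C \le 1/2$ so that $C\log n \le k/2$ whenever $k \ge \log n$, which makes the subsequent sandwich valid.

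The next step is to invoke the Rényi representation $E_{(r)} \stackrel{d}{=} \sum_{i=1}^{r} Z_i/(k-i+1)$ with $Z_i$ i.i.d.\ $\mathrm{Exp}(1)$. For $r = C\log n$ and $k \ge 2C\log n$, the weights $1/(k-i+1)$ all lie in $[1/k, \, 2/k]$, so
\[
\frac{S}{k} \;\le\; E_{(C\log n)} \;\le\; \frac{2S}{k}, \qquad \text{where } S := \sum_{i=1}^{C\log n} Z_i \sim \mathrm{Gamma}(C\log n,\, 1).
\]
Thus it suffices to prove with failure probability $n^{-c}$ that (i) $S \le 2C\log n$ (to get $E_{(C\log n)} \le 4C\log n/k$, yielding $X_{C\log n} \ge k/(4C\log n)$), and (ii) $S \ge 1/c_2$ for a suitable constant (to get $E_{(C\log n)} \ge 1/(c_2' k)$, yielding $X_{C\log n} \le c_2' k$).

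Both tails are classical Cramér--Chernoff estimates for a Gamma with shape $\alpha = C\log n$: optimizing the exponential moment generating function $\mathbb{E}[e^{\theta S}] = (1-\theta)^{-\alpha}$ gives $\Pr[S > 2\alpha] \le (2/e)^{\alpha} = n^{-C \log(e/2)}$, while optimizing $\mathbb{E}[e^{-\theta S}] = (1+\theta)^{-\alpha}$ gives $\Pr[S < t] \le (et/\alpha)^{\alpha} e^{\alpha - t}$ for $t < \alpha$, which at $t = 1/c_2$ becomes at most $(e/(c_2 C\log n))^{C\log n} \le n^{-c}$ for $n$ sufficiently large. Choosing $c_1 := 1/(4C)$ and $c_2$ any fixed constant, and taking $C$ large enough (relative to the target $c$) in the upper-tail estimate, both tail events fail with probability $n^{-c}$, and a union bound completes the proof. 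The only mild wrinkle is the interplay between $C$ and the hypothesis $k \ge \log n$, which is handled by the sharper identity $\sum_{i=1}^{C\log n} 1/(k-i+1) = H_k - H_{k-C\log n}$ if one wishes to push $C$ closer to $1$; otherwise taking $C$ small makes the sandwich trivial.
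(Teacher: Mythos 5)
Your argument is correct and takes a genuinely different route from the paper's. The paper works directly with the $E_i = 1/X_i$ and bounds the \emph{count} of exponentials falling below the two thresholds $\log n/k$ and $1/k$: for the lower bound on $X_{C\log n}$ it partitions the $k$ samples into $\log n$ groups of size $k/\log n$, argues each group independently contains a small exponential with constant probability, and applies a Chernoff bound across groups; for the upper bound it notes the expected number of $E_i < 1/k$ is $O(1)$ and applies a Poisson-style Chernoff tail. You instead invoke the R\'enyi representation $E_{(r)} \stackrel{d}{=} \sum_{i=1}^r Z_i/(k-i+1)$, sandwich the weights by $[1/k, 2/k]$ once $C\log n \le k/2$, and reduce both bounds to standard Cram\'er--Chernoff tail estimates for a $\mathrm{Gamma}(C\log n,1)$ sum. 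Both routes are valid; yours is more structured and yields the exact distributional form of $E_{(r)}$, while the paper's counting argument is more elementary and self-contained. Two small remarks: (i) your lower-tail expression should read $\Pr[S<t]\le (t/\alpha)^\alpha e^{\alpha-t} = (et/\alpha)^\alpha e^{-t}$ rather than $(et/\alpha)^\alpha e^{\alpha-t}$, though the stray factor of $e^\alpha$ is harmless since the dominant decay is $(\Theta(1)/\log n)^{C\log n}$; (ii) the phrase ``taking $C$ large enough'' sits uneasily with the earlier requirement $C \le 1/2$ needed for the sandwich at $k = \log n$ --- but since the lemma only asserts existence of some $c>0$, the choice $C=1/2$ already yields a concrete $c = \tfrac12(1-\ln 2)$ for the upper tail, and if one wants a larger $c$ one should inflate the threshold multiplier (replace $2\alpha$ by $s\alpha$ with $s$ large) rather than $C$.
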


\begin{proof}

Let $E_1 < \ldots < E_k$ be the reciprocals of the $X_i$'s.

For the first point we argue that $C \log n$ of $E_i$'s are at most $\log n/k.$  The probability that a given $E_i$ is this small is $1-\exp(-\log n / k) \geq \frac{1}{2} \log n /k)$ for $k\geq \log n$.  
In any group of $k/\log n$, there is therefore at least a $1/2$ probability of having at least one exponential this small.  So the first claim follows by applying a Chernoff bound to the $\log n$ groups.

For the second point, we note that $E_i \geq \frac{1}{k}$ with probability $1 - \O{\frac{1}{k}}$. 
The probability that $C\log n$ events of probability $1/k$ occur is at most $\exp(-c\log n) = n^{-c}$ by a Chernoff bound.
\end{proof}

Our sampler will perform a binary search on the PDF to perform the sampling.  
This lemma shows that it is unlikely to sample something very large. 

\begin{lemma}
Let $X_1, \ldots, X_k$ be inverse exponentials truncated at $R^{p/2}k$.  Then 
\[
X_1^{2/p} + \ldots + X_k^{2/p} \leq C R k^{2/p} \polylog(R,n)
\]
with probability at least $1 - n^{-c}.$
\end{lemma}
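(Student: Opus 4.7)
The plan is a dyadic level-set / Chernoff argument. Define $Y_i := X_i^{2/p}$. From $\PPr{X_i > t} = 1 - e^{-1/t}$ I get $\PPr{Y_i > y} = 1 - e^{-y^{-p/2}}$, which behaves like $\Theta(y^{-p/2})$ for $y \gtrsim 1$, so $Y_i$ has a heavy $(p/2)$-tail (index $p/2 < 1$, exactly the regime where the sum is dominated by its largest terms). The truncation $X_i \leq R^{p/2}k$ translates to $Y_i \leq R k^{2/p}$, which will be crucial for preventing the level-set sum from diverging at the top.

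First, I would split the sum into a ``light'' regime $Y_i \leq 1$ and a ``heavy'' regime $Y_i > 1$. The light regime trivially contributes at most $k \leq k^{2/p}$ since $2/p > 1$, which is absorbed in the target bound. For the heavy regime I would bucket into dyadic levels $I_j = [2^j, 2^{j+1}]$ for $j = 0, 1, \ldots, \lceil \log_2(R k^{2/p}) \rceil$, and let $N_j$ denote the number of indices $i$ with $Y_i \in I_j$. A direct calculation gives $\mu_j := \Ex{N_j} = k \cdot \PPr{Y_i \in I_j} = \Theta(k \cdot 2^{-jp/2})$, and a standard Chernoff bound yields $N_j \leq \max(2\mu_j, \, C'\log n)$ with failure probability at most $n^{-c-1}$. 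Since the number of levels is $\O{\log(Rk^{2/p})} = \polylog(R,n)$, a union bound preserves the $1 - n^{-c}$ guarantee.

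Conditioned on all the Chernoff bounds, the contribution from level $j$ is at most $N_j \cdot 2^{j+1}$, so the heavy part is at most $\sum_j \max\bigl(4k \cdot 2^{j(1-p/2)}, \, 2C'\log n \cdot 2^{j+1}\bigr)$. Since $1-p/2 \in (0,1)$, the first branch is a geometric series dominated by the top index $j_{\max} = \log_2(Rk^{2/p})$, contributing $\O{R^{1-p/2} k^{2/p}} \leq \O{Rk^{2/p}}$. The second branch telescopes to $\O{Rk^{2/p}\log n}$. Adding the light contribution $\O{k^{2/p}}$ gives the claimed $\O{R k^{2/p} \polylog(R,n)}$ bound.

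The main thing to be careful about is the interplay between the multiplicative and additive regimes of the Chernoff tail across the levels: for small $j$ the mean $\mu_j$ is polynomial in $k$ and the multiplicative regime gives $N_j \leq 2\mu_j$, whereas for large $j$ one has $\mu_j \ll \log n$ and must fall back to the additive $\O{\log n}$ term, and it is precisely this switch that makes the second branch of the max active near the top of the dyadic range. Verifying that truncation at $R k^{2/p}$ is exactly what caps the otherwise-divergent $(1-p/2)$-sum (and that the $R^{1-p/2}$ factor only costs an $R$, not a worse polynomial in $R$) is the one non-routine check; after that the rest is bookkeeping.
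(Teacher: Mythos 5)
Your proposal is correct and follows essentially the same route as the paper: a dyadic level-set decomposition of the summands (the paper buckets $X_i$ rather than $Y_i = X_i^{2/p}$, which is an equivalent reparametrization), Chernoff bounds per level (multiplicative regime where $\Ex{N_j}\gtrsim \log n$, additive $\O{\log n}$ cap where it is not), a union bound over the $\polylog(R,n)$ levels, and a geometric series dominated near the truncation threshold. One small caveat worth noting: the step ``$R^{1-p/2}k^{2/p} \leq \O{Rk^{2/p}}$'' requires $R\geq 1$; in the regime $R<1$ allowed by the paper (where $1/R = \O{\log n}$) the inequality reverses, but the surplus factor $R^{-p/2} = \O{(\log n)^{p/2}}$ is a $\polylog(R,n)$ quantity that the target bound already absorbs, so the conclusion is unaffected.
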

\begin{proof}
We use the standard technique of partitioning the possible range of values into level sets and bounding the number of variables falling into each set.
We partition the range of possible values for $X_i$ into logarithmic level sets based on powers of $2$. 
Since $X_i$ is truncated at $R^{p/2}k$, the maximum index $J$ is defined such that $2^J\ge R^{p/2}k\ge 2^{J-1}$. 
We define the level sets are defined for $i \geq 0$ by $L_i = [2^i, 2^{i+1})$. 
The total sum can be rewritten in terms of $N_i$, the number of variables that fall into $L_i$:
\[\sum_{j=1}^k X_j^{2/p} = \sum_{i=0}^J \sum_{X_j \in L_i} X_j^{2/p}.\]
We bound the contribution of each level set by its upper endpoint:
\[\sum_{j=1}^k X_j^{2/p} \le \sum_{i=0}^J N_i \cdot (2^{i+1})^{2/p} = 2^{2/p} \sum_{i=0}^J N_i \cdot (2^i)^{2/p},\]
where $N_i = \left| \{ j \mid X_j \in L_i \} \right|$.

For an inverse exponential variable $X = Z^{-1}$ where $Z \sim \mathrm{Exp}(1)$, the probability that $X$ falls into the range $[2^i, \infty)$ is $\PPr{X \geq 2^i} = 1 - e^{-2^{-i}}=\Theta(2^{-i})$. 
The expected number of variables that fall into the $i$-th level set is $\Ex{N_i}=\Theta(k\cdot 2^{-i})$. 

We first consider the low-contribution sets where $\Ex{N_i}\ge\log n$, corresponding to $i \leq i_0 = \log k-\O{\log\log n}$. 
For these sets, we have by Chernoff bounds that $\PPr{N_i\le\frac{Ck}{2^i}}\ge 1-n^{-ci}$. 
Then the contribution from these sets is bounded by a geometric series:
\[\sum_{i=0}^{i_0} N_i (2^i)^{2/p} \leq \sum_{i=0}^{i_0}\frac{Ck}{2^i}\cdot (2^{2/p})^i.\]
Observe that $\frac{Ck}{2^i}\cdot (2^{2/p})^i$ is maximized at $i=i_0=\log k-\O{\log\log n}$ by $\O{k^{2/p}}$ and since $i_0=\O{\log n}$, then we have $\sum_{i=0}^{i_0} N_i (2^i)^{2/p} = \O{k^{2/p}\cdot\log n}$.

It remains to consider the level sets where $\Ex{N_i}<\log n$. 
These correspond to the heavy-tailed, high-value coordinates. 
We again apply Chernoff bounds to ensure that the count $N_i$ for these levels does not exceed $\O{\log n}$ with high probability. 
Specifically, the number of elements in the highest $\O{\log k}$ level sets, denoted $N_{\text{top}}$, is upper bounded by $\polylog(n)$. 
Since each value is truncated at $R^{p/2}k$ by hypothesis, then the contribution from the these terms is thus upper bounded by $(R^{p/2}k)^{2/p}\cdot\polylog(n)$. 
Combining the negligible contribution from the low-value terms with the dominating bound from the high-value terms, we conclude that with probability at least $1 - n^{-c}$,
\[\sum_{i=1}^k X_i^{2/p} \leq C R k^{2/p} \polylog(R, n).\]
\end{proof}

\section{Instantiating the Sampling Oracle}\label{sec:calculation}
In this section, we describe how to instantiate the sampling oracle, which is crucial to achieving polylogarithmic update time in expectation. 
Without such a sampling oracle, one would need to generate and maintain a polynomial number of random variables at each time due to the duplication. 
As a result, we would incur polynomial update time without such a sampling oracle. 

\subsection{Sampling Lemma}
Simulating duplications required sampling accurately from a distribution. 
In this section we construct the sampling oracle that we need.

Let $X_1, X_2, \ldots, X_k$ be $k$ i.i.d. truncated inverse exponential random variables, raised to the $2/p$-th power, and bounded by $Rk^{2/p}.$ Let $\mathcal{D}_{p,R,k}$ be the distribution of
\[
\frac{1}{k^{2/p}}(X_1 + \ldots + X_k ).
\]

We say that that an algorithm samples from a distribution $\mathcal{D}$ to $L$ bits of precision if it simulates drawing from $\mathcal{D}$ and rounding to one of the two nearest $L$-bit precision numbers.

\begin{theorem}\label{thm:calculation}
Suppose that $R = \frac{1}{\Omega(\log n)}.$ For $0 < p < 2,$ there is an algorithm that samples from $\mathcal{D}_{p, R, \infty}$ to $L$ bits of precision in $\tO{LR}$ time.
\end{theorem}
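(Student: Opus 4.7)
The plan is to sample from $\mathcal{D}_{p,R,\infty}$ by inverse-CDF sampling: draw $y\sim\text{Unif}[0,1]$ and binary search for the $t$ with $F(t)=y$, where $F$ is the CDF of $\mathcal{D}_{p,R,\infty}$. The bulk of the work is an oracle that computes $F(t)$ to additive error $2^{-L}$ in time $\tO{LR}$; the outer binary search over a $\poly(n,R)$-sized effective support (supplied by the level-set lemma of the preceding subsection) adds only a $\polylog$ multiplicative overhead, provided we use doubling precision so that the final oracle call dominates.

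First I would derive the characteristic function $\varphi_\infty$ of $\mathcal{D}_{p,R,\infty}$. For fixed $k$, a single normalized summand $X_j/k^{2/p}$ is a truncation at $R$ of $E_j^{-2/p}/k^{2/p}$, whose absolutely continuous part on $(0,R)$ has density $\frac{p}{2k}\,y^{-p/2-1}e^{-y^{-p/2}/k}$. Computing $\psi_k(t):=\mathbb{E}[e^{itX_j/k^{2/p}}]$ and substituting $w=1/y^{p/2}$ followed by $s=w/k$ yields the expansion $\psi_k(t)=1+\mu(t)/k+o(1/k)$ with
\[
\mu(t)\;=\;\tfrac{p}{2}\!\int_0^R (e^{itv}-1)\,v^{-p/2-1}\,dv \;+\; R^{-p/2}(e^{itR}-1).
\]
Hence $\varphi_\infty(t)=\lim_{k\to\infty}\psi_k(t)^k=\exp(\mu(t))$. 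Because the integration is over the bounded interval $(0,R]$, the function $\mu$ is entire in $t$, and each point value of $\mu$ can be computed to precision $\eps$ in $\polylog(1/\eps)$ time by standard adaptive quadrature.

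Next I would apply the Gil-Pelaez inversion formula
\[
F(t)\;=\;\tfrac12 - \tfrac{1}{\pi}\!\int_0^\infty \frac{\Im\!\left(e^{-itu}\varphi_\infty(u)\right)}{u}\,du
\]
and approximate the integral by the trapezoidal rule. The Trefethen--Weideman framework yields geometric convergence of trapezoidal quadrature for integrands analytic in a horizontal strip and decaying sufficiently fast at $\pm\infty$. I would verify both properties for the Gil-Pelaez integrand. Analyticity follows from the entireness of $\mu$. For the real-axis decay, substituting $s=uv$ inside $\mu$ shows that once $u\gtrsim 1/R$ one has $\Re\mu(u)\sim -c_p\,u^{p/2}$, so $|\varphi_\infty(u)|$ decays like $e^{-c_p u^{p/2}}$; the width of the strip of analyticity in which $|\varphi_\infty|$ stays controlled scales like $1/R$, which is precisely what couples the runtime to $R$.

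The main obstacle is to carry out the analyticity and decay estimates tightly enough to pick a trapezoidal step size $h$ and a cutoff $U$ with both the discretization error $\O{e^{-2\pi a/h}}$ (for strip half-width $a\asymp 1/R$) and the tail error $\int_U^\infty |\Im(e^{-itu}\varphi_\infty(u))/u|\,du$ bounded by $2^{-L}$, while simultaneously keeping $U/h=\tO{LR}$. Once such $h$ and $U$ are chosen, each CDF evaluation costs $\tO{LR}$ arithmetic operations---each operation being a query to $\mu$---and the outer inverse-CDF binary search with doubling precision contributes only a $\polylog(n,L)$ factor, yielding the claimed $\tO{LR}$ runtime.
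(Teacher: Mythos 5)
Your overall strategy---compute the limiting characteristic function, invert via Gil--Pelaez, bound the integrand on a strip so that the trapezoid rule converges geometrically, and finish with inverse-CDF binary search---is the same as the paper's. Two substantive issues remain.

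First, the characteristic function you derive is for the wrong truncation model. The paper's summands are i.i.d.\ copies of $Y=E^{-2/p}$ \emph{conditioned} on $Y\le Rk^{2/p}$, so a normalized summand has density $f_Y(\cdot)/F_Y(Rk^{2/p})$ on $(0,R)$ and no atom at the endpoint. You instead \emph{cap} $Y$ at the threshold, which puts a point mass of size $1-F_Y(Rk^{2/p})\approx R^{-p/2}/k$ at $R$; that atom is exactly where your extra term $R^{-p/2}(e^{itR}-1)$ comes from. These models do not agree in the limit: your $\varphi_\infty$ equals the paper's $\phi$ multiplied by $\exp\bigl(R^{-p/2}(e^{itR}-1)\bigr)$, so you would be sampling from the target distribution convolved with $R\cdot\Poi(R^{-p/2})$. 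The conditioned model is the one the application requires (by memorylessness, the surviving $X_j$ are conditioned to lie below $X_\tau$, not clipped to it), so the boundary term must be dropped, leaving $C(t)=\tfrac{p}{2}\int_0^R(e^{itz}-1)z^{-p/2-1}\,dz$.

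Second---and you flag this yourself as ``the main obstacle''---you do not actually establish the strip bound the Trefethen--Weideman theorem needs, and the integral form of $\mu$ does not hand it to you. This is where most of the paper's work lies. It integrates $C(t)$ by parts and applies the incomplete-gamma recurrence to obtain the closed form $\phi(t)=\exp\bigl(R^{-p/2}(1+\tfrac{p}{2}(-itR)^{p/2}\gamma(-p/2,-itR))\bigr)$, which buys two things: (i) off-axis control, via showing that $\Re\bigl(z^{p/2}\gamma(-p/2,z)\bigr)$ is maximized on the real axis and bounding the derivative of $z^{p/2}\gamma(-p/2,z)$ with the asymptotic series for $\Gamma(s,z)$ (Propositions~\ref{prop:max_imaginary_part_zero} and~\ref{prop:h_bounded_deriv}), yielding $|I(\xi)|\le\exp(c_1-c_2R^{-p/2}|\xi|^{p/2})(1+\poly(R)+t)$ on a strip of half-width $\asymp (1+t+\poly R)^{-1}$; and (ii) a pointwise evaluation of $\phi$ with no singular quadrature---your $\int_0^R(e^{itv}-1)v^{-p/2-1}\,dv$ has an unbounded integrand near $v=0$, so ``standard adaptive quadrature'' needs justification. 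Finally, the Gil--Pelaez integral is over $[0,\infty)$, whereas the exponential-convergence theorem is for the bi-infinite trapezoid rule; the paper truncates to $[0,L]$ and applies a $\tanh$ substitution to map the finite interval onto $\mathbb{R}$, and it is this substitution that determines how the usable strip half-width couples to $L$, $t$, and $R$, and hence the step count. You should account for that explicitly rather than folding it into ``$h$ and $U$ can be chosen.''
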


The idea is to compute the characteristic function of the limiting distribution $\mathcal{D}_{p,R,\infty}$. Once we have this characteristic function, we may apply a standard Fourier inversion formula to express the CDF at a given value $t$ as an integral.  The resulting integral analytic, and ideas from numerical analysis show that an associated quadrature rule converges exponentially fast.  Thus we have fast CDF evaluation.  Given this, we can pick a uniformly random $y \in [0,1]$ and binary search for the value $x$ where the CDF evaluates to $y.$ By the lemmas in the previous section, we may assume that $R, 1/R$ and $t$ are $\polylog n$, so it suffices to give an algorithm for estimating the CDF to within $\eps$ that requires only $\poly(R, \frac{1}{R}, t, \log\frac{1}{\eps})$ time. 
We show how to accomplish this in the following sections.

\subsection{Gamma Function Facts}
We will use the incomplete gamma functions extensively in our calculations. 
In this section we recall several standard facts, and derive consequences of them that we will apply later. 
We first collect several standard identities for the incomplete gamma functions. 
In particular, we denote the upper incomplete gamma function by
\[\Gamma(s,z)=\int^\infty_z t^{s-1}e^{-t}\,dt\]
and the lower incomplete gamma function by
\[\gamma(s,z)=\int^z_0 t^{s-1}e^{-t}\,dt.\]
\begin{proposition}
\label{prop:gamma_facts}
We have the following standard facts for the incomplete gamma functions.
\begin{enumerate}
\item $\Gamma(s,z) = \Gamma(s) - \gamma(s,z)$
\item $\gamma(s+1,z) = s\gamma(s,z) - z^s e^{-z}$
\item $\Gamma(s+1,z) = s\Gamma(s,z) + z^s e^{-z}$
\item $\frac{d}{dz} \Gamma(s,z) = -z^{s-1}e^{-z}$
\item $\frac{d}{dz} \gamma(s,z) = z^{s-1}e^{-z}$
\item For fixed $s$, the upper incomplete gamma function has the asymptotic series
\[
\Gamma(s,z) = z^{s-1}e^{-z} \sum_{k=0} \frac{\Gamma(s)}{\Gamma(s-k)} z^{-k},
\]
valid as $\abs{z} \rightarrow \infty$ in the sector $\{z : \abs{\arg{z}} \leq \frac{3\pi}{4}\}.$
\item In particular, from the first term of the asymptotic series, there is a constant $c_s$ depending only on $s$ such that
$\Gamma(s,z)\leq c_s \abs{z^{s-1} e^{-z}}$ for all $z$ in the sector.
\item There is an entire function $\gamma_*(s,z)$ such that for all $\gamma(s,z) = \Gamma(s) z^s \gamma_*(s,z).$
\end{enumerate}
\end{proposition}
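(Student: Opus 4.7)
The plan is to verify each item separately, since the eight identities are all classical and admit compact derivations directly from the definitions. Items 4 and 5 are immediate applications of the fundamental theorem of calculus to $\Gamma(s,z) = \int_z^\infty t^{s-1}e^{-t}\,dt$ and $\gamma(s,z) = \int_0^z t^{s-1}e^{-t}\,dt$ respectively. Item 1 follows from splitting the integral defining $\Gamma(s) = \int_0^\infty t^{s-1}e^{-t}\,dt$ at the point $z$, yielding $\Gamma(s) = \gamma(s,z) + \Gamma(s,z)$ and rearranging.

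For items 2 and 3, I would integrate by parts on $\gamma(s+1,z) = \int_0^z t^s e^{-t}\,dt$ with $u = t^s$ and $dv = e^{-t}\,dt$; the boundary contribution at $t=z$ gives $-z^s e^{-z}$, the contribution at $t=0$ vanishes in the convergent regime $\mathrm{Re}(s)>0$ (and extends elsewhere by analytic continuation), and the remaining integral is $s\gamma(s,z)$. The identity for $\Gamma(s+1,z)$ then follows either by the analogous integration by parts on $\int_z^\infty t^s e^{-t}\,dt$ or by subtracting identity 2 from the functional equation $\Gamma(s+1) = s\Gamma(s)$ via identity 1. For item 8, I would expand $e^{-t}$ as a power series inside $\gamma(s,z)$ and integrate termwise, producing the everywhere-convergent series $\gamma(s,z) = \sum_{k \ge 0} \frac{(-1)^k z^{s+k}}{k!\,(s+k)}$. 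Dividing by $\Gamma(s)\,z^s$ exhibits $\gamma_*(s,z) = \sum_{k \ge 0} \frac{(-1)^k z^k}{\Gamma(s)\,k!\,(s+k)}$ as a power series in $z$ with infinite radius of convergence, hence entire.

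The most delicate item is 6, the Poincar\'e asymptotic expansion. The key recursion is $\Gamma(s,z) = z^{s-1}e^{-z} + (s-1)\Gamma(s-1,z)$, obtained by integrating by parts on $\int_z^\infty t^{s-1}e^{-t}\,dt$ with $u = t^{s-1}$ and $dv = e^{-t}\,dt$. Iterating this $N$ times produces exactly the first $N$ terms $z^{s-1}e^{-z}\sum_{k=0}^{N-1}\frac{\Gamma(s)}{\Gamma(s-k)}z^{-k}$ together with a remainder $\frac{\Gamma(s)}{\Gamma(s-N)}\Gamma(s-N,z)$. To establish the stated asymptotic uniformly in the sector $|\arg z| \le \tfrac{3\pi}{4}$, I would deform the tail contour defining $\Gamma(s-N,z)$ away from the ray $[z,\infty)$ to a ray emanating from $z$ along which $\mathrm{Re}(t)$ increases monotonically; such a deformation is possible whenever $|\arg z| < \pi$, and the margin $3\pi/4$ guarantees a uniform positive rate. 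A direct estimate of $|t^{s-N-1}e^{-t}|$ along the deformed ray then yields a remainder bound of the form $C_{s,N}\,|z^{s-N-1}e^{-z}|$, which proves item 6 in the Poincar\'e sense. Item 7 follows immediately by retaining just the leading term and absorbing the $N=1$ remainder into the constant $c_s$.

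The main obstacle is the contour deformation and uniform remainder estimate underlying item 6; the other items are essentially bookkeeping via integration by parts and termwise power series manipulation. Some additional care is required to handle the multi-valued factor $t^{s-1}$ for non-integer $s$, which I would manage by fixing the principal branch with a cut along the negative real axis and keeping the deformed contour safely to the right of the cut throughout the sector.
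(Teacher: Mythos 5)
The paper does not prove this proposition; it is explicitly presented as a collection of ``standard facts'' about the incomplete gamma functions, with no accompanying derivations. Your proof is correct and follows exactly the standard textbook route: items 1, 4, 5 by splitting the integral and the fundamental theorem of calculus, items 2 and 3 by a single integration by parts (with the correct observation that the $t=0$ boundary term needs $\mathrm{Re}(s)>0$ initially, after which identity holds by analytic continuation --- important since the paper later uses these identities with $s=-p/2<0$), item 8 by termwise integration of the power series for $e^{-t}$, item 6 by iterating the backward recursion $\Gamma(s,z)=z^{s-1}e^{-z}+(s-1)\Gamma(s-1,z)$ with a remainder $\frac{\Gamma(s)}{\Gamma(s-N)}\Gamma(s-N,z)$ estimated on the ray $z+\mathbb{R}_{\ge 0}$ (which, as you note, avoids the branch cut throughout $|\arg z|\le 3\pi/4$, in fact for all $|\arg z|<\pi$), and item 7 as the $N=1$ case of item 6. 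I see no gaps; the only thing worth flagging is that the paper's sector $|\arg z|\le\frac{3\pi}{4}$ is conservative relative to the sharp $|\arg z|<\frac{3\pi}{2}$, but your deformation argument already covers the stated sector with margin, which is all that is required.
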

Recall that a function is entire if it is analytic at every point in the complex plane. 
Now, for a complex number $z$, let $\Re(z)$ and $\Im(z)$ denote its real and imaginary parts, respectively. 
\begin{proposition}
\label{prop:h_bounded_deriv}
For $s\in(0,1]$, let $h(z) = z^{s} \gamma(-s,z)$.  
Then $h'(z)$ is bounded by a constant $C_s$ on the region $\{z: \Re(z) \geq -1\}$.
\end{proposition}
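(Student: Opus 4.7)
The plan is to reduce $h'(z)$ to a concrete integral whose absolute value is transparent to bound on the half-plane.

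First I would differentiate $h(z)=z^s\gamma(-s,z)$ using the product rule together with item 5 of \Cref{prop:gamma_facts}:
\[
h'(z) \;=\; s z^{s-1}\gamma(-s,z) \;+\; z^{-1} e^{-z}.
\]
Both summands look singular like $z^{-1}$ at the origin, so to exhibit the cancellation I would substitute $\gamma(-s,z)=\Gamma(-s)-\Gamma(-s,z)$ (item 1) and apply $\Gamma(1-s,z)=-s\Gamma(-s,z)+z^{-s}e^{-z}$ (item 3 with $s\mapsto -s$); this rewrites $-sz^{s-1}\Gamma(-s,z)$ as $z^{s-1}\Gamma(1-s,z)-z^{-1}e^{-z}$, which exactly kills the stray $z^{-1}e^{-z}$. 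Combined with $s\Gamma(-s)=-\Gamma(1-s)$ and $\gamma(1-s,z)=\Gamma(1-s)-\Gamma(1-s,z)$, the expression collapses to
\[
h'(z) \;=\; -\,z^{s-1}\,\gamma(1-s,z).
\]

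Next I would recast this as a manifestly bounded integral. By item 8, $\gamma(1-s,z)=z^{1-s}\Gamma(1-s)\gamma_*(1-s,z)$, so $h'(z)=-\Gamma(1-s)\gamma_*(1-s,z)$, which is entire in $z$ for $s\in(0,1)$ since $\Gamma(1-s)$ is finite and $\gamma_*$ is entire. The substitution $t=zu$ in $\gamma(1-s,z)=\int_0^z t^{-s}e^{-t}\,dt$ yields $\gamma_*(1-s,z)=\tfrac{1}{\Gamma(1-s)}\int_0^1 u^{-s} e^{-zu}\,du$, and hence
\[
h'(z) \;=\; -\int_0^1 u^{-s}\,e^{-zu}\,du,
\]
valid for all $z\in\mathbb{C}$ by analytic continuation. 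Bounding is then immediate: for $\Re(z)\geq -1$ and $u\in[0,1]$, $|e^{-zu}|=e^{-u\Re(z)}\leq e^{u}\leq e$, so
\[
|h'(z)| \;\leq\; e\int_0^1 u^{-s}\,du \;=\; \frac{e}{1-s},
\]
giving the proposition with $C_s=e/(1-s)$.

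The main obstacle is the opening algebraic collapse. A direct attack on the raw expression $sz^{s-1}\gamma(-s,z)+z^{-1}e^{-z}$ runs into two genuine difficulties: the two summands have $z^{-1}$-type blowups at the origin that cancel only after the identity in item 3 is invoked, and the natural asymptotic tool (item 6) is valid only on the sector $|\arg z|\leq 3\pi/4$, which does \emph{not} cover all of $\{\Re(z)\geq -1\}$---for instance $z=-1+iy$ with $|y|>1$ has $|\arg z|>3\pi/4$. Both obstructions vanish after collapsing to $-z^{s-1}\gamma(1-s,z)$ and applying the sector-free integral representation above. (Endpoint remark: $C_s=e/(1-s)\to\infty$ as $s\to 1^-$, so the argument cleanly covers $s\in(0,1)$; at $s=1$ the function $h$ itself needs a separate limiting interpretation consistent with item 8.)
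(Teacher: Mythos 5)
Your proof is correct, and it takes a genuinely different route from the paper's. Both proofs begin by collapsing $h'(z)$ to $-z^{s-1}\gamma(1-s,z)$, but they diverge from there. The paper substitutes $\gamma(1-s,z) = \Gamma(1-s) - \Gamma(1-s,z)$ and invokes the asymptotic expansion of $\Gamma(1-s,z)$ (item 6) for $|z| > 1$, then handles the remaining compact region by continuity of the entire function $h'$. You instead push through item 8 to extract the normalized entire function $\gamma_*$, and then observe that its defining integral gives the closed form
\[
h'(z) \;=\; -\int_0^1 u^{-s}\,e^{-zu}\,du,
\]
from which the bound $|h'(z)| \leq e/(1-s)$ on $\{\Re(z)\geq -1\}$ is immediate. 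Your approach is cleaner: it produces an explicit constant, works uniformly over the whole half-plane without a compactness step, and avoids the paper's reliance on the sector condition $|\arg z|\leq 3\pi/4$. That last point matters, because the paper's claim that ``for $|z|>1$ we have that $z$ lies in the sector'' is not literally true on the half-plane $\Re(z)\geq -1$: points such as $z=-1+\tfrac{1}{2}i$ satisfy $|z|>1$ and $\Re(z)\geq -1$ yet have $|\arg z|>3\pi/4$. (Your specific counterexample is off, though: for $z=-1+iy$ with $|y|>1$ one has $|\arg z|<3\pi/4$, so those points \emph{are} in the sector; the failures occur when $|y|<1$. This does not affect your own argument, only your parenthetical critique of the paper's. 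The paper's gap is in any case harmless, since the offending region is contained in $|z|\leq\sqrt{2}$ and compactness still applies there.) Your endpoint remark about $s=1$ is also apt; both your bound and the paper's degenerate there, but the application uses $s=p/2<1$ so this is immaterial.
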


\begin{proof}
Using the identities in \Cref{prop:gamma_facts} we have
\begin{align*}
h'(z) &= s z^{s-1} \gamma(-s,z) + z^s \frac{d}{dz} \gamma(-s,z)\\
&= s z^{s-1} \gamma(-s,z) + z^s (z^{-s-1}\exp(-z))\\
&= s z^{s-1} \gamma(-s,z) + z^{-1}\exp(-z)\\
&= z^{s-1}\left(\frac{\gamma(1-s,z) + z^{-s}e^{-z}}{-s}\right) + z^{-1}e^{-z}\\
&= -z^{s-1}\gamma(1 - s, z)\\
&= -z^{s-1}(\Gamma(1-s) - \Gamma(1-s, z)).
\end{align*}
By \Cref{prop:gamma_facts}, for $\abs{z} > 1$ we have that $\abs{z}$ lies in the sector and so using the first term of the asymptotic expansion of $\Gamma(\cdot, \cdot),$ we get the following bound:
\begin{align*}
\abs{h'(z)}
&\leq \abs{z}^{s-1}(\abs{\Gamma(1-s)} + \abs{ \Gamma(1-s,z)})\\
&\leq \abs{z}^{s-1}(\abs{\Gamma(1-s)} + c_s\abs{z}^{-s}e^{-\Re(z)})\\
&= \abs{\Gamma(s)}\abs{z}^{s-1} + c_s \abs{z}^{-1}e^{-\Re(z)}.
\end{align*}
Since $s\in(0,1]$, then it follows that $|h'(z)|$ is bounded by a constant on the region $|z|>1$.

This establishes the bound on $\{z : \Re(z) > -1 \cap |z| \geq 1\}.$ Finally, note that $h'(z)$ is entire for fixed $s.$ Therefore by continuity, along with compactness of the unit disk, $h'(z)$ achieves its maximum inside the closed unit disk. Hence, it follows that $\abs{h'(z)}$ is bounded on $\{z: \Re(z) \geq -1\}.$
\end{proof}

\begin{proposition}
\label{prop:max_imaginary_part_zero}
Let $f(z) = z^s \gamma(-s,z)$ where $s \in (0,1).$  Then for fixed $x\in \mathbb{R}$, $f(x + iy)$ is maximized over $y\in \mathbb{R}$ when $y=0.$ 
\end{proposition}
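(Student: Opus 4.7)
The plan is to reduce the claim to a pointwise nonnegativity statement via an entire integral representation of $f$. Since the power series of $f$ has real coefficients, $f(\bar{z}) = \overline{f(z)}$ and in particular $f(x) \in \mathbb{R}$ for all real $x$, so the claim that $f(x+iy)$ is maximized at $y=0$ is naturally read as $\Re f(x+iy) \le f(x)$, with equality iff $y=0$ (indeed, the asymptotic $f(z) \sim \Gamma(-s) z^s$ given by \Cref{prop:gamma_facts} shows that $|f(x+iy)| \to \infty$ as $|y| \to \infty$, so the claim cannot be about $|f|$).

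The main step is the representation
\[
f(z) = -\frac{1}{s} + \int_0^1 u^{-s-1}\bigl(e^{-zu} - 1\bigr)\,du, \qquad z \in \mathbb{C}.
\]
Because $e^{-zu} - 1 = O(u)$ as $u \to 0^+$, the integrand is $O(u^{-s})$ near zero and hence absolutely integrable for $s<1$, and dominated convergence shows that the right-hand side is entire in $z$. To verify the identity, I would first redo the computation from the proof of \Cref{prop:h_bounded_deriv} to obtain $f'(z) = -z^{s-1}\gamma(1-s,z)$, then substitute $t = zu$ in $\gamma(1-s,z) = \int_0^z t^{-s}e^{-t}\,dt$ to rewrite this as $f'(z) = -\int_0^1 u^{-s}e^{-zu}\,du$ for $\Re z > 0$. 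Integrating from $0$ to $z$, swapping the order of integration via Fubini, and using $f(0) = -1/s$ (readable off the series, or from \Cref{prop:gamma_facts}(8)) gives the representation on $\{\Re z>0\}$; by analyticity of both sides, it extends to all of $\mathbb{C}$.

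From the representation, the conclusion is one line. Writing $e^{-zu} = e^{-xu}\bigl(\cos(yu) - i\sin(yu)\bigr)$ and taking real parts yields
\[
\Re f(x+iy) = -\frac{1}{s} + \int_0^1 u^{-s-1}\bigl(e^{-xu}\cos(yu) - 1\bigr)\,du,
\]
so that
\[
f(x) - \Re f(x+iy) = \int_0^1 u^{-s-1}\,e^{-xu}\,\bigl(1-\cos(yu)\bigr)\,du \;\ge\; 0,
\]
with pointwise nonnegative integrand (each factor $u^{-s-1}$, $e^{-xu}$, $1-\cos(yu)$ is nonnegative for real $x,y$ and $u \in (0,1]$). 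Equality forces $\cos(yu) \equiv 1$ on $[0,1]$, i.e.\ $y=0$.

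The only delicate ingredient is the representation itself: the naive formula $\gamma(-s,z) = \int_0^z t^{-s-1}e^{-t}\,dt$ diverges at the origin for $s \ge 0$, so the identity must be produced by analytic continuation, with the subtraction of $1$ in $e^{-zu}-1$ supplying precisely the regularization needed to make the integrand integrable near $u=0$. Everything after that is a pointwise positivity check of the same flavor that gives $|\phi(t)| \le 1$ for characteristic functions of nonnegative random variables.
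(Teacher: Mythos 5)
Your proof is correct and rests on the same key identity as the paper's, namely the entire integral representation
\[
f(z) = -\frac{1}{s} + \int_0^1 u^{-s-1}\bigl(e^{-zu}-1\bigr)\,du,
\]
followed by the same pointwise-positivity observation that $f(x) - \Re f(x+iy) = \int_0^1 u^{-s-1} e^{-xu}\bigl(1-\cos(yu)\bigr)\,du \ge 0$. The only divergence is in how the representation is verified: the paper expands $e^{-zu}-1$ in its Taylor series and matches coefficients term-by-term with the power series of $f$ (justifying the interchange by dominated convergence), whereas you compute $f'(z) = -\int_0^1 u^{-s} e^{-zu}\,du$ from the derivative identity $f'(z) = -z^{s-1}\gamma(1-s,z)$ and then integrate in $z$ and apply Fubini. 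Both are sound; yours has the modest advantage of reusing the derivative computation already carried out in the proof of \Cref{prop:h_bounded_deriv}, while the paper's series-matching is more self-contained since it needs only the defining expansion of $\gamma$. Your opening remark that the statement must be read as a bound on $\Re f$ rather than $|f|$ (since $|f(x+iy)| \to \infty$ as $|y|\to\infty$) is a useful clarification that the paper leaves implicit.
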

\begin{proof}
The analytic continuation of the lower incomplete gamma function is given by its series expansion:
$$ \gamma(a, z) = \sum_{n=0}^\infty \frac{(-1)^n z^{a+n}}{n! (a+n)}. $$
Substituting $a=-s$, we have:
$$ \gamma(-s, z) = \sum_{n=0}^\infty \frac{(-1)^n z^{n-s}}{n! (n-s)}. $$
The function $f(z)$ is thus given by:
\focsarxiv{
\begin{align*}
f(z) &= z^s \gamma(-s, z)\\ 
&= z^s \sum_{n=0}^\infty \frac{(-1)^n z^{n-s}}{n! (n-s)} \\
&= \sum_{n=0}^\infty \frac{(-1)^n z^n}{n! (n-s)}.
\end{align*}
}
{
$$ f(z) = z^s \gamma(-s, z) = z^s \sum_{n=0}^\infty \frac{(-1)^n z^{n-s}}{n! (n-s)} = \sum_{n=0}^\infty \frac{(-1)^n z^n}{n! (n-s)}. $$
}
This representation shows that $f(z)$ is an entire function.

Now let $s\in(0,1)$. We claim that the function $f(z)$ can be represented as:
$$ f(z) = \int_0^1 t^{-s-1} (e^{-zt} - 1) dt - \frac{1}{s}. $$ To see this, let $I(z)$ denote the right-hand side. We first verify the convergence of the integral. Let $h(t, z) = e^{-zt} - 1$. Near $t=0$, $h(t, z) = -zt + O(t^2)$. The integrand is $t^{-s-1} h(t, z) = -z t^{-s} + O(t^{1-s})$. Since $s < 1$, we have $-s > -1$ and $1-s > 0$. Thus, the integral converges at $t=0$. We expand $h(t, z)$ into its Taylor series:
$$ I(z) = \int_0^1 t^{-s-1} \sum_{n=1}^\infty \frac{(-zt)^n}{n!} dt - \frac{1}{s}. $$
We justify the interchange of summation and integration using the Dominated Convergence Theorem (DCT). Let $S_N(t, z) = \sum_{n=1}^N \frac{(-zt)^n}{n!}$. We seek an integrable dominating function for $t^{-s-1} S_N(t, z)$.
We know that $|S_N(t, z)| \leq \sum_{n=1}^N \frac{|z|^n t^n}{n!} \leq e^{|z|t} - 1$.
For $t \geq 0$, we have the inequality $e^{|z|t} - 1 \leq |z|t e^{|z|t}$.
Thus,
$$ |t^{-s-1} S_N(t, z)| \leq t^{-s-1} |z|t e^{|z|t} = |z| t^{-s} e^{|z|t}. $$
Let $G(t) = |z| t^{-s} e^{|z|t}$. Since $-s > -1$, $G(t)$ is integrable on $[0, 1]$. By DCT, we can interchange the summation and integration:
\begin{align*}
I(z) &= \sum_{n=1}^\infty \frac{(-z)^n}{n!} \int_0^1 t^{n-s-1} dt - \frac{1}{s}.
\end{align*}
Since $n \geq 1$ and $s < 1$, $n-s-1 > -1$. The integral evaluates to:
$$ \int_0^1 t^{n-s-1} dt = \frac{1}{n-s}. $$
Therefore,
$$ I(z) = \sum_{n=1}^\infty \frac{(-1)^n z^n}{n! (n-s)} - \frac{1}{s}. $$
Comparing this with the series representation of $f(z)$:
$$ f(z) = \frac{1}{0-s} + \sum_{n=1}^\infty \frac{(-1)^n z^n}{n! (n-s)} = -\frac{1}{s} + \sum_{n=1}^\infty \frac{(-1)^n z^n}{n! (n-s)}. $$
We conclude that $f(z) = I(z)$.

Now set $z = x+iy$. Using the integral representation from above,
$$ f(x+iy) = \int_0^1 t^{-s-1} (e^{-(x+iy)t} - 1) dt - \frac{1}{s}. $$
We examine the real part $g(y)$:
\begin{align*}
g(y) &= \text{Re}\left(\int_0^1 t^{-s-1} (e^{-xt}e^{-iyt} - 1) dt\right) - \frac{1}{s} \\
&= \int_0^1 t^{-s-1} \text{Re}(e^{-xt}e^{-iyt} - 1) dt - \frac{1}{s} \\
&= \int_0^1 t^{-s-1} (e^{-xt}\cos(yt) - 1) dt - \frac{1}{s}.
\end{align*}
We compare $g(y)$ with $g(0)$:
$$ g(0) = \int_0^1 t^{-s-1} (e^{-xt} - 1) dt - \frac{1}{s}. $$
The difference is:
\begin{align*}
g(y) - g(0) &= \int_0^1 t^{-s-1} \left[ (e^{-xt}\cos(yt) - 1) - (e^{-xt} - 1) \right] dt \\
&= \int_0^1 t^{-s-1} e^{-xt} (\cos(yt) - 1) dt.
\end{align*}
We analyze the sign of the integrand, $K(t, y) = t^{-s-1} e^{-xt} (\cos(yt) - 1)$, for $t \in (0, 1]$.
Since $t > 0$, $t^{-s-1} > 0$. Since $x$ is real, $e^{-xt} > 0$. We know that $\cos(yt) \leq 1$, so $\cos(yt) - 1 \leq 0$.
Therefore, the integrand $K(t, y)$ is non-positive for all $t \in (0, 1]$.
This implies that the integral is non-positive:
$$ g(y) - g(0) \leq 0. $$
Hence, $g(y) \leq g(0)$ for all $y \in \mathbb{R}$. The real part of $f(z)$ is maximized when the imaginary part is zero.
\end{proof}


\subsection{Limiting Distribution for Truncated Sums of Inverse Exponentials}

Let $p \in (0, 2)$ and $R>0$. Let $\alpha = 2/p$. Since $p \in (0, 2)$, $\alpha \in (1, \infty)$.
Let $E$ be a standard exponential random variable.
We define a base random variable $Y = E^{-2/p} = E^{-\alpha}$.

We first calculate the distribution of $Y$. Let $F_Y(y)$ be the CDF of $Y$.
\focsarxiv{
\begin{align*}
F_Y(y) &= P(Y \le y) = P(E^{-\alpha} \le y) \\
&= P(E \ge y^{-1/\alpha}) = e^{-y^{-1/\alpha}}.
\end{align*}
}
{
\begin{align*}
F_Y(y) &= P(Y \le y) = P(E^{-\alpha} \le y) = P(E \ge y^{-1/\alpha}) = e^{-y^{-1/\alpha}}.
\end{align*}
}
The PDF of $Y$ is $f_Y(y) = F_Y'(y)$:
\[
f_Y(y) = \frac{1}{\alpha} y^{-1-1/\alpha} e^{-y^{-1/\alpha}}, \quad y>0.
\]
The tail is $1-F_Y(y) = 1 - e^{-y^{-1/\alpha}}$. 
So the distribution $Y$ has a heavy tail with index $1/\alpha = p/2$. 
By the theory of stable distributions, the normalized sum of i.i.d. copies of $Y$ converges to a $(p/2)$-stable law if the normalization factor is $k^{-\alpha} = k^{-2/p}$.

Let $X_1,\ldots, X_k$ be i.i.d. random variables distributed as $Y$ conditioned on $Y \le C_k$, where the truncation threshold is $C_k = R k^{\alpha}$.

Let the normalized sum be $S_k = k^{-\alpha} \sum_{i=1}^k X_i$. Let $\phi_k(t)$ be the CF of $S_k$.  Our main goal of this section is to calculate the limiting distribution of $S_k.$ 

We have
\[
\phi_k(t) = \mathbb{E}[e^{it S_k}] = \left( \mathbb{E}[e^{i (t k^{-\alpha}) X_1}] \right)^k.
\]
Let $\tau_k = t k^{-\alpha}$. We calculate the expectation $E_k = \mathbb{E}[e^{i \tau_k X_1}]$.
The PDF of $X_1$ is $f_X(x) = f_Y(x) / F_Y(C_k)$ for $0 < x \le C_k$.
\begin{align*}
E_k &= \frac{1}{F_Y(C_k)} \int_0^{C_k} e^{i \tau_k x} f_Y(x) dx \\
&= \frac{1}{F_Y(C_k)} \int_0^{C_k} (e^{i \tau_k x}-1+1) f_Y(x) dx \\
&= \frac{1}{F_Y(C_k)} \left( F_Y(C_k) + \int_0^{C_k} (e^{i \tau_k x}-1) f_Y(x) dx \right) \\
&= 1 + \frac{1}{F_Y(C_k)} \int_0^{C_k} (e^{i \tau_k x}-1) f_Y(x) dx.
\end{align*}

We analyze the limit of $\phi_k(t)$.
Let $J_k(t) = k (E_k-1)$.
\[
J_k(t) = \frac{k}{F_Y(C_k)} \int_0^{C_k} (e^{i \tau_k x}-1) f_Y(x) dx.
\]
As $k\to\infty$, $C_k \to \infty$, and $F_Y(C_k) = e^{-C_k^{-1/\alpha}} = e^{-(Rk^\alpha)^{-1/\alpha}} = e^{-R^{-1/\alpha}/k} \to 1$.
We analyze the limit of the remaining terms:
\[
C(t) = \lim_{k\to\infty} k \int_0^{C_k} (e^{i t k^{-\alpha} x}-1) f_Y(x) dx.
\]
We substitute $f_Y(x)$:
\[
C(t) = \lim_{k\to\infty} k \int_0^{Rk^\alpha} (e^{i t k^{-\alpha} x}-1) \frac{1}{\alpha} x^{-1-1/\alpha} e^{-x^{-1/\alpha}} dx.
\]
We perform a change of variables $z = k^{-\alpha} x$. $x = k^\alpha z$. $dx = k^\alpha dz$.
The upper bound becomes $Rk^\alpha / k^\alpha = R$.
\
\begin{align*}
J_k(t) &\approx k \int_0^{R} (e^{i t z}-1) \frac{1}{\alpha} (k^\alpha z)^{-1-1/\alpha} e^{-(k^\alpha z)^{-1/\alpha}} (k^\alpha dz) \\
&= k \int_0^{R} (e^{i t z}-1) \frac{1}{\alpha} k^{-\alpha-1} z^{-1-1/\alpha} e^{-k^{-1} z^{-1/\alpha}} k^\alpha dz \\
&= \int_0^{R} (e^{i t z}-1) \frac{1}{\alpha} z^{-1-1/\alpha} e^{-z^{-1/\alpha}/k} dz.
\end{align*}
As $k\to\infty$, the exponential term $e^{-z^{-1/\alpha}/k} \to 1$. We justify taking the limit inside the integral using the Dominated Convergence Theorem.
The integrand $g_k(z)$ is dominated by $G(z) = |e^{i t z}-1| \frac{1}{\alpha} z^{-1-1/\alpha}$.
Near $z=0$, $G(z) \sim \frac{|t|}{\alpha} z^{-1/\alpha}$. Since $1/\alpha = p/2 < 1$, $G(z)$ is integrable on $[0, R]$.

The limiting log-characteristic function $C(t) = \log \phi(t)$ is
\[
C(t) = \int_0^{R} (e^{i t z}-1) \frac{1}{\alpha} z^{-1-1/\alpha} dz.
\]

Next we evaluate the integral. Let $s = 1/\alpha = p/2$. Note $0 < s < 1$.
\[
C(t) = s \int_0^{R} (e^{i t z}-1) z^{-1-s} dz.
\]
We evaluate this integral using integration by parts. 
Let $u=e^{itz}-1$ and $dv=z^{-1-s}dz$, so that $du=it e^{itz}dz$ and $v=z^{-s}/(-s)$.
\begin{align*}
C(t) &= s \left[ (e^{itz}-1) \frac{z^{-s}}{-s} \right]_0^R - s \int_0^R it e^{itz} \frac{z^{-s}}{-s} dz \\
&= - \left[ (e^{itz}-1) z^{-s} \right]_0^R + it \int_0^R z^{-s} e^{itz} dz.
\end{align*}
At $R$: $-(e^{itR}-1) R^{-s}$.
At $z=0$: $|(e^{itz}-1) z^{-s}| \sim |itz| z^{-s} = |t| z^{1-s}$. Since $s<1$, this tends to 0 as $z\to 0$. 
Then we have:
\[
C(t) = R^{-s}(1-e^{itR}) + it \int_0^R z^{-s} e^{itz} dz.
\]

We relate the integral $K(t) = \int_0^R z^{-s} e^{itz} dz$ to the lower incomplete Gamma function $\gamma(a, Z) = \int_0^Z \tau^{a-1} e^{-\tau} d\tau$.
Let $a=1-s$. $0<a<1$. $K(t) = \int_0^R z^{a-1} e^{itz} dz$.
Using the substitution $\tau = -itz$. Let $Z=-itR$.
\[
K(t) = (-it)^{-a} \gamma(a, Z).
\]
Substituting this back into $C(t)$:
\begin{align*}
C(t) &= R^{-s}(1-e^{itR}) + it (-it)^{-a} \gamma(a, Z) \\
&= R^{-s}(1-e^{-Z}) - (-it)^{1-a} \gamma(a, Z) \\
&= R^{-s}(1-e^{-Z}) - (-it)^{s} \gamma(1-s, Z).
\end{align*}

To rewrite this, we use the recurrence relation for the lower incomplete gamma function:
$\gamma(a+1, Z) = a\gamma(a, Z) - Z^a e^{-Z}$.
Let $a=-s$.
$\gamma(1-s, Z) = -s \gamma(-s, Z) - Z^{-s} e^{-Z}$. We substitute this into the derived expression for $C(t)$. Note that $(-it)^s = (Z/R)^s$.
\focsarxiv{
\begin{align*}
&C(t)\\ &= R^{-s}(1-e^{-Z}) - (Z/R)^s \left[ -s \gamma(-s, Z) - Z^{-s} e^{-Z} \right] \\
&= R^{-s} - R^{-s}e^{-Z} + s (Z/R)^s \gamma(-s, Z) \\
&\hspace{5cm}+ (Z/R)^s Z^{-s} e^{-Z} \\
&= R^{-s} - R^{-s}e^{-Z} + s (Z/R)^s \gamma(-s, Z) + R^{-s} e^{-Z} \\
&= R^{-s} + s (Z/R)^s \gamma(-s, Z) \\
&= R^{-s} \left( 1 + s Z^s \gamma(-s, Z) \right).
\end{align*}
}
{
\begin{align*}
C(t) &= R^{-s}(1-e^{-Z}) - (Z/R)^s \left[ -s \gamma(-s, Z) - Z^{-s} e^{-Z} \right] \\
&= R^{-s} - R^{-s}e^{-Z} + s (Z/R)^s \gamma(-s, Z) + (Z/R)^s Z^{-s} e^{-Z} \\
&= R^{-s} - R^{-s}e^{-Z} + s (Z/R)^s \gamma(-s, Z) + R^{-s} e^{-Z} \\
&= R^{-s} + s (Z/R)^s \gamma(-s, Z) \\
&= R^{-s} \left( 1 + s Z^s \gamma(-s, Z) \right).
\end{align*}
}
\focsarxiv{
Substituting back $s=p/2$ and $Z=-itR$, $\phi(t)$ equals
\begin{align*}
\exp\left(R^{-p/2} \left(1+(p/2)(-i t R)^{p/2}\gamma(-p/2,-it R) \right)\right).
\end{align*}
}
{
Substituting back $s=p/2$ and $Z=-itR$, we have
\[
\phi(t) = \exp\left(R^{-p/2} \left(1+(p/2)(-i t R)^{p/2}\gamma(-p/2,-it R) \right)\right).
\]
}

\subsection{Inversion Formula Analysis}
Let $\phi(\xi)$ be the characteristic function derived above. 
Then by the Gil-Pelaez formula~\cite{wendel1961non}, our CDF is given by
\begin{align*}
F(t) &= \frac12 + \frac{1}{2\pi i}\int_0^{\infty} \frac{e^{it\xi}\phi(-\xi) - e^{-it\xi}\phi(\xi)}{\xi} d\xi\\
&:= \frac{1}{2} + \frac{1}{2\pi i}\int_{0}^{\infty} I_{t,R}(\xi) d\xi
\end{align*}
Our next step is to bound the integrand $I_{t,R}(\xi)$ on a strip about the real line.  This will allow us to apply known convergence bounds for the trapezoid method to show that this integral has an efficient numerical approximation.  For brevity we will drop the subscripts on $I$, but $R$ and $t$ are understood to be fixed parameters.

\begin{lemma}
\label{lem:bound_in_strip}
Let $I$ be the integrand in the Gil-Pelaez inversion formula above. Suppose that $w \leq \frac{1}{c_1(1 + t + R^{c_2})}.$  Then for $\xi$ in the strip $\{x : \Re(x)\geq 0, \abs{\Im(x)} \leq w\}$ we have the bound

\[
\abs{I(\xi)} \leq 
\exp(c_{p,1} - c_{p,2} R^{-p/2}\abs{\xi}^{p/2})(1 + \poly(R) + t).
\]
\end{lemma}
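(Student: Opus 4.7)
The plan is to treat the three factors in $I(\xi) = \xi^{-1}\bigl(e^{it\xi}\phi(-\xi) - e^{-it\xi}\phi(\xi)\bigr)$ separately: the complex exponentials, the two characteristic function values, and the $1/\xi$ prefactor. For $\xi = u+iv$ with $u \geq 0$ and $|v| \leq w$, the hypothesis $w \leq 1/(c_1(1+t))$ immediately gives $|e^{\pm it\xi}| = e^{\mp tv} \leq e^{tw} = O(1)$, an absolute constant. The heart of the argument is the bound on $|\phi(\pm\xi)|$, after which the $1/\xi$ factor is dealt with by splitting into $|\xi| \geq 1$ and $|\xi| < 1$.

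To bound $|\phi(\xi)|$ (the $\phi(-\xi)$ case being symmetric), I would substitute the integral representation $h(z) = \int_0^1\tau^{-s-1}(e^{-z\tau}-1)d\tau - 1/s$ from the proof of Proposition~\ref{prop:max_imaginary_part_zero} into $\log\phi(\xi) = R^{-s}(1+sh(Z))$ with $Z = -i\xi R = vR - iuR$ and $s = p/2$. A short calculation cancels the constants and gives
\[
\Re(\log\phi(\xi)) = sR^{-s}\int_0^1 \tau^{-s-1}\bigl(e^{-vR\tau}\cos(uR\tau)-1\bigr)d\tau.
\]
I would decompose the integrand as $(\cos(uR\tau)-1) + \cos(uR\tau)(e^{-vR\tau}-1)$, calling the two integrals $A(\xi)$ and $B(\xi)$. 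For the main term, the substitution $w = uR\tau$ yields $A(\xi) = su^s\int_0^{uR}w^{-s-1}(\cos w - 1)dw$, and the classical evaluation $\int_0^\infty w^{-s-1}(\cos w - 1)dw = -c_s < 0$ (combined with an elementary $O(1)$ bound when $uR$ is small) gives the stable-law decay, repackaged as $A(\xi) \leq c_{p,1}/2 - c_{p,2}R^{-p/2}|\xi|^{p/2}$ after using $|\xi|\asymp u$ on the strip. For the perturbation, the estimate $|e^{-vR\tau}-1| \leq |v|R\tau\, e^{|v|R}$ with $|v| \leq w$ produces $|B(\xi)| = O(sR^{-s}wRe^{wR}/(1-s))$, which is $O(1)$ once $c_2$ in the hypothesis is taken at least $1-p/2$ so that $wR^{1-s}\leq 1$ and $wR = O(1)$. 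Combining gives $|\phi(\xi)| \leq \exp(c_{p,1} - c_{p,2}R^{-p/2}|\xi|^{p/2})$.

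For the $\xi^{-1}$ prefactor, in the regime $|\xi|\geq 1$ there is nothing to do: $|\xi^{-1}|\leq 1$ and multiplying through preserves the stated form. In the regime $|\xi|<1$, I would exploit the fact that $\log\phi$ is analytic at $0$: the series $h(Z) = -1/s - Z/(1-s) + Z^2/(2(2-s)) - \cdots$ makes $\phi$ an entire function near $0$ with $\phi(0)=1$ and $\phi'(0) = isR^{1-s}/(1-s)$. A first-order Taylor expansion gives $e^{it\xi}\phi(-\xi) - e^{-it\xi}\phi(\xi) = 2(it-\phi'(0))\xi + O(\xi^2)$, whence $|I(\xi)| \leq 2(t + |\phi'(0)|) + O(|\xi|) = O(t + R^{1-s})$. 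This is exactly the $(1 + \poly(R) + t)$ prefactor appearing in the claim.

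The main obstacle will be the careful uniform estimate of $A(\xi)$ across both the ``stable-limit'' regime $uR \gg 1$ and the ``pre-stable'' regime $uR \lesssim 1$, and then matching the result to the target form $-c_{p,2}R^{-p/2}|\xi|^{p/2}$. In the small-$uR$ regime the integral only yields $\approx -u^2 R^{2-s}$-type decay, and one must verify that the gap between this and the target $R^{-s}|\xi|^s$ bound is bounded by a constant absorbable into $c_{p,1}$; this forces a specific tuning of $c_{p,1},c_{p,2}$ relative to $R$. Simultaneously, the perturbation bound on $B(\xi)$ imposes its own constraint on $c_2$ (at least $1 - p/2$), and the exponential $e^{|v|R}$ must be kept $O(1)$, tying $c_1$ and $c_2$ together in the hypothesis on $w$. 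Once these constants are aligned, assembling the three estimates gives the stated bound.
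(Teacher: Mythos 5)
Your proof takes a genuinely different route from the paper. The paper factors the integrand as $I(\xi) = \exp(f(-\xi))\cdot \dfrac{\exp(f(\xi)-f(-\xi))-1}{\xi}$, so that the first factor carries the $\exp(-c\,|\xi|^{p/2})$ decay and the second factor is automatically analytic at $\xi=0$ (the ``$-1$'' cancels the pole), and then bounds the two factors using \Cref{prop:max_imaginary_part_zero}, the asymptotic expansion of $\Gamma(\cdot,\cdot)$, and the derivative bound of \Cref{prop:h_bounded_deriv}. You instead bound $|e^{\pm it\xi}|$, $|\phi(\pm\xi)|$, and $|\xi^{-1}|$ separately, derive the decay of $|\phi(\xi)|$ from the integral representation $h(z)=\int_0^1\tau^{-s-1}(e^{-z\tau}-1)\,d\tau-1/s$ via the $A(\xi)+B(\xi)$ split, and handle the $\xi^{-1}$ singularity by a separate Taylor argument near $0$. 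Both routes are sound; yours is somewhat more elementary because it avoids the $\Gamma$-asymptotics and the max-on-imaginary-axis proposition, but it pays for this by requiring a case split at $|\xi|=1$ that the paper's factorization avoids automatically.

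A few specific points to tighten. First, the step ``$e^{it\xi}\phi(-\xi)-e^{-it\xi}\phi(\xi)=2(it-\phi'(0))\xi+O(\xi^2)$, whence $|I(\xi)|\le 2(t+|\phi'(0)|)+O(|\xi|)$'' is dangerous: the implied constant in $O(\xi^2)$ involves second derivatives of $e^{\pm it\xi}\phi(\mp\xi)$, which carry a $t^2$ term; dividing by $\xi$ with $|\xi|$ of order $1$ then produces $t^2$, which does not fit inside the $(1+\poly(R)+t)$ factor. The clean fix is to use the mean value form: since the numerator vanishes at $\xi=0$, $|I(\xi)|\le\max_{|\zeta|\le|\xi|}|f'(\zeta)|$ where $f(\zeta)=e^{it\zeta}\phi(-\zeta)-e^{-it\zeta}\phi(\zeta)$, and $|f'|$ is $O(t+R^{1-p/2})$ on the strip using $|\phi|=O(1)$ and $|\phi'|=O(R^{1-p/2})$ there (the latter from \Cref{prop:h_bounded_deriv}). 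Second, to make $B(\xi)=O(1)$ you need both $wR^{1-p/2}=O(1)$ \emph{and} $e^{wR}=O(1)$; the latter forces $c_2\ge 1$, not merely $c_2\ge 1-p/2$, once $R$ is allowed to exceed $1$. Third, you correctly observe that $A(\xi)$ gives decay of the form $-c_s\,s\,u^{p/2}$ with \emph{no} $R^{-p/2}$ prefactor, and flag that forcing it into the shape $-c_{p,2}R^{-p/2}|\xi|^{p/2}$ requires $c_{p,2}$ to depend on $R$. This is not a defect of your argument but a mismatch with the lemma's statement: the paper's own proof in fact derives $|\exp(f(-\xi))|\le\exp(c_p R^{-p/2}-c_p'|\xi|^{p/2})$, with the $R^{-p/2}$ attached to the additive constant rather than to the decay rate, and it is this form (not the one in the displayed statement) that your $A/B$ computation reproduces and that is later used in the quadrature bound. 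You should derive and state that form rather than contort the constants to match the displayed inequality.
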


\begin{proof}

We would like to bound this integrand on the strip $\{z: \Re(z)\geq 0, \abs{\Im(z)} \leq w\}$ around the positive real line in the complex plane. 
We will always assume that $w\leq 1.$

Substituting in $\phi$, we rewrite the numerator of the integrand as 
\focsarxiv{
\begin{align*}
&\exp(it\xi + R^{-p/2} (1+(p/2)(i \xi R)^{{p/2}}\gamma(-p/2,i\xi R) )) \\
&-\exp(-it\xi + R^{-p/2}\\
&\hspace{0.2in}\cdot(1+(p/2)(-i \xi R)^{{p/2}}\gamma(-p/2,-i\xi R) ) )
\\
&:= \exp(f(\xi)) - \exp(f(-\xi)),
\end{align*}
}
{
\begin{align*}
&\exp(it\xi + R^{-p/2} (1+(p/2)(i \xi R)^{{p/2}}\gamma(-p/2,i\xi R) )) \\
&-\exp(-it\xi + R^{-p/2}(1+(p/2)(-i \xi R)^{{p/2}}\gamma(-p/2,-i\xi R) ) )
\\
&:= \exp(f(\xi)) - \exp(f(-\xi)),
\end{align*}
}
where
\[
f(\xi) = it\xi + R^{-p/2} \left(1+\frac{p}{2}(i \xi R)^{{p/2}}\gamma(-p/2,i\xi R) \right).
\]
Then our integrand can be rewritten as
\[
\exp(f(-\xi))\cdot\left(\frac{\exp(f(\xi) - f(-\xi)) - 1}{\xi}\right).
\]
We will bound these two terms.

\paragraph{First term.}

We would like to bound $\Re(f(\xi))$ for $\xi$ in the strip.  For the first term of $f$, observe that $\Re(it\xi) = O(1)$ in the strip. The second term of $f$ is maximized for purely imaginary $\xi$ by Proposition~\ref{prop:max_imaginary_part_zero}.  So it suffices to bound this quantity for $\xi$ in the ball of radius $w$ in the right half plane. 

To do this, we use the power series expansion of $\gamma(-p/2, z)$ near $0$ which gives 
\[
1 + \frac{p}{2}z^{p/2}\gamma(-p/2, z) = O(z),
\]
as $\abs{z} \rightarrow 0.$  This means that $f(\xi) \leq O(1 + w R^{1 - p/2}) \leq O(1)$ on a $w$-neighborhood of $0.$  By the above, this bound hold uniformly on the strip.

To analyze the behavior away from zero, we use the asymptotic expansion of the lower incomplete gamma function to obtain
\focsarxiv{
\begin{align*}
1 &+ \frac{p}{2} z^{p/2}\gamma(-p/2, z)\\
&= 1 + \frac{p}{2}\Gamma(-p/2)z^{p/2} + O(z^{-1}e^{-z})\\
&= 1 - c_p z^{p/2} + O(z^{-1}e^{-z})
\end{align*}
}
{
\begin{align*}
1 + \frac{p}{2} z^{p/2}\gamma(-p/2, z) &= 1 + \frac{p}{2}\Gamma(-p/2)z^{p/2} + O(z^{-1}e^{-z})\\
&= 1 - c_p z^{p/2} + O(z^{-1}e^{-z})
\end{align*}
}
as $\abs{z} \rightarrow \infty$ in the sector $\abs{\arg(z)} \leq 3\pi/2,$ and where $c_p$ is a positive constant.




Together these bounds imply that 
there are positive constants $c_p$ and $c_p'$
such that for $\abs{\Im(\xi)} \leq 1/R$, we have
\focsarxiv{
\begin{align*}
\Re(1 + \frac{p}{2}(i\xi R)^{p/2} &\gamma(-p/2, i\xi R)) \\
&\leq \min(c_p, 1 - c_p' \abs{i \xi R}^{p/2}).
\end{align*}
}
{
\[
\Re(1 + \frac{p}{2}(i\xi R)^{p/2} \gamma(-p/2, i\xi R)) \leq \min(c_p, 1 - c_p' \abs{i \xi R}^{p/2}).
\]
}

Thus for $\abs{\Im(\xi)} \leq \frac{1}{t}$ we have 
\[
\abs{\exp(f(-\xi))} \leq \exp(c_p R^{-p/2} - c_p' \xi^{p/2})
\]

\paragraph{Second term.} For the second term, set $h(z) = z^{p/2} \gamma(-p/2, z)$ and note that
\[
f(\xi) - f(-\xi)
= 
2it\xi  + \frac{p}{2} R^{-p/2}(h(i\xi R) - h(-i\xi R)).
\]

Since $\gamma(-p/2, z)$ is real-valued on the real line, we have
\begin{align*}
\Re(h(z) - h(-z))
&= \Re(h(z) - \overline{h(z)} + h(\overline{z}) - h(-z))\\
&= \Re(h(\overline{z}) - h(-z)).
\end{align*}
To bound this, we have
\begin{align*}
\abs{\Re(h(\overline{z}) - h(-z))} &= \abs{\Re(h(\overline{z}) - h(-z)} \\
&\leq
\abs{h(\overline{z}) - h(-z)}
\leq \abs{\int_{-z}^{\overline{z}} h'(t) dt},
\end{align*}
where the integral is along a path joining $-z$ and $\overline{z}.$ Note that $\abs{(-z) - \overline{z}} = 2\abs{\Re(z)}$.  On the horizontal segment joining $-z$ and $\overline{z}$, we have the bound $\abs{h'(t)} \leq C_p$ by \Cref{prop:h_bounded_deriv}, and so the integral is bounded by $2 C_p \abs{\Re(z)}$.

To bound $h(\overline{z}) - h(-z)$  near $0$, recall that we can write the lower incomplete gamma function as
\[
\gamma(s,x) = z^s \Gamma(s)\gamma_*(s,z),
\]
where $\gamma_*$ is entire in $z.$  
So $h(z) = \Gamma(-p/2)\gamma_*(-p/2,z).$  
Note that $(h(z) - h(-z))/z$ is entire in $z$.  In particular, it is continuous in $z$ and therefore bounded by a constant $C_p$ inside the unit disk.

So, we have that $\abs{h(z) - h(-z)}\leq C_p\abs{z}$ in the unit disk, and elsewhere $\Re(h(z) - h(-z)) \leq c\abs{\Re(z)}$. 
Then for $\abs{\xi} \leq \frac{1}{R^c},$
\[\abs{h(-i\xi R) - h(i\xi R)} \leq C_p R \abs{\xi},\]
and elsewhere,
\[
\Re(h(-i\xi R) - h(i\xi R))
\leq c \Re(-i\xi R)
= c R \abs{\Im(\xi)}.
\]
For $\abs{\Im(\xi)} \leq \min(\frac{1}{t}, \frac{1}{\poly(R)}) $ and $\abs{\xi} \geq 1/R,$ we have
\focsarxiv{
\begin{align*}
\abs{\frac{\exp(f(\xi) - f(-\xi)) - 1}{\xi}}\\
&\hspace{-4cm}= \abs{\frac{\exp(2it\xi) \exp(\frac{p}{2} R^{-p/2}(h(i\xi R) - h(-i\xi R))) - 1}{\xi}}\\
&\hspace{-4cm}\leq R (\exp(2t\Im(\xi) + C_p R^{-p/2} R \abs{\Im(\xi)}) + 1)\\
&\hspace{-4cm}\leq C_p R,
\end{align*}
}
{
\begin{align*}
\abs{\frac{\exp(f(\xi) - f(-\xi)) - 1}{\xi}}
&= \abs{\frac{\exp(2it\xi) \exp(\frac{p}{2} R^{-p/2}(h(i\xi R) - h(-i\xi R))) - 1}{\xi}}\\
&\leq R (\exp(2t\Im(\xi) + C_p R^{-p/2} R^c \Im(\xi)) + 1)\\
&\leq C_p R^c,
\end{align*}
}
for a constant $C_p>0$ depending only on $p$.

For $\abs{\xi} \leq \frac{1}{\poly(R)}$ and $\abs{\Im(\xi)} \leq \frac{1}{t}$, we have
\focsarxiv{
\begin{align*}
\abs{\frac{\exp(f(\xi) - f(-\xi)) - 1}{\xi}}\\
&\hspace{-4cm}= \abs{\frac{\exp(2it\xi + \frac{p}{2} R^{-p/2}(h(i\xi R) - h(-i\xi R))) - 1}{\xi}}\\
&\hspace{-4cm}= \Bigg\vert\exp(2it\xi)\cdot\left(\frac{\exp(\frac{p}{2} R^{-p/2}(h(i\xi R) - h(-i\xi R))) - 1}{\xi}\right) \\
&+ \frac{\exp(2it\xi) - 1}{\xi}\Bigg\vert\\
&\hspace{-4cm}\leq \abs{\exp(2it\xi)} \abs{\frac{\exp(\frac{p}{2} R^{-p/2}(h(i\xi R) - h(-i\xi R))) - 1}{\xi}}\\ 
&+ \abs{\frac{\exp(2it\xi) - 1}{\xi}}\\
&\hspace{-4cm}\leq C (1 + R^{1-p/2} + t)\\
&\hspace{-4cm}\leq C(1 + R + t).
\end{align*}
}
{
\begin{align*}
\abs{\frac{\exp(f(\xi) - f(-\xi)) - 1}{\xi}}
&= \abs{\frac{\exp(2it\xi + \frac{p}{2} R^{-p/2}(h(i\xi R) - h(-i\xi R))) - 1}{\xi}}\\
&= \abs{\exp(2it\xi)\cdot\left(\frac{\exp(\frac{p}{2} R^{-p/2}(h(i\xi R) - h(-i\xi R))) - 1}{\xi}\right) + \frac{\exp(2it\xi) - 1}{\xi}}\\
&\leq \abs{\exp(2it\xi)} \abs{\frac{\exp(\frac{p}{2} R^{-p/2}(h(i\xi R) - h(-i\xi R))) - 1}{\xi}} + \abs{\frac{\exp(2it\xi) - 1}{\xi}}\\
&\leq C (1 + R^{1-p/2} + t)\\
&\leq C(1 + R + t).
\end{align*}
}
given the stated bounds on $\xi.$

Combining the bounds above gives the stated result.
\end{proof}

\subsection{Evaluating the Integral Numerically}

This allows us to show that a quadrature rule for the integral converges quickly.  Recall that our CDF was given by
\[
F(t) = \frac12 + \frac{1}{2\pi i}\int_0^{\infty} \frac{e^{it\xi}\phi(-t) - e^{-it\xi}\phi(t)}{\xi} d\xi.
\]
To evaluate this integral, we apply a variant of the trapezoid rule. 
We have previously bounded the integrand off the real line, which is what is needed to guarantee that the trapezoid rule to converge at an exponential rate. 

First recall the simplest numerical integral scheme, 
namely approximating the integral $I = \int_{-\infty}^{\infty}f(x) dx$ by
\[
I_h = h\sum_{i=-\infty}^{\infty} f(x_i)
\]
where $x_{i+1} = h + x_i$, so that the $x_i$'s form a fixed mesh with separation $h.$  We will use the following convergence bound.

\begin{theorem}
\label{thm:trapezoid_convergence}
(\cite{trefethen2014exponentially} Theorem 5.1) Suppose that $f$ is analytic in the strip $\Im(z) \leq a$ for some $a>0.$  Suppose that $f(x)\rightarrow 0$ uniformly as $\abs{x}\rightarrow\infty$ in the strip, and that for some $M$,
\[
\int_{-\infty}^{\infty}\abs{f(x + ib)} dx \leq M,
\]
for all $b\in (-a,a).$  Then 
\[
\abs{I_h - I} \leq \frac{2M}{e^{2\pi a/h} - 1}.
\]
\end{theorem}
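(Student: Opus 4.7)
The plan is to prove this via Poisson summation combined with a contour shift argument, which is the standard route to exponential convergence of the trapezoid rule for analytic integrands. The key identity is that, under the stated integrability and decay hypotheses, the Poisson summation formula gives
\[
h\sum_{n\in\mathbb{Z}} f(nh) \;=\; \sum_{k\in\mathbb{Z}} \widehat{f}(k/h),
\]
where $\widehat{f}(\xi) = \int_{-\infty}^\infty f(x)e^{-2\pi i \xi x}\,dx$. Since $\widehat{f}(0)=I$ and the left-hand side is exactly $I_h$, this immediately yields the error representation
\[
I_h - I \;=\; \sum_{k\neq 0}\widehat{f}(k/h).
\]
So the error of the trapezoid rule is exactly the aliasing error, i.e.\ the sum of the nonzero Fourier modes of $f$ sampled at the dual grid of spacing $1/h$.

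The next step is to bound $|\widehat{f}(k/h)|$ using analyticity of $f$ in the strip $|\Im z|\le a$. For $k>0$, I would shift the contour of integration defining $\widehat{f}(k/h)$ from the real line down to the line $\Im z = -(a-\eta)$ for any $\eta>0$; the uniform decay hypothesis at $|x|\to\infty$ in the strip justifies closing off the rectangular contour and killing the vertical end pieces. On the shifted contour, $|e^{-2\pi i(k/h)(x-i(a-\eta))}| = e^{-2\pi(a-\eta)k/h}$, and the assumed $L^1$ bound along horizontal lines gives
\[
|\widehat{f}(k/h)| \;\le\; M\, e^{-2\pi(a-\eta)k/h}.
\]
Letting $\eta\downarrow 0$ (permissible because $M$ is uniform over $b\in(-a,a)$) yields $|\widehat{f}(k/h)|\le M e^{-2\pi a k/h}$. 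The symmetric argument with the contour shifted upward handles $k<0$.

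Finally I would sum the geometric series:
\[
|I_h - I| \;\le\; \sum_{k\neq 0}|\widehat{f}(k/h)| \;\le\; 2M\sum_{k=1}^{\infty} e^{-2\pi a k/h} \;=\; \frac{2M\,e^{-2\pi a/h}}{1-e^{-2\pi a/h}} \;=\; \frac{2M}{e^{2\pi a/h}-1},
\]
which is the claimed bound. The main technical obstacle is the contour shift: I need to carefully justify that the rectangular contour integrals $\int_{-R}^{R}$ along $\Im z = 0$ and $\Im z = -(a-\eta)$ differ only by vanishing vertical pieces as $R\to\infty$. This is precisely where the hypothesis that $f(x)\to 0$ \emph{uniformly} as $|x|\to\infty$ in the strip is used; together with the $L^1$ bound $M$ on horizontal slices it forces the vertical contributions to vanish and makes the Poisson summation identity rigorous. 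Beyond that, the proof is a short calculation.
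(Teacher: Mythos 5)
The paper does not prove this theorem; it cites it directly as Theorem 5.1 of Trefethen and Weideman's survey \cite{trefethen2014exponentially}, so there is no in-paper proof to match against. Your Poisson-summation argument is correct and yields exactly the stated bound. The cited source instead uses a kernel/residue representation of the error: $I_h - I$ is written as a pair of contour integrals over the lines $\Im z = \pm(a-\eta)$ against the kernel $\bigl(e^{\mp 2\pi i z/h}-1\bigr)^{-1}$, whose poles sit at the trapezoid nodes, and these are then bounded using the horizontal $L^1$ bound $M$ together with $\lvert e^{\mp 2\pi i z/h}-1\rvert \geq e^{2\pi(a-\eta)/h}-1$ on the shifted lines. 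Your route expresses the error as the aliasing sum $\sum_{k\neq 0}\widehat f(k/h)$ via Poisson summation and then bounds each Fourier coefficient by shifting the contour of the Fourier integral; the two arguments are closely related (Poisson summation is itself proved by essentially the same residue computation), and the final geometric-series summation is identical. Your version is a bit more transparent conceptually, isolating the error as aliasing; the source's version is slightly more self-contained since it avoids appealing to Poisson summation as a black box. One point you should make fully explicit: invoking Poisson summation pointwise requires $f\in L^1(\mathbb R)$ (the $b=0$ case of the hypothesis), continuity of $f$ (automatic from analyticity), and absolute convergence of $\sum_k \widehat f(k/h)$ (which your exponential decay bound on $\widehat f$ supplies). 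You gesture at this at the end, but it is the one step that genuinely uses all the hypotheses at once and deserves a sentence of its own.
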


Now we use this along with our previous bounds for our lemma.

\begin{lemma}
For fixed $R\geq \frac{1}{\log n}$ $t$, there is an algorithm that evaluates $F(t)$ to within $\eps$ additive error using $\poly(t, R, \log\frac{1}{\eps})$ evaluations of the integrand $I(t)$.
\end{lemma}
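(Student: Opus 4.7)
The plan is to estimate the integral $\int_{0}^{\infty} I(\xi)\,d\xi$ in the Gil--Pelaez formula by a truncated trapezoid-rule quadrature, combining three ingredients: an evenness observation that lets us pass to the full real line, the strip bound from \lemref{bound_in_strip}, and the exponential convergence estimate in \thmref{trapezoid_convergence}.

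First, I would observe that $I(\xi)$ is even in $\xi$: substituting $-\xi$ into the definition of $I$ and simplifying yields $I(-\xi)=I(\xi)$, so $\int_{0}^{\infty} I(\xi)\,d\xi=\tfrac{1}{2}\int_{-\infty}^{\infty} I(\xi)\,d\xi$. Combined with the analyticity of $I$ on the right-half strip $\{\xi:\Re(\xi)\geq 0,\,|\Im(\xi)|\leq w\}$ from \lemref{bound_in_strip}, evenness gives analyticity on the full two-sided strip $S_w:=\{\xi:|\Im(\xi)|\leq w\}$ of width $w=\Theta(1/(1+t+R^{c_2}))$. The bound of \lemref{bound_in_strip} furnishes the uniform decay estimate $|I(x+ib)|\leq (1+\poly(R)+t)\exp(c_{p,1}-c_{p,2}R^{-p/2}|x+ib|^{p/2})$ for $|b|\leq w$, which integrates in $x$ to some $M=\poly(R,t)$ uniformly in $b$ and tends to $0$ uniformly in $b$ as $|x|\to\infty$, matching the hypotheses of \thmref{trapezoid_convergence}.

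Next, I would apply \thmref{trapezoid_convergence} to $I$ on $S_w$ with mesh size $h$, obtaining
\[
\left|\,h\sum_{j\in\mathbb{Z}}I(jh)\,-\,\int_{-\infty}^{\infty} I(\xi)\,d\xi\,\right|\leq\frac{2M}{e^{2\pi w/h}-1}.
\]
Choosing $h=2\pi w/\log(1+8M/\eps)$ bounds the right-hand side by $\eps/4$, and $1/h=\poly(R,t,\log(1/\eps))$ by the expressions for $w$ and $M$. To make the computation finite, I would truncate the bi-infinite sum to $|j|\leq N:=\lceil \Xi/h\rceil$ where $\Xi^{p/2}\geq R^{p/2}(c_{p,1}+\log(8(1+\poly(R)+t)/\eps))/c_{p,2}$, so $\Xi=\poly(R,t,\log(1/\eps))$. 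Since $|I|$ is monotone decreasing on the real axis for large $|\xi|$ by the exponential bound, the tail of the sum is dominated by $\int_{|\xi|\geq\Xi-h}|I(\xi)|\,d\xi\leq \eps/4$. Dividing the resulting estimate by $2$ and adding $\tfrac{1}{2}$ recovers $F(t)$ to additive error at most $\eps$, using $O(N)=\poly(R,t,\log(1/\eps))$ evaluations of $I$.

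The main obstacle is essentially already contained in \lemref{bound_in_strip}: one must control $I$ in the regime $|\xi|\lesssim 1/R$ where the asymptotic expansion of $\gamma(-p/2,\cdot)$ breaks down and the Taylor series must be used instead, and one must balance the strip width $w$ against the polynomial growth in $t$ and $R$ that appears in the complex exponent. Given that lemma, the remaining work is bookkeeping: because $R\geq 1/\log n$ and $p$ is a fixed constant, the quantities $R^{-p/2}$ and $R^{c_2}$ are $\polylog(n)$, so every parameter in $h$, $\Xi$, and $M$ is polynomial in $R$, $t$, and $\log(1/\eps)$, yielding the claimed evaluation count.
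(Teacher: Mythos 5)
Your proof is correct, and it takes a genuinely different route from the paper's. Where the paper first truncates the Gil--Pelaez integral to a finite interval $[0,L]$ and then applies the substitution $\xi \mapsto \tfrac{L}{2}(1+\tanh(\xi))$ to map onto $(-\infty,\infty)$ before invoking the trapezoid rule, you instead observe that $I_{t,R}(-\xi)=I_{t,R}(\xi)$ (which one checks directly: both the numerator and the denominator change sign), so $\int_0^\infty I = \tfrac12\int_{-\infty}^\infty I$, and the strip bound from \lemref{bound_in_strip} together with evenness carries over to the full two-sided strip. You then apply \thmref{trapezoid_convergence} directly and truncate the resulting bi-infinite sum. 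This is a cleaner argument: the paper motivates its $\tanh$ substitution by a desire to ``move the branch point at $0$ to infinity,'' but in fact the apparent branch point in $(-i\xi R)^{p/2}$ cancels against the one in $\gamma(-p/2,-i\xi R)$ because $z^{p/2}\gamma(-p/2,z)=\Gamma(-p/2)\gamma_*(-p/2,z)$ is entire (\Cref{prop:gamma_facts}), and the numerator of $I$ vanishes at $\xi=0$, so $I$ extends to an entire function. Your approach exploits this implicitly and avoids the change of variables entirely; both routes give the same $\poly(t,R,\log\frac{1}{\eps})$ evaluation count.

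Two small points worth tightening. First, you say ``$|I|$ is monotone decreasing on the real axis for large $|\xi|$'' to bound the truncation error; strictly speaking only the upper bound $(1+\poly(R)+t)\exp(c_{p,1}-c_{p,2}R^{-p/2}|\xi|^{p/2})$ is monotone, but the comparison of the tail sum against the tail integral should be phrased for that majorant rather than for $|I|$ itself. Second, the mesh $\{jh : j\in\mathbb{Z}\}$ contains $\xi=0$, where $I$ has only a removable singularity; either shift the mesh by $h/2$ (\thmref{trapezoid_convergence} permits any mesh of separation $h$) or note that the limiting value $I(0)$ is finite and can be computed from a low-order Taylor expansion of the numerator. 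Neither of these affects the stated evaluation count.
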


\begin{proof}
We first apply a transformation to our integral to move the branch point at $0$ to infinity.  We will use a version of the $\tanh$ quadrature rule here, although there are other possibilities.

We will truncate our integral to a finite range $[0,L]$, so that the integral over $[0,\infty)$ is approximated to within $\eps$.  Using the decay on $I(t)$ given by \Cref{lem:bound_in_strip}, we can take $L = \O{\poly(R,t, \log\frac{1}{\eps})}$.

We use the transformation $t \mapsto \frac{1}{2}L\cdot(1 + \tanh(t))$ to rewrite the integral:
\begin{align*}
\int_{0}^L I(t) dt
&= \frac{L}{2}\int_{-\infty}^{\infty} I(L\cdot(1 + \tanh(t)) \sech^2(t) dt\\
&:= \frac{L}{2}\int_{-\infty}^{\infty} \tilde{I}(t) dt.
\end{align*}
From \Cref{lem:bound_in_strip} we have that $I(t) \leq C_{t,R}$ in the strip of width $w$ about the real axis.

For $w < \pi/2$, the function $\tanh$ maps the strip  $S_w = \{z:\abs{\Im{z}} \leq w\}$ into the strip $S_{\tan w}$ and in fact it maps to a lens contained in this strip.  
Therefore, if $z\in S_w$, then $L(1+\tanh(z))$ is contained in $S_{L\tan(w)},$ which is contained in $S_{2Lw}$ for $w\leq 1.$  
We will apply \Cref{thm:trapezoid_convergence} on the strip of width $a = \frac{1}{c(1+\abs{t} + \abs{R})}\frac{1}{2L}.$  
For $z$ in the strip $S_a$ we then have $\abs{\tilde{I}(z)} \leq C_{t,r} \abs{\sech^2(z)}.$  
Since $\sech(z)$ decays exponentially on the strip as $\abs{z}\rightarrow\infty$, we can take $M = \O{C_{t,r}}$ in \Cref{thm:trapezoid_convergence}. 
This means that we have
\[\abs{I_h - I} \leq \O{\frac{C_{t,R}}{e^{2\pi a / h} - 1}},\]
which is upper bounded by $\eps$ for 
\begin{align*}\frac{1}{h} &= \O{L \abs{t} + \abs{\log\frac{1}{R}}}\cdot\log C_{t,r} \log\frac{1}{\eps }\\
&= \O{(\abs{t} \polylog(t,r,\eps)}.
\end{align*}
Since this integrand decays exponentially on both sides, we may approximate it to within $\eps$ by restricting to a domain of width $\O{\log C_{t,R}/\eps}$.  Thus we only require $\poly(t, R, \log\frac{1}{\eps})$ evaluations of the integrand.

\end{proof}

\subsection{Derandomizing the \texorpdfstring{$L_p$}{Lp} Sampler}
The previous analysis assumes that independent exponential random variables can be efficiently generated and stored. 
To achieve a true streaming algorithm, we now turn our attention to derandomizing these elements of the linear sketch. 
We first recall the PRG of \cite{GopalanKM18} that fools a certain class of Fourier transforms. 
\begin{definition}
An \emph{$(m,n)$-Fourier shape} is a function $f:[m]^n \to \mathbb{C}$ of the form
\[f(x_1,\ldots,x_n) = \prod_{j=1}^n f_j(x_j),\]
where each $f_j:[m]\to\mathbb{C}$.
\end{definition}
\noindent
We have the following properties of the Fourier-shape fooling PRG, c.f., Theorem 1.1 of \cite{GopalanKM18}.
\begin{theorem}
\cite{GopalanKM18}
\label{thm:gkm:prg}
Let $f:[m]^n\to\mathbb{C}$ be an $(m,n)$-Fourier shape, where $f(x_1,\ldots,x_n)=\prod_{j=1}^n f_j(x_j)$ and each $f_j:[m]\to\mathbb{C}$. 
Then for any $\eps>0$, there exists a seed length $\ell=\O{\log\frac{mn}{\eps}\cdot\left(\log\log\frac{mn}{\eps}\right)^2}$ and a deterministic function $G:\{0,1\}^\ell\to[m]^n$ such that 
\[\left\lvert\EEx{x\sim[m]^n}{f(x)}-\EEx{y\sim\{0,1\}^\ell}{f(G(y))}\right\rvert\le\eps.\]
Equivalently, all $(m,n)$-Fourier shapes are $\eps$-fooled by $G$.
\end{theorem}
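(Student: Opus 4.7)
The plan is to reproduce the construction of Gopalan, Kane, and Meka, which extends the INW-style pseudorandom-generator framework from combinatorial rectangles (real $\{0,1\}$-valued products) to bounded complex-valued products. First, I would reduce to the case where each $|f_j(x_j)| \le 1$ by splitting each $f_j$ into its real and imaginary parts, scaling, and then fooling real- and imaginary-valued Fourier shapes separately; an $\eps$-fooling of the four real sub-shapes implies an $\O{\eps}$-fooling of the original $f$. I would also assume $m$ is a power of two and work over $\{0,1\}^{\log m}$ per coordinate so that seed bits are easy to account for.

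Next, I would apply a recursive hashing construction. Partition the $n$ coordinates into $r = \polylog(mn/\eps)$ buckets using a $k$-wise independent hash function $h:[n]\to[r]$ with $k=\polylog(mn/\eps)$. Within each bucket $b$, the restricted Fourier shape $\prod_{j : h(j)=b} f_j(x_j)$ has the same product structure on a smaller domain, so one invokes the generator recursively, with a small-bias combining step between the buckets. Each level uses $\O{\log(mn/\eps)}$ bits for the hash, $\O{\log(mn/\eps)}$ bits for the combining step, and then calls the generator on smaller instances; since the effective domain size contracts by a $\polylog(mn/\eps)$ factor per level, the recursion terminates after $\O{\log\log(mn/\eps)}$ levels, yielding total seed length $\O{\log(mn/\eps) \cdot (\log\log(mn/\eps))^2}$.

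The main technical step is a hybrid argument that replaces the truly random samples in each bucket by pseudorandom samples one bucket at a time. The error of each swap is controlled by a second-moment bound on the ``remaining product'' random variable $\prod_{j \notin B} f_j(x_j)$, where $B$ is the bucket currently being swapped. For a typical hash drawn from a $k$-wise independent family, the buckets are balanced and the variance of the partial product restricted to a single bucket is small, so Cauchy--Schwarz bounds the fooling error contributed per swap by roughly $\eps/r$, for a total of $\eps$ over $r$ buckets.

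The hard part is this second-moment analysis: unlike in the Boolean setting, partial products of Fourier shapes are complex-valued and do not concentrate in any obvious way, so one has to carefully exploit the product structure and the modulus-one normalization to control the fourth moment of each bucket's partial product under $k$-wise independent hashing. Once that variance bound is in place, the seed accounting and recursion follow standard INW-style PRG composition, and the $(\log\log)^2$ factor in the seed length arises naturally from the $\log\log(mn/\eps)$ recursion depth times the $\log\log(mn/\eps)$ overhead that the small-bias combining step incurs at each level.
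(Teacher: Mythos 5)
This is a cited result from \cite{GopalanKM18}; the paper gives no proof, so the comparison is between your blind reconstruction and the actual GKM construction, which the paper does sketch informally in the derandomization section.

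Your outline does not match what GKM actually do, and I think the mismatch is not merely cosmetic. GKM's construction is \emph{not} a single recursive hash-and-combine scheme with a per-swap second-moment bound. Instead, their proof hinges on a case analysis driven by a variance-type parameter of the Fourier shape (roughly, the total Fourier mass $\sum_j (1 - |\widehat{f_j}(0)|^2)$). For high total variance they argue that the expectation $\prod_j \widehat{f_j}(0)$ is already exponentially small and use pairwise-independent subsampling plus a Nisan--Zuckerman generator to match it; for a bounded number of individually large Fourier coordinates they iterate Nisan's generator; and only in the low-variance regime do they use $k$-wise hashing together with an alphabet/dimension-reduction step. The $(\log\log)^2$ overhead comes from composing these reductions, not from a single INW-style recursion. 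Your one-size-fits-all hybrid argument runs exactly into the obstacle this case analysis is designed to circumvent: when the shape has high variance, the ``remaining product'' $\prod_{j\notin B} f_j(x_j)$ has typical magnitude exponentially small, so a Cauchy--Schwarz second-moment bound gives additive error on the same tiny scale but no useful \emph{relative} control, and the $r$ swap errors do not cleanly accumulate to $\eps$. Acknowledging that step is ``the hard part'' does not resolve it; it is precisely where the known hybrid approach breaks and where GKM introduce the variance dichotomy.

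There is also a concrete error at the start. Splitting each $f_j$ into real and imaginary parts does not decompose a complex Fourier shape into a constant number of real-valued Fourier shapes: $\prod_j(a_j+ib_j)$ expands into $2^n$ monomials, not four, so $\Re(f)$ and $\Im(f)$ are not themselves products of the $a_j$'s and $b_j$'s. The reduction you are reaching for is unnecessary anyway, since GKM's definition of a Fourier shape already builds in $|f_j(x)|\le 1$ (a hypothesis that should really appear in the theorem statement; without it the claim is false). So the preprocessing step should simply be dropped and the boundedness taken as given.
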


Moreover, the PRG of \cite{GopalanKM18} fools the class of half-space queries, defined as follows. 
For each $i\in[\lambda]$, let a half-space query $H_i:\mathbb{R}^n\to\{0,1\}$ on an input $X=(x_1,\ldots,x_n)$ be defined by the indicator function $\mathbf{1}\left[a_1^{(i)}x_1+\ldots+a_n^{(i)}x_n\right]>\theta_i$, where $a^{(i)}\in\mathbb{Z}^n$ and $\theta_i\in\mathbb{Z}$ for all $i\in[n]$. 
\begin{definition}[$\lambda$-half-space tester]
Given an input $X=(x_1,\ldots,x_n)$, a $\lambda$-half-space tester is a function $\sigma(H_1(X),\ldots,H_\lambda(X))\in\{0,1\}$ where $\sigma:\{0,1\}^\lambda\to\{0,1\}$ and $H_1,\ldots,H_\lambda(X)$ are half-space queries. 
We say the $\lambda$-half-space tester is $M$-bounded if each input coordinate $x_i$ is drawn from a distribution on the integers with magnitude at most $M$, and if all coefficients $a_j^{(i)}$ and thresholds $\theta_i$ have magnitude at most $M$. 
\end{definition}
We have the following properties of the half-space fooling PRG, c.f., Lemma 7 and Proposition 8 of \cite{JayaramW18}.
\begin{theorem}
\cite{GopalanKM18,JayaramW18}
\label{thm:gkm:prg:half:space}
Let $\calD$ be a distribution on the integers $\{-M,\ldots,M\}$ with magnitude at most $M$ that can be sampled using at most $\O{\log M}$ random bits. 
Let $x_1,\ldots,x_n\sim\calD$ and let $X=(x_1,\ldots,x_n)$. 
Then for any $\eps>0$ and $c\ge 1$, there exists $\ell=\O{\lambda\left(\log\frac{nM}{\eps}\right)\left(\log\log\frac{nM}{\eps}\right)^2}$ and a deterministic function $F:\{0,1\}^\ell\to\{-M,\ldots,M\}^n$ such that for every set of $\lambda$ half-space tester $\sigma_H$, we have
\[\left\lvert\EEx{X\sim\calD^n}{\sigma_H(X)=1}-\EEx{y\sim\{0,1\}^\ell}{\sigma_H(F(y))=1}\right\rvert\le\eps.\]
Moreover, if $F(y)\in\{-M,\ldots,M\}^n$ then each coordinate of $F(y)$ can be computed in $\O{\ell}$ space and $\polylog(nM)$ time. 
\end{theorem}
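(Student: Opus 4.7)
The plan is to reduce $\eps$-fooling the $\lambda$-halfspace tester $\sigma_H$ to fooling an appropriately weighted family of Fourier shapes, and then apply the PRG of \Cref{thm:gkm:prg}. The reduction goes through a Fourier-integral representation of halfspace indicators and a Boolean expansion of $\sigma$ over the $2^\lambda$ sign patterns of $(H_1(X),\ldots,H_\lambda(X))$.

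First, I would handle a single halfspace. Since $X\in\{-M,\ldots,M\}^n$ and the coefficients $a^{(i)},\theta_i$ are integers of magnitude at most $M$, the quantity $a^{(i)}\cdot X - \theta_i$ is integer-valued, so convolving the Heaviside step with a narrow analytic bump supported in a half-integer interval yields a smoothed approximation $\widetilde{H}_i$ which agrees with $H_i$ exactly on the support of $\calD^n$. The smoothed indicator admits a Fourier representation
\[\widetilde{H}_i(X) = \int_{-\Xi}^{\Xi} \widehat{g}(\xi)\, e^{i\xi\left(a^{(i)}\cdot X - \theta_i\right)}\, d\xi + \O{\eps/\lambda},\]
for a truncation $\Xi = \poly(nM\lambda/\eps)$, where the truncation error is controlled by the decay of $\widehat{g}$. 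For each fixed $\xi$ the integrand factors as $e^{-i\xi\theta_i}\prod_j e^{i\xi a_j^{(i)} x_j}$, which is precisely an $(m,n)$-Fourier shape in the sense of \Cref{thm:gkm:prg}, with $m=\O{M}$ after rounding $\xi$ onto a polynomially fine mesh.

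Next, I would expand the tester as a Boolean polynomial
\[\sigma_H\left(H_1(X),\ldots,H_\lambda(X)\right) = \sum_{b\in\{0,1\}^\lambda:\ \sigma(b)=1}\ \prod_{i=1}^{\lambda}\left(b_i H_i(X) + (1-b_i)(1-H_i(X))\right),\]
and substitute the Fourier representation of each $\widetilde{H}_i$. Every summand becomes a $\lambda$-fold complex integral whose integrand, for fixed $\xi=(\xi_1,\ldots,\xi_\lambda)$, factors across the coordinates $j$ into a single Fourier shape $\prod_j\exp\left(i\sum_i\xi_i a_j^{(i)} x_j\right)$. Discretizing each $\xi_i$ on a uniform mesh of $Q=\poly(nM\lambda/\eps)$ nodes and applying a standard $L^1$ quadrature estimate replaces $\Ex{\sigma_H(X)}$, to within additive $\eps/2$, by a weighted sum of at most $K\leq 2^\lambda Q^\lambda = \poly(nM/\eps)^{\lambda}$ Fourier shapes whose total coefficient mass $T$ satisfies $\log T = \O{\lambda\log(nM/\eps)}$.

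Finally, by linearity, any generator that $(\eps/(2T))$-fools every $(m,n)$-Fourier shape also $\eps$-fools $\Ex{\sigma_H}$. Applying \Cref{thm:gkm:prg} with $\eps' = \eps/(2T)$ yields a seed of length
\[\ell = \O{\log\left(\frac{nM}{\eps'}\right)\left(\log\log\left(\frac{nM}{\eps'}\right)\right)^2} = \O{\lambda\log\left(\frac{nM}{\eps}\right)\left(\log\log\left(\frac{nM}{\eps}\right)\right)^2},\]
using $\log(1/\eps') = \O{\lambda\log(nM/\eps)}$. For the efficiency clause, the GKM construction is explicit: each coordinate of $F(y)$ is obtained by evaluating a short-seed hash family on $\O{\log(nM)}$-bit blocks, which can be computed in $\O{\ell}$ working space and $\polylog(nM)$ arithmetic time. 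The hard part will be the joint error analysis, where the smoothing width, Fourier truncation $\Xi$, and quadrature spacing must be chosen so that the three approximation errors sum to $\O{\eps}$ while the total coefficient mass $T$ stays at $\poly(nM/\eps)^{\O{\lambda}}$, since it is precisely $\log T$ inside the GKM seed length that produces the linear $\lambda$ overhead stated in the theorem.
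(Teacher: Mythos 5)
The paper does not prove this statement; it is imported directly from the cited references, with the pointer ``c.f.~Lemma~7 and Proposition~8 of~\cite{JayaramW18}'' appearing just before the theorem. So there is no internal proof to compare against. That said, your Fourier-analytic route --- mollify each halfspace indicator, write it as a Fourier integral, expand $\sigma$ as a multilinear polynomial over sign patterns, discretize, and invoke the Fourier-shape PRG of \Cref{thm:gkm:prg} with a shrunken error $\eps'$ --- is indeed in the spirit of how GKM derive their halfspace consequences from the Fourier-shape result, and the $\lambda$ overhead in seed length appearing in JW's Proposition~8 is consistent with the $\log(1/\eps')=\O{\lambda\log(nM/\eps)}$ blowup you compute.

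A few points need repair or justification before this would stand as a proof. First, \Cref{thm:gkm:prg} as stated fools Fourier shapes under the \emph{uniform} distribution on $[m]^n$, while the theorem you are proving is about an arbitrary product distribution $\calD^n$ on $\{-M,\ldots,M\}^n$; you must explicitly reparametrize each $x_j=\varphi(u_j)$ with $u_j$ uniform on $\O{\log M}$ bits (the hypothesis that $\calD$ is samplable in $\O{\log M}$ bits is there precisely for this) and carry the $\varphi$ inside each $f_j$, which is why the $m$ parameter in the Fourier-shape PRG becomes $\poly(M)$. Second, a ``narrow analytic bump supported in a half-integer interval'' does not exist: a nonzero compactly supported function cannot be analytic, so you need a $C^\infty$ bump, which is fine because you only need super-polynomial Fourier decay for the truncation at $\Xi$. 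Third, your claim that the total coefficient mass satisfies $\log T=\O{\lambda\log(nM/\eps)}$ is asserted rather than derived; the $L^1$ mass of $\widehat{g}$ carries a $1/\xi$ near-singularity from the Heaviside factor, and you must show this contributes only a $\polylog(nM\lambda/\eps)$ factor per halfspace after mollification and mesh truncation. Your bound on $\log T$ is generous enough that the end result is unchanged, but the bound should be justified since it is exactly what produces the $\lambda$ in the seed length. Finally, you should state that the \emph{same} GKM seed fools all Fourier shapes simultaneously (as \Cref{thm:gkm:prg} notes), which is what allows a single $F$ to work for every tester $\sigma_H$ --- this is needed for the ``for every set of $\lambda$ half-space tester'' quantifier.

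In short: right framework, but the reduction to the uniform-product domain, the $L^1$ bound on the mollified Fourier kernel, and the ``one PRG fools all'' observation are the load-bearing details you would still need to supply.
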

For convenience of discussion, we shall refer to both the Fourier-shape fooling PRG and the half-space fooling PRG by \cite{GopalanKM18} as the GKM PRG. 
We now describe the construction of the GKM PRG, though for the purposes of derandomizing our algorithm, the guarantees of \Cref{thm:gkm:prg} and \Cref{thm:gkm:prg:half:space} suffice. 
The GKM PRG is composed of three main components, each addressing different cases for the variance of the Fourier coefficients in the tester. 
The first component handles the high variance cases where there can be many nonzero Fourier coefficients; it uses a pairwise hash function for subsampling and applies a PRG by Nisan and Zuckerman~\cite{nisan1996randomness} to derandomize the selection process, allowing the generator to produce pseudorandom output in linear time and space relative to the seed. 
The second component handles the high variance case where there can be a small number of large Fourier coefficients; it reduces the problem size to smaller coefficients by iteratively composing smaller instantiations of Nisan's PRG~\cite{Nisan92}.  
Finally, the third component focuses on low variance cases and reduces the number of coordinates while expanding the range of values by using $k$-wise independent hash functions to handle the reduced coordinate space and a recursive structure to handle varying levels of complexity.

Additionally, we require the following derandomization property from \cite{JayaramW18}:
\begin{theorem}
\label{thm:jw:prg}
\cite{JayaramW18}
Let $\calS$ be any streaming algorithm that, given a stream vector $f\in\{-M,\ldots,M\}^n$ and a fixed matrix $X\in\mathbb{R}^{t\times nk}$ with entries from $\{-M,\ldots,M\}$ for some $M=\poly(n)$, produces an output determined solely by the sketches $A f$ and $X\cdot\vec{A}$. 
Suppose the random matrix $A\in\mathbb{R}^{k\times n}$ has i.i.d.\ entries, each samplable using $\O{\log n}$ bits. 
Let $\sigma:\mathbb{R}^k\times\mathbb{R}^t\to\{0,1\}$ be a tester that verifies the success of $\calS$, so that $\sigma(Af,X\vec(A))=1$ whenever $\calS$ succeeds. 
For any fixed constant $c\ge1$, there exists an implementation of $\calS$ using a random matrix $A'$ that can be generated from a seed of length $\O{(k+t)\log n(\log\log n)^2}$ satisfying
\focsarxiv{
\begin{align*}
&\left\lvert\PPr{\sigma(Af,X\vec(A))=1}-\PPr{\sigma(A'f,X\vec(A'))=1}\right\rvert\\
&<n^{-c(k+t)}.
\end{align*}
}
{
\begin{align*}
\left\lvert\PPr{\sigma(Af,X\vec(A))=1}-\PPr{\sigma(A'f,X\vec(A'))=1}\right\rvert<n^{-c(k+t)}.
\end{align*}
}
Moreover, each entry of $A'$ can be computed in $\polylog(n)$ time using working space proportional to the seed length.
\end{theorem}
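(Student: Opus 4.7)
The plan is to view the combined sketch $(Af, X\vec{A}) \in \mathbb{R}^{k+t}$ as a vector of $\lambda := k+t$ linear forms in the entries of $A$, and then invoke the half-space fooling PRG of \Cref{thm:gkm:prg:half:space} to fool any tester that depends on $A$ only through these forms. First I would flatten $A$ into $\vec{A} \in \{-M,\ldots,M\}^{nk}$, noting that each entry is drawn from an integer distribution of magnitude $\poly(n)$ samplable with $\O{\log n}$ bits. Since $f$ and $X$ are fixed, every coordinate of $Af$ and of $X\vec{A}$ is a fixed $\mathbb{Z}$-linear combination of the entries of $\vec{A}$ with coefficients of magnitude at most $\poly(n,M)$, so the joint sketch consists of exactly $\lambda$ linear forms in $\vec{A}$, each satisfying the $M$-boundedness condition with $M = \poly(n)$.

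Next I would recast $\sigma$ as an $O(\lambda)$-half-space tester, incurring only negligible truncation error. Concretely, since the coordinates of $(Af, X\vec{A})$ live in a discrete grid of size $\poly(n,M)$, one can quantize the real-valued arguments of $\sigma$ to $\polylog(n)$ bits of precision at an $n^{-\Omega(c(k+t))}$ additive cost, and then express $\sigma$ on the quantized grid as a Boolean function of $O(\lambda)$ threshold queries against $\vec{A}$, with the extra $\polylog$ factor for the quantization levels absorbed into the PRG's error parameter. Applying \Cref{thm:gkm:prg:half:space} with error $\eps = n^{-c(k+t)}$ to this $\lambda$-half-space tester then produces a deterministic generator $F:\{0,1\}^\ell \to \{-M,\ldots,M\}^{nk}$ of seed length
\[\ell = \O{\lambda \cdot \log(nM/\eps) \cdot (\log\log(nM/\eps))^2} = \O{(k+t) \log n \, (\log\log n)^2},\]
whose reshaped output $A'$ satisfies the stated closeness inequality, and for which every entry is computable in $\polylog(n)$ time and $\O{\ell}$ working space by the local-computability guarantee already built into \Cref{thm:gkm:prg:half:space}.

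The main obstacle I expect is accommodating the arbitrary success tester $\sigma$ within the structured half-space framework that the GKM PRG is designed to fool. My strategy is the quantization step above, which turns the abstract tester into a composition of $O(\lambda)$ half-space queries on $\vec{A}$ with negligible approximation loss, so that the conclusion of \Cref{thm:gkm:prg:half:space} applies directly. A subtlety I would track is that the linear forms arising from $X\vec{A}$ have coefficients prescribed by the fixed matrix $X$ rather than by any randomness, so they can be folded into the tester's specification without enlarging either the seed budget or the magnitude bound $M = \poly(n)$ required by \Cref{thm:gkm:prg:half:space}, and without inducing new correlations between the $k$ forms from $Af$ and the $t$ forms from $X\vec{A}$.
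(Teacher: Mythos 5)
The paper does not prove this theorem itself; it imports it from \cite{JayaramW18}. Nonetheless your argument, as written, does not establish the stated seed length, and the failure is quantitative rather than cosmetic.

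The decisive problem is your final display. You invoke \Cref{thm:gkm:prg:half:space} with $\lambda = k+t$ half-space queries and error $\eps = n^{-c(k+t)}$. Plugging in, $\log(nM/\eps) = \Theta\bigl((k+t)\log n\bigr)$, so the seed length that theorem gives is
\[
\ell \;=\; \O{\lambda\cdot\log(nM/\eps)\cdot(\log\log(nM/\eps))^2}
\;=\; \O{(k+t)^2\log n\,(\log\log n)^2},
\]
which is a factor of $(k+t)$ larger than what the theorem claims. Your second equality silently drops this factor. This matters downstream: in the paper's application $k+t = \Theta(\log n)$, and the claimed $\tO{\log^2 n}$ space bound would degrade to $\tO{\log^3 n}$ under your accounting. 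A secondary concern is the reduction of $\sigma$ to ``$O(\lambda)$ threshold queries.'' A $\lambda$-half-space tester gives you only $\lambda$ bits, whereas to let an arbitrary $\sigma$ depend on the full (integer) values of $(Af, X\vec{A})$ you need to determine $\lambda$ quantities each of bit-length $\Theta(\log n)$, i.e., $\Theta(\lambda\log n)$ threshold queries. That extra $\log n$ must multiply $\lambda$ in the seed length of \Cref{thm:gkm:prg:half:space}; it cannot be ``absorbed into the error parameter'' as you suggest, and doing so would only worsen $\ell$ further.

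The route that actually achieves $\O{(k+t)\log n(\log\log n)^2}$ at error $n^{-c(k+t)}$ is the Fourier-analytic one, which is exactly what the paper itself uses to derandomize $R_t$ and presumably what \cite{JayaramW18} does here as well. One views the joint sketch $(Af, X\vec{A})\in\mathbb{Z}^{k+t}$ as living on a grid $\Omega$ with $|\Omega| = n^{\O{k+t}}$; for each frequency $\alpha$, the characteristic function $\EEx{A}{\exp\bigl(2\pi i\langle\alpha,(Af,X\vec{A})\rangle\bigr)}$ factors over the independent entries of $\vec{A}$ and is hence a Fourier shape, so \Cref{thm:gkm:prg} fools it; then Plancherel converts Fourier distance $\eps_{\mathrm{FT}}$ to TV distance at cost $\sqrt{|\Omega|} = n^{\O{k+t}}$, so taking $\eps_{\mathrm{FT}} = n^{-\O{k+t}}$ suffices. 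Crucially, \Cref{thm:gkm:prg} has \emph{no} $\lambda$ prefactor, so its seed length is $\O{\log(mn/\eps_{\mathrm{FT}})\,(\log\log\cdot)^2} = \O{(k+t)\log n\,(\log\log n)^2}$, matching the theorem. Your instinct to reduce to a linear-forms-in-$\vec{A}$ problem is the right first step, but routing through half-space fooling rather than Fourier-shape fooling is what costs you the extra $(k+t)$ factor.
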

First, we recall that our algorithm uses two sources of randomness: the Gaussian random variables in the dense $\CountSketch$ and the exponential random variables to form the scalings $\frac{x_i}{e_{i,j}^{2/p}}$. 
We shall use separate instantiations of the GKM PRG to derandomize the generation of the Gaussian random variables and to derandomize the generation of the exponential random variables. 
To that end, let $R_g$ denote the randomness used to generate the Gaussian random variables and let $R_e$ denote the randomness used to generate the exponential random variables, so that both $R_g$ and $R_e$ require $\poly(n)$ bits. 
Now for any fixed $i\in[n]$, our algorithm can be viewed as a tester $\calA_i(R_g,R_e)\in\{0,1\}$ for whether the $L_p$ sampler will output the index $i$ as the sampled coordinate. 
Our goal is to show that there exist two instantiations $F_1(\cdot)$ and $F_2(\cdot)$ of the GKM PRG so that
\focsarxiv{
\begin{align*}
&\left\lvert\PPPr{R_g,R_e}{\calA_i(R_g,R_e)=1}- \PPPr{x,y}{\calA_i(F_1(y_1),F_2(y_2))=1}\right\rvert\\
&\le \frac{1}{n^C}.
\end{align*}
}
{
\[\left\lvert\PPPr{R_g,R_e}{\calA_i(R_g,R_e)=1}-\PPPr{x,y}{\calA_i(F_1(y_1),F_2(y_2))=1}\right\rvert\le\frac{1}{n^C},\]
}
where $C$ is an arbitrarily large constant and $y_1$ and $y_2$ are seeds of length $\O{\log^2 n}$ that are sufficiently large for the choice of $C$. 

\begin{theorem}
\algref{alg:perfect:lp:sample} can be derandomized to use $\O{\log^2 n(\log\log n)^2}$ bits of space for $p<2$. 
Moreover, the derandomized algorithm uses $\polylog(n)$ update time per arriving element in expectation. 
\end{theorem}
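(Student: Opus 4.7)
\medskip\noindent
\textbf{Proof proposal.} The plan is to derandomize the two independent sources of randomness separately and then combine them via a hybrid argument. Write $\calA_i(R_g, R_e) \in \{0,1\}$ for the tester indicating whether the algorithm outputs index $i$, where $R_g$ is the randomness feeding the dense $\CountSketch$ Gaussians and $R_e$ is the randomness feeding the exponential scalings (including both the top-$\tau$ order statistics and the tail simulation via the CDF oracle from \Cref{thm:calculation}). By standard discretization to $\O{\log n}$ bits per random variable, both $R_g$ and $R_e$ use $\poly(n)$ bits in total, and the distortion from truncation is absorbed into the adversarial error $\calV_i$ already analyzed in the statistical test.

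First I would derandomize $R_g$. The entire behavior of the dense $\CountSketch$ --- the sketch $GV$ of the implicit duplicated vector together with the estimators $\frac{1}{k}\langle G_i, GV\rangle$ and the $\AMS$ estimate $Z$ --- is a function of the product $A f$ and $X \cdot \vec{A}$ described in \Cref{thm:jw:prg}, where $A$ plays the role of the Gaussian matrix and $X$ records the fixed coordinate structure. Applying \Cref{thm:jw:prg} with a sufficiently large constant $c$ gives a PRG $F_1$ with seed length $\O{\log n (\log\log n)^2}$ fooling the success predicate of the statistical test to within $n^{-C}$, and each entry of the pseudorandom Gaussian matrix is computable in $\polylog(n)$ time and space proportional to the seed length.

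Next I would derandomize $R_e$. Conditioned on the pseudorandom $R_g$, the event $\calA_i(R_g, \cdot) = 1$ reduces to a question about the joint behavior of the top-$\tau$ order statistics $\calH_j$ and the tail norms $\widetilde{\sigma_j^2}$ across $j \in [n]$. Each order statistic is obtained via a fixed inverse-CDF transformation from a block of $\O{\log n}$ uniform bits, and each tail sample is obtained by the binary-search procedure on the CDF $F(t)$ from \Cref{thm:calculation}, which is itself driven by a short block of uniform bits that is processed through a fixed deterministic function. Hence the whole selection event can be written as a $\poly(n)$-sized collection of half-space queries on the input random bits (thresholds on weighted sums of the stored coordinates), and thus fits the hypothesis of the half-space fooling PRG in \Cref{thm:gkm:prg:half:space}. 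Instantiating that theorem with $\lambda = \polylog(n)$ tester queries and $M = \poly(n)$ yields $F_2$ with seed length $\O{\log n (\log \log n)^2}$ such that the $\poly(n)$-bit $R_e$ can be replaced by $F_2(y_2)$ with error at most $n^{-C}$. A two-step hybrid, first swapping $R_g$ for $F_1(y_1)$ and then $R_e$ for $F_2(y_2)$, gives the stated total variation bound.

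The space accounting is then immediate: the dense $\CountSketch$ uses $\O{r \log^2 n} = \O{\log^3 n}$ bits, but once derandomized it only needs the two seeds of length $\O{\log n (\log\log n)^2}$ plus the $\O{r} = \O{\log n}$ sketch cells of $\O{\log n}$ bits each, giving $\tilde{\calO}(\log^2 n)$ bits overall. For the update time, on each stream update to coordinate $s_t$ we read out the $\O{\log n}$ pseudorandom Gaussian entries (each $\polylog(n)$ time by \Cref{thm:jw:prg}), we update the $\tau = \O{\log n}$ heavy terms in $\calH_{s_t}$ directly, and we refresh $\widetilde{\sigma_{s_t}^2}$ using the sampling oracle of \Cref{thm:calculation}, which by the $R, 1/R = \polylog(n)$ bounds established earlier runs in $\polylog(n)$ time in expectation. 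The main obstacle I expect is verifying cleanly that the tester $\calA_i$, including the CDF-inversion-based tail sampler, really does decompose into $\polylog(n)$ $M$-bounded half-space queries on the PRG output bits; once that is granted, \Cref{thm:gkm:prg:half:space} and \Cref{thm:jw:prg} do the rest of the work.
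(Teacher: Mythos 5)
Your top-level decomposition into $R_g$ (Gaussian) and $R_e$ (exponential) randomness, and the idea of a two-step hybrid, match the paper's structure, and the use of \Cref{thm:jw:prg} to derandomize the dense $\CountSketch$ against half-space testers for fixed $R_e$ is exactly right. However, two issues need fixing, one bookkeeping and one substantive.

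On the bookkeeping: your seed-length claims of $\O{\log n(\log\log n)^2}$ are too small. \Cref{thm:jw:prg} gives seed length $\O{(k+t)\log n(\log\log n)^2}$ and here $k=\O{\log n}$, so the seed for $F_1$ is $\O{\log^2 n(\log\log n)^2}$; similarly \Cref{thm:gkm:prg:half:space} carries a $\lambda$ factor with $\lambda = \O{\log n}$, giving $\O{\log^2 n(\log\log n)^2}$ as well. This is consistent with the $\O{\log^2 n(\log\log n)^2}$ total in the theorem statement, so your space tally needs to be corrected accordingly.

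On the substance: the obstacle you flag at the end is genuine and your proposed route through \Cref{thm:gkm:prg:half:space} does not close it. The tail sample $\widetilde{\sigma_i^2}$ is produced by binary-searching the CDF of \Cref{thm:calculation} against a uniform draw, and is therefore a highly nonlinear function of its driving bits; the event ``$\calA_i = 1$'' is not naturally a bounded collection of half-space queries on $R_t$. The paper avoids this by splitting $R_e$ further into head randomness $R_h$ (top-$\tau$ order statistics) and tail randomness $R_t$. Only $R_h$ is handled via half-space testers, mirroring the $R_g$ argument. For $R_t$ the paper uses a different mechanism: it fixes $G'=F_1(y_1)$, considers the distribution of the sketch vector $Y(R_t)=G'z(R_t)$, and observes that its characteristic function $\Phi_Y(\alpha)=\E[\prod_j \exp(2\pi i\, C_{\alpha,j} z_j(r_j))]$ factors as a product of one-dimensional complex-valued functions of the independent blocks $r_j$, i.e.\ it is a Fourier shape. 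Applying \Cref{thm:gkm:prg} with Fourier error $\eps_{\mathrm{FT}} = n^{-\O{\log n}}$ (chosen so that, via Plancherel and the $n^{\O{\log n}}$ support size of the discretized sketch, the total variation distance is $1/\poly(n)$) again gives seed length $\O{\log^2 n(\log\log n)^2}$. Without this Fourier-shape step, the derandomization of the tail sampler is a gap in your argument.
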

\begin{proof}
Our goal is to derandomize the generation of $\poly(n)$ Gaussian random variables in the dense $\CountSketch$ data structure and $\poly(n)$ exponential random variables, including the head and the tail of the order statistics by the simulation oracle. 
For any fixed randomness $R_e$ used to generate the exponential random variables (including the tail statistics which are then simulated by generating separate Gaussian random variables) and any fixed $i\in[n]$, let $\calA_{i,R_e}(R_g)$ be the tester that determines whether our $L_p$ sampler selects index $i$. 
Here, $R_e$ is hard-wired into the input while $R_g$ is provided is provided as a stream of randomness. 
We show that $\calA_{i,R_e}(R_g)$ can be fooled by using a seed of length $\O{\log^2 n(\log\log n)^2}$ for $p<2$. 

Because $R_e$ is hard-wired into the input, then we can write the linear sketch as $GZx=Gz$, where $Z\in\mathbb{R}^{n^{c+1}\times n}$ consists of the scalings due to the exponential random variables and $z=Zx$, which we can compute since $R_e$ is hard-coded into the input. 
Note that we cannot immediately argue about linear threshold functions because the estimator considers $\langle G_i,Gz\rangle$, which is a quadratic threshold function. 
Instead, we consider fooling a sufficient number of half-space queries as follows. 
Observe that $G\in\mathbb{R}^{k\times n^{c+1}}$ for $k=\O{\log n}$ and has i.i.d. entries. 
Due to the probability density function of normal random variables, we first condition on the high-probability event that all sampled Gaussians have magnitude at most $n^{\O{c}}$. 
Due to the truncation of the sampled Gaussians, it follows that each random Gaussian can be sampled using $\O{\log n}$ bits of randomness and the entire algorithm only requires $\poly(n)$ random bits to be generated for the Gaussians. 
It suffices to test whether $Gz=u$ across only the vectors $u\in\mathbb{R}^{\O{\log n}}$ with integer entries bounded by $\O{\log n}$ bit complexity, since $G$ has $\O{\log n}$ rows. 
This corresponds to a set of $\O{\log n}$ half-space testers and more generally, we argue that we can fool $\O{n^{c+1}}$ sets of $\O{\log n}$ half-space testers. 
By applying \Cref{thm:gkm:prg:half:space}, we can use the GKM PRG to construct a deterministic algorithm $F_1$ that fools any efficient tester that outputs a function of $Gz$ and the vectorization of $G$, using a seed $y_1$ of length $\O{\log^2 n(\log\log n)^2}$ for $p<2$. 
Now, since we can fool $\PPr{Gz=u}$, we can also fool any tester which takes as input $u=Gz$ and outputs whether or not on input $u$, the algorithm would output $i\in[n^{c+1}]$, as in \Cref{thm:jw:prg}. 
Specifically, fix any $i\in[n^{c+1}]$, so that the output of dense $\CountSketch$ requires exactly both the column $G_i$ and the value of $Gz$. 
The column $G_i$ has $\O{\log n}$ entries, each of which can be expressed in $\O{\log n}$ bits, and thus can be written in terms of $X\cdot\vec{G}$. 
Crucially, the seed length required to fool such a tester depends on the sketch size rather than the length of the input vector. 
Therefore, the tester $\calA_{i,R_e}(R_g)$ can be fooled using a seed of the same length:
\focsarxiv{
\begin{align*}
\PPr{\calA_{i}(R_g,R_e)=1}\\
&\hspace{-9em}= \sum_{R_e}\calA_{i,R_e}(R_g)\cdot\PPr{R_e}\\
&\hspace{-9em}= \sum_{R_e}\Bigl(\PPr{\calA_{i,R_e}(F_1(y_1))=1} \pm \frac{1}{n^{\O{\log n}}}\Bigr)\cdot \PPr{R_e}\\
&\hspace{-9em}= \sum_{R_e}\PPr{\calA_{i,R_e}(F_1(y_1))=1}\cdot \PPr{R_e} \pm \frac{1}{n^{\O{\log n}}}\\
&\hspace{-9em}= \PPr{\calA_i(F_1(y_1),R_e)=1} \pm \frac{1}{n^{\O{\log n}}}.
\end{align*}
}
{
\begin{align*}
\PPr{\calA_{i}(R_g,R_e)=1}&=\sum_{R_e}\calA_{i,R_e}(R_g)\cdot\PPr{R_e}\\
&=\sum_{R_e}\left(\PPr{\calA_{i,R_e}(F_1(y_1))=1}\pm\frac{1}{n^{\O{\log n}}}\right)\cdot\PPr{R_e}\\
&=\sum_{R_e}\PPr{\calA_{i,R_e}(F_1(y_1))=1}\cdot\PPr{R_e}\pm\frac{1}{n^{\O{\log n}}}\\
&=\PPr{\calA_i(F_1(y_1),R_e)=1}\pm\frac{1}{n^{\O{\log n}}}.
\end{align*}
}
Finally to account for all indices $i\in[n^{c+1}]$ and columns $G_i$, we can union bound over all such $\O{n^{c+1}}$ sets of testers and show that the total variation distance remains $\frac{1}{n^{\O{\log n}}}$. 

Now, consider any fixed seed $F_1(y_1)$ and corresponding tester $\calB_{i,F_1(y_1)}(R_e)$. 
Note that this determines the sketch matrix $G' = F_1(y_1)$ and so it suffices to consider the residual tester $\calB_i(R_e) := A_i(G', R_e)$. 
We further decompose the randomness $R_e$ for generating the exponential random variables into the randomness $R_h$ for generating the head $\{\calH_i\}$ of the order statistics and the randomness $R_t$ for generating the tail $\{N_{i,j}\}$ of the order statistics. 
Because we truncate each of the coordinates to $\frac{1}{\poly(n)}$ granularity, it suffices to only focus on the truncated exponential random variables on a set with support $\poly(n)$. 
Indeed, due to the probability density function of exponential random variables, we can condition on the high-probability event that all sampled exponential random variables have magnitude at most $n^{\O{c}}$ and at least $\frac{1}{n^{\O{c}}}$, which only distorts the output distribution of the $L_p$ sampler by an additive $\frac{1}{n^{\O{c}}}$. 
Moreover, the largest $\tau$ values in the set $\calH_i$ correspond to the smallest $\tau$ values, which can then be generated by using $\tau$ independent exponential random variables due to the order statistics in \Cref{lem:hidden:exps}. 
Hence, the head statistics correspond to $\O{\log n}$ values that can each be generated using $\O{\log n}$ bits. 
Moreover, conditioned on the randomness $R_g$ and $R_t$, the output of the dense $\CountSketch$ and the statistical test can be determined from running $\O{\log n}$ half-space tests, using an argument similar to the proof of Theorem 7 in \cite{JayaramW18} and thus we can apply the GKM PRG to derandomize $R_e$ using a seed of length $\O{\log^2 n(\log\log n)^2}$. 

It remains to derandomize $R_t$. 
We apply another hybrid argument to hard-code $R_h$ and thus condition on the largest $\tau$ values. 
We use $\O{\log n}$ bits of randomness to simulate the sum of the remaining exponential scalings by generating each $N_{i,j}$. 
In particular, given the value of the largest $\tau$ order statistics, the simulation oracle will perform a fixed number of deterministic calculations to sample the tail statistics from the CDF through a binary search. 
The random sampling step is implemented by drawing a random variable $U$ uniformly at random from $(0,1)$ and locating the point whose CDF value equals or exceeds $U$, thereby determining the appropriate tail statistic. 
In summary, the simulation oracle determines the remaining tail statistic by performing an inverse transform, which, given the fixed CDF and the largest $\tau$ order statistics, can be deterministically implemented using a fixed number of binary search steps corresponding to half-space queries on the CDF's domain. 
Afterwards, the residual tester $\calB_i(R_t)$ computes the sketch vector $Y(R_t) = G'\cdot z(R_t)$, where $z(R_t)$ is the vector of simulated variables with dimension $n' = \O{n\log n}$. 
The tester subsequently applies the statistical test $T_{\mathrm{stat}}$ in \Cref{fig:stat:test} and checks whether the output is some value $i\in[n]$ or FAIL. 

Let $P_Y$ denote the distribution of $Y(R_t)$ when $R_t$ is drawn uniformly at random. 
We seek a pseudorandom generator $F_2$ that produces $R'_t = F_2(y_2)$ for a seed $y_2$, such that the induced distribution $P'_Y$ of $Y(R'_t)$ is close to $P_Y$, i.e.,  
\focsarxiv{
\begin{align*}
\left|\EEx{R_t}{\calB_i(R_t)} - \EEx{y_2}{\calB_i(F_2(y_2))}\right| &\le 2\, d_{\mathrm{TV}}(P_Y, P'_Y)\\
&\le\frac{1}{\poly(n)}
\end{align*}
}
{
\[\left|\EEx{R_t}{\calB_i(R_t)} - \EEx{y_2}{\calB_i(F_2(y_2))}\right| \le 2\, d_{\mathrm{TV}}(P_Y, P'_Y)\le\frac{1}{\poly(n)}.\]
}
Since the vector $Y$ has $R = \O{\log n}$ coordinates, each discretized to $L=\O{\log n}$ bits, then the support size satisfies
\[N_{\mathrm{supp}} \le (2^L)^R = (n^{\O{C'}})^{\O{\log n}} = n^{\O{\log n}}.\]
For distributions $P,Q$ on a finite domain $\Omega$, the standard relationship between total variation and Fourier distance $d_{\mathrm{FT}}$ is
\[d_{\mathrm{TV}}(P,Q) \le \frac{1}{2}\sqrt{|\Omega|}\,\|P-Q\|_2.\]
By Plancherel’s theorem, the $\ell_2$-norm difference satisfies $\|P-Q\|_2^2 \le (d_{\mathrm{FT}}(P,Q))^2$, so that
\[d_{\mathrm{TV}}(P,Q) \le \frac{1}{2}\sqrt{N_{\mathrm{supp}}}\, d_{\mathrm{FT}}(P,Q).\]
To guarantee $d_{\mathrm{TV}}(P_Y, P'_Y) \le\frac{1}{\poly(n)}$, it suffices that
\[
\eps_{\mathrm{FT}} := \frac{n^{-\O{1}}}{\sqrt{N_{\mathrm{supp}}}} = \frac{n^{-\O{1}}}{n^{\O{\log n}}} = n^{-\O{\log n}}.
\]
We now analyze the characteristic function of $Y(R_t)$. 
For $\alpha \in \mathbb{R}^R$,
\begin{align*}
\Phi_Y(\alpha) &= \EEx{R_t}{\exp(2\pi i\langle \alpha, Y(R_t)\rangle)} \\
&= \EEx{R_t}{\exp(2\pi i\langle (G')^\top\alpha, z(R_t)\rangle)}.
\end{align*}
Let $C_\alpha = (G')^\top\alpha$. 
Since $R_t = (r_1,\ldots,r_{n'})$ contains independent randomness for each coordinate of $z$, we obtain
\[\Phi_Y(\alpha) = \EEx{R_t}{\prod_{j=1}^{n'} \exp(2\pi i\, C_{\alpha,j}\, z_j(r_j))}.\]
The inner function is a Fourier shape of the form $F_\alpha(R_t) = \prod_{j=1}^{n'} f_j(r_j)$, where each $f_j(r_j)$ maps into the complex unit disk.
We apply the GKM PRG to fool this Fourier shape using parameters
\focsarxiv{
\[N = n' = \O{n\log n}, \quad M = 2^L = \polylog(n)\]
\[\eps = \eps_{\mathrm{FT}} = n^{-\O{\log n}}.\]
}
{
\[N = n' = \O{n\log n}, \quad M = 2^L = \polylog(n), \quad \eps = \eps_{\mathrm{FT}} = n^{-\O{\log n}}.\]
}
By \Cref{thm:gkm:prg}, the corresponding seed length is $r_2 = \O{\log\frac{NM}{\eps}\left(\log\log\frac{NM}{\eps}\right)^2}$. 
Since
\focsarxiv{
\begin{align*}
\log(NM/\eps_{\mathrm{FT}}) &= \O{\log n} + \O{\log n} + \O{\log^2 n} \\
&= \O{\log^2 n},
\end{align*}
}
{
\[\log(NM/\eps_{\mathrm{FT}}) = \O{\log n} + \O{\log n} + \O{\log^2 n} = \O{\log^2 n},\]
}
we have $r_2 = \O{\log^2 n(\log\log n)^2}$. 
This pseudorandom generator $F_2$ ensures $d_{\mathrm{FT}}(P_Y, P'_Y) \le \eps_{\mathrm{FT}}$, and therefore $d_{\mathrm{TV}}(P_Y, P'_Y) \le\frac{1}{\poly(n)}$. 
Consequently,
\focsarxiv{
\begin{align*}
&\left|\EEx{y_1,R_t}{A_i(F_1(y_1),R_t)} - \EEx{y_1,y_2}{A_i(F_1(y_1),F_2(y_2))}\right| \\
&\le\frac{1}{\poly(n)},
\end{align*}
}
{
\[\left|\EEx{y_1,R_t}{A_i(F_1(y_1),R_t)} - \EEx{y_1,y_2}{A_i(F_1(y_1),F_2(y_2))}\right| \le\frac{1}{\poly(n)},\]
}
as desired. 
\end{proof} 

\def\shortbib{0}
\bibliographystyle{alpha}
\bibliography{references}
\end{document}